%% file: main.tex
\documentclass[11pt,letterpaper]{article}

\input{preamble}

\title{Moment Ambiguity and the Limits of Robust Stochastic Optimization}

\author{Andr\'es Cristi\thanks{EPFL, Switzerland,
  \textsf{\{andres.cristi,matteo.russo,jiechen.zhang\}@epfl.ch}.}%
  \and
  Matteo Russo\footnotemark[1]%
  \and
  Jiechen Zhang\footnotemark[1]%
}

\date{}

\begin{document}

\maketitle

\begin{abstract}
We study fundamental information-theoretic limits of robust stochastic optimization when the distribution is known only through its exact moment sequence. We develop a unified framework that produces families of distinct distributions sharing all moments yet inducing radically different optimal decisions, thereby establishing strong impossibility results for a range of decision problems under moment ambiguity.

Our approach gives two explicit constructions: a binary and an $N$-way construction showing that distributions with identical moment sequences can nevertheless exhibit arbitrarily different quantiles, order-statistics and threshold regions, forcing incompatible optimal actions. These families of distributions yield, in fact, strong impossibility results across several stochastic optimization problems. First, for the newsvendor problem, moment equivalence causes quantile ambiguity, inducing any fixed or randomized order quantity to fail arbitrarily badly. Second, for revenue maximization, no deterministic or randomized posted pricing scheme can secure a nontrivial approximation relative to the full-information benchmark. Third, for the secretary with cardinal observations setting, the worst-case robust value over all exact moment disclosures is exactly the classical $1/e$ finite-horizon value as opposed to the celebrated result of $0.58$ success probability with full-information by Gilbert and Mosteller (J. Am. Stat. Assoc., 1966).

We also recover and expand upon the recent impossibility result of Correa et al. (STOC, 2026) for prophet inequalities with moment knowledge. Indeed, exact moment knowledge can yield at best a $\Theta(1/\log n)$ competitive ratio, even when competing against relaxed benchmarks based on expected $r$-th order statistics or when the algorithm is allowed to select $r$ items.
\end{abstract}

\setcounter{tocdepth}{2}
\tableofcontents

\bigskip

\input{sections/introduction}
\input{sections/preliminaries}
\input{sections/impossibility}

\input{sections/quantile}
\input{sections/revenue}
\input{sections/secretary}
\input{sections/prophet}
\input{sections/proofs}
\input{sections/discussion}

\bibliographystyle{alpha}
\bibliography{ref}

\end{document}

%% file: preamble.tex
\usepackage[utf8]{inputenc}
\usepackage[T1]{fontenc}
\usepackage{microtype}
\usepackage{amsmath}
\usepackage{mleftright}
\usepackage{amsthm}
\usepackage{natbib}
\usepackage[hypertexnames=false]{hyperref}
\usepackage{enumitem}
\usepackage{thm-restate}
\usepackage[margin=1in]{geometry}
\makeatletter
\@ifclassloaded{acmart}{}{\usepackage{amssymb}}
\makeatother
\hypersetup{
	colorlinks,
	linkcolor={red!50!black},
	citecolor={blue!50!black},
	urlcolor={blue!80!black}
}
\usepackage{graphicx}
\usepackage{cleveref}
\usepackage{mathtools}
\usepackage{tabularx}  
\usepackage{xspace}
\usepackage[noend]{algpseudocode}
\usepackage{algorithm}
\usepackage{tikz}
\usetikzlibrary{patterns}
\usepackage{booktabs}

\usetikzlibrary{arrows.meta,positioning,calc,matrix,fit,decorations.pathreplacing,shapes.geometric,backgrounds}
\usepackage{xcolor}
\usepackage{pgfplots}
\usetikzlibrary{arrows.meta}
\pgfplotsset{compat=1.18}

\newtheorem{theorem}{Theorem}[section]

\newtheorem{lemma}{Lemma}[section]
\newtheorem{remark}{Remark}[section]
\newtheorem{corollary}{Corollary}[section]
\newtheorem{proposition}{Proposition}[section]
\newtheorem{definition}{Definition}[section]

\definecolor{figblue}{HTML}{0072B2}
\definecolor{figorange}{HTML}{D55E00}
\definecolor{figgreen}{HTML}{009E73}
\definecolor{figpurple}{HTML}{CC79A7}
\definecolor{figgray}{HTML}{666666}
\tikzset{
  paperfig/.style={font=\footnotesize,>=Latex},
  figaxis/.style={->,draw=black!70,line width=.65pt},
  figguide/.style={draw=black!25,line width=.4pt},
  figarrow/.style={->,draw=black!65,line width=.7pt},
  figbox/.style={
    draw=black!55,
    fill=white,
    rounded corners=1.5pt,
    inner xsep=5pt,
    inner ysep=3pt,
    align=center
  },
  nustyle/.style={draw=figblue,line width=.9pt},
  mustyle/.style={draw=figorange,line width=.9pt,densely dashed},
  figselected/.style={draw=figpurple,line width=1.1pt,densely dashed}
}

\newcommand{\E}{\mathbb{E}}
\newcommand{\R}{\mathbb{R}}

\newcommand{\C}{\mathbb{C}}

\def\N{\mathbb{N}}

\renewcommand{\P}{\mathbb{P}}

\newcommand{\Profit}{\mathrm{Profit}}

\newcommand{\Rev}{\mathrm{Rev}}

\newcommand{\VaR}{\operatorname{VaR}}
\newcommand{\LCVaR}{\underline{\operatorname{CVaR}}}

%% file: sections/introduction.tex
\section{Introduction}\label{sec:introduction}

Many models in online decision making, resource allocation, inventory control, and mechanism design assume that the relevant distributions are known exactly. This assumption underlies classical prophet inequalities \cite{KrengelSucheston1978,SamuelCahn1984}, newsvendor prescriptions \cite{edgeworth1888mathematical,arrow1951optimal}, and Bayesian posted pricing \cite{Myerson81}. In applications, however, a decision maker may only see summaries of a distribution: means, variances, higher moments, or other aggregate descriptors. Moment ambiguity sets are therefore a central object in distributionally robust optimization and robust mechanism design \cite{DelageY10,WiesemannKS14,rahimian2019distributionally,che2022robustly,CarrascoEtAl2018}.

We ask how well a decision maker can perform when given every moment of the relevant distributions. The benchmark is the decision that could be made with full distributional knowledge. Even this complete sequence of exact summaries can leave substantial ambiguity: two compatible distributions may have very different quantiles or useful price thresholds, and their online selection problems may offer different opportunities to stop.

To obtain lower bounds, we construct distributions that share their entire moment sequence while controlling the atoms and tails that matter to each objective. Any policy receives the same ex ante moment information on these instances. The challenge is then to show that a common decision rule cannot exploit their different opportunities, even when it can randomize or observe values online.

\subsection{Our Contributions and Technical Summary}

We construct families of moment-equivalent distributions with explicit bounds on their atoms and tails. A binary pair supplies the quantile and order-statistic separations; an $N$-way family supplies many alternative thresholds and designated values for pricing and secretary selection.

\paragraph{Key Constructions.}
We use two constructions.
\begin{enumerate}[label=(\Alph*),leftmargin=*]
\item \underline{\emph{Binary separator} (\textbf{\Cref{thm:binary-separator}}).} We construct two distributions that agree on all moments and have disjoint supports by splitting a moment-annihilating sequence into its positive and negative parts. The construction is guided by Vandermonde null vectors: in finite dimension, the signs alternate (\Cref{lem:finite-binary-vandermonde}); in the infinite construction, rapidly separated nodes make the limiting weights summable and keep all moments finite. Controlled atoms and tail decay then yield the quantile and order-statistic separations in \Cref{cor:quantile-separation,cor:order-stat-separator}.
\item \underline{\emph{$N$-way residue-class separator} (\textbf{\Cref{thm:nway-separator}}).} We construct $N$ moment-equivalent distributions supported on disjoint residue classes. Roots of unity force all residue-class moment sums to agree, while the coefficient estimates in \Cref{thm:nway-separator} provide the local mass control used for randomized pricing and yield the first-atom concentration in \Cref{cor:first-atom-concentration} used by the secretary embedding. This $N$-way structure creates many incompatible decision regions under one common moment disclosure.

Our $N$-way construction gives a one-sided realization with a common nonnegative coefficient sequence and explicit bounds on its full profile.

\end{enumerate}

The disjoint supports also imply that distinct distributions in either construction have total-variation distance $1$.

\paragraph{Main Results.}
These constructions give the following lower bounds.
\begin{enumerate}[label=(\arabic*),leftmargin=*]
\item \underline{\emph{Quantiles, risk, and newsvendor} (\textbf{\Cref{cor:quantile-separation} and \Cref{thm:newsvendor}}).} Exact moment information does not determine quantiles even approximately. This yields separations for Value-at-Risk, lower-tail average quantiles, chance-constrained capacity, and the newsvendor problem, including randomized order quantities; the risk and feasibility consequences are stated in \Cref{cor:risk-quantile-separation,cor:chance-capacity}.
\item \underline{\emph{Posted pricing} (\textbf{\Cref{thm:revenue-deterministic,thm:randomized-pricing-all-moments}}).} The binary separator rules out any positive uniform approximation factor for deterministic prices, and the $N$-way separator does the same for randomized prices. For the latter, disjoint good-price intervals allow a many-scale averaging argument inside one full moment class; we discuss its relation to earlier pricing lower bounds in \Cref{sec:revenue}.
\item \underline{\emph{Secretary success} (\textbf{\Cref{thm:full-secretary-success,cor:secretary-asymptotic}}).} In the full cardinal-value random-order secretary problem, the optimal success probability, robust over all exact moment disclosures and compatible independent distributions, is exactly the classical finite-horizon value and therefore converges to $1/e$. Thus cardinal observations cannot recover the stronger guarantee available when the distributions themselves are known. The upper bound combines first-atom concentration with a finite Ramsey erasure argument (\Cref{lem:ramsey-erasure}) that removes the informational advantage of cardinal magnitudes on a hard subgrid.
\end{enumerate}

\paragraph{Order-statistic prophets and relation to prior constructions.}
The binary separator also gives a self-contained upper-bound proof for the fixed-$r$ order-statistic prophet problem and its $r$-selection variant (\Cref{thm:prophet,cor:prophet-r-selection}). Together with the lower bound of Correa et al.~\cite{CorreaCristiLivanosVerdugoZhang26}, this yields the logarithmic barrier. Their moment-equivalent hard instances use a related Vandermonde mechanism and also admit an order-statistic extension. Here the atom and tail estimates make the upper bound a direct application of the same pair used for quantiles and newsvendor.

\subsection{Related Work}

\paragraph{The moment problem and Stieltjes classes.}

The classical moment problem asks when a measure is determined by its moment sequence \cite{stieltjes1894recherches,Hausdorff21-I,Hausdorff21-II,ShohatT43,Sodin19}. Chihara and Leipnik constructed discrete laws on bilateral geometric orbits that share the lognormal moment sequence \cite{Chihara70,Leipnik82,AGK19}. Suitable choices give disjoint atom sets and can also supply the designated-atom concentration and price separation used in our $N$-way applications. Our construction realizes these properties on a one-sided grid with a common nonnegative coefficient sequence and explicit bounds. Other discrete Stieltjes classes include the log-Heine constructions of Ostrovska and Turan \cite{OstrovskaTuran2018}.

\paragraph{Distributionally robust optimization.}
Distributionally robust optimization studies worst-case decisions over ambiguity sets specified by support, moments, or probability metrics \cite{scarf1957min,DelageY10,GohS10,WiesemannKS14,rahimian2019distributionally}. We compare a moment-based policy with the distribution-specific benchmark. Our lower bounds show how much this comparison can cost even when the ambiguity set fixes the full moment sequence: quantile, newsvendor, and pricing guarantees can deteriorate arbitrarily, while the online selection objectives exhibit the barriers stated above.

\paragraph{Prophet inequalities and partial information.}
Classical prophet inequalities assume known distributions \cite{KrengelSucheston1978,SamuelCahn1984}; later work studies richer feasibility constraints and posted-price interpretations \cite{HajiaghayiKleinbergSandholm2007,KleinbergWeinberg2012,RubinsteinSingla2017,LeeSingla2018,DuttingEtAl2020,FeldmanSZ21}. Under partial information, sample access and inaccurate-prior models can still yield strong guarantees \cite{CorreaDuttingFischerSchewior2022,DuttingKesselheim2019}. Correa et al.~\cite{CorreaCristiLivanosVerdugoZhang26} prove the tight $\Theta(1/\log m)$ barrier for the expected-maximum benchmark under moment information. Although their theorem is stated for the maximum, their hard instances can also be adapted to every fixed order statistic because each active block contains many copies of its designated value. \Cref{sec:prophet} gives an alternative, modular upper-bound derivation from our binary separator.

\paragraph{Newsvendor and pricing.}
The classical newsvendor solution is determined by a critical quantile
\cite{edgeworth1888mathematical,arrow1951optimal}, while posted-price revenue is determined by threshold
tail probabilities \cite{Myerson81}.  Prior work studies sample-based
newsvendor policies \cite{levi2015data}, moment-based robust newsvendor
prescriptions \cite{scarf1957min,DasDharaNatarajan21}, and robust pricing or mechanism design
under coarse information \cite{BabaioffBDS17,CarrascoEtAl2018,GiannakopoulosPocasTsigonias2022}.  The geometric
interval lower bound of Babaioff et al.~\cite{BabaioffBDS17} and the
continuum-of-scales, equal-revenue construction of Giannakopoulos,
Poças, and Tsigonias-Dimitriadis \cite{GiannakopoulosPocasTsigonias2022} are especially close to
the final averaging argument in our randomized-pricing lower bound.
At a more abstract minimax level, Hartline and Roughgarden use an
equal-revenue distribution over value scales to certify the optimal
logarithmic price lottery \cite{HartlineR14}, and Hartline and
Johnsen later formulate the corresponding point-mass/equal-revenue
decomposition through equivocal blends \cite{HartlineJ24}.
Our results isolate a different information-theoretic limitation:
distributions can require incompatible quantiles and threshold regions
even when they agree on their entire moment sequence.

\paragraph{Secretary and best-choice selection.}
For i.i.d.\ observations from a known continuous distribution, the full-information best-choice problem has asymptotic value of about $0.5801$ \cite{GilbertMosteller1966}. Esfandiari et al.~\cite{EsfandiariEtAl2020} study the random-order model with known, independent, non-identically distributed values, and Nuti \cite{Nuti2022} shows that its worst-case value coincides with the i.i.d.\ full-information benchmark. Our secretary result gives a contrasting information boundary: replacing the distributions by their exact moment sequences reduces the robust value to the classical secretary probability.

\subsection{Organization}

The remainder of the paper is organized as follows. \Cref{sec:preliminaries} fixes notation, moment classes, and the moment-based policy model. \Cref{sec:impossibility} develops the two core moment-matching constructions; their quantitative consequences are stated alongside the applications that use them. We treat quantile ambiguity and newsvendor in \Cref{sec:quantile}, posted-price revenue in \Cref{sec:revenue}, secretary success probability in \Cref{sec:secretary}, and order-statistic prophet inequalities in \Cref{sec:prophet}. \Cref{sec:proofs} supplies the deferred technical proofs, and \Cref{sec:discussion} closes with limitations and open directions.

%% file: sections/preliminaries.tex
\section{Preliminaries}\label{sec:preliminaries}
We start by recalling some useful notation defined for a distribution $\psi$ on $\R_+=[0,\infty)$: namely, let $F_\psi(x):=\P_\psi(X\le x)$ denote its CDF and let $F_\psi(x^-):=\P_\psi(X<x)$ be its ``open at $x$'' counterpart. Define the quantile function
\[
Q_\psi(u):=\inf\{x\ge 0:\ F_\psi(x)\ge u\},
\qquad u\in(0,1).
\]
Next, for independent random variables $X_1,\dots,X_m$, we write
$X_{(1)}\ge X_{(2)}\ge\cdots\ge X_{(m)}$ for the decreasing order statistics. 

We say that a measure is supported on a set when it assigns that set full measure. For discrete laws, we specify the sets of atoms; these need not be closed.

\subsection{Background on Moment Ambiguity}

The $k$-th moment of a distribution $\psi$ is
\[
m_k(\psi):=\E_\psi[X^k],
\]
whenever finite. Distributions agree on all moments if all these finite moments exist and agree for every $k\in\N_0$.

\begin{definition}\label{def:stieltjes-type-class}
For a moment sequence $m=(m_k)_{k\ge0}$, let
\[
\mathcal M(m):=\{\psi:\ \E_\psi[X^k]=m_k\text{ for every }k\ge0\}.
\]
We call a subfamily $\mathcal S\subseteq \mathcal M(m)$ a \emph{Stieltjes-type class} if $\mathcal S$ contains more than one distribution. In this paper we focus on structured Stieltjes-type subclasses: subfamilies whose supports, atoms, and threshold regions satisfy additional quantitative separation properties.
\end{definition}

\begin{definition}\label{def:moment-based-policy}
A policy is \emph{moment-based} if its rule is measurable with respect to the disclosed moment sequences and the observations revealed online. In particular, if two input distributions have identical moment sequences, the policy receives identical ex ante information under them.
\end{definition}

\subsection{Background on Mechanism Design and Stochastic Optimization Problems}\label{sec:defn-problems}

\paragraph{Value-at-Risk.}
Value-at-Risk is the quantile $\VaR_u(X):=Q_X(u)$. We use the lower-tail average quantile
\[
\LCVaR_u(X):=\frac{1}{u}\int_0^u Q_X(t)\,dt,
\qquad u\in(0,1).
\]
\paragraph{The Newsvendor Problem.} In the newsvendor problem, demand $D\sim\psi$ is nonnegative and the normalized profit from ordering $q\ge0$ units is
\[
\Profit_{\psi,\beta}(q):=\E_\psi[\min\{D,q\}]-(1-\beta)q,
\qquad \beta\in[0,1].
\]
A randomized ordering rule is a probability measure $\lambda$ on
$[0,\infty)$ with finite first moment. Its expected profit is
\[
\Profit_{\psi,\beta}(\lambda):=\int_{[0,\infty)}\Profit_{\psi,\beta}(q)\,d\lambda(q).
\]

\paragraph{Posted-price revenue.}
For a buyer value $X\sim\psi$, the revenue from price $v>0$ is
\[
\Rev_\psi(v):=v\P_\psi(X\ge v),
\qquad
\Rev_\psi^*:=\sup_{v>0}\Rev_\psi(v).
\]
For a randomized pricing rule $\lambda$ on $(0,\infty)$, define
\[
\Rev_\psi(\lambda):=\int_{(0,\infty)}\Rev_\psi(v)\,d\lambda(v).
\]
Randomization is unnecessary under a known distribution, but it matters under ambiguity because the same moment information may be consistent with many distributions.

\paragraph{Secretary success.}

In the labeled independent random-order cardinal-value secretary problem, the
$m$ labels have independent nonnegative values $X_1,\dots,X_m$ and arrive in a
uniformly random order, independently of those values. At each arrival, the
policy observes the item's label and realized value and must either reject it
irrevocably or stop and select it; at most one item may be selected. Success
means selecting any maximizer. Under exact moment disclosure, the policy is
given, for each label $i$, the full sequence
$(\E[X_i^k])_{k\ge0}$ before the arrivals and no other ex ante distributional
information. A moment-based secretary policy is a policy in the sense of
\Cref{def:moment-based-policy} for this disclosure and observation process.
Its robust success guarantee is the infimum of its success probability over
all independent per-label distributions with finite moments of every order,
where the policy observes the resulting moment sequence of each label. Thus
the worst case ranges over both the disclosed moment profiles and the
compatible distributions. The optimal robust value is the supremum of this
guarantee over moment-based secretary policies.

For each integer horizon $m\ge1$, the finite-horizon classical secretary value is
\[
s_m:=
\max_{r\in\{1,\dots,m\}}
\frac{r-1}{m}\sum_{t=r}^{m}\frac{1}{t-1},
\]
where the expression for $r=1$ is interpreted as $1/m$. The classical
asymptotic is $\lim_{m\to\infty}s_m=1/e$.

\paragraph{Prophet inequalities.}
In the prophet inequality problem, values $X_1,\dots,X_m$ are revealed sequentially and the decision maker chooses a stopping time $\tau$. We allow the rule to make no selection, in which case its reward is zero. The classical benchmark is $\E[\max_i X_i]$. In \Cref{sec:prophet}, we also compare against the weaker benchmark $\E[X_{(r)}]$, the expected $r$-th largest value, for fixed $r$.

%% file: sections/impossibility.tex
\section{A Residue-Class Moment-Matching Framework}\label{sec:impossibility}

This section develops two constructions for producing distinct distributions that agree on all moments. The binary construction splits an alternating signed identity into two probability measures; the $N$-way construction uses roots of unity to make the moment sums of $N$ residue classes coincide. The binary pair is enough for two-instance separations, whereas the $N$-way family is needed when a randomized policy may spread its actions across many disjoint decision regions.

Finite-dimensional cancellation is only the algebraic starting point. On an infinite support, the weights must be summable and admit a common positive normalization, every distribution must have finite moments of every order, and the cancellation identities must remain valid after passing to the limit. Our applications require still more: quantitative control of designated atoms and of the remaining tail. We state these properties here as a toolkit and defer the analytic estimates to \Cref{sec:proofs}.

\subsection{The Binary Construction}

We first isolate the finite-dimensional mechanism. On $k$ prescribed nodes, the binary construction is the one-dimensional nullspace of a rectangular Vandermonde matrix.

\begin{lemma}\label{lem:finite-binary-vandermonde}
Let $k\ge2$ and let $x_0<x_1<\cdots<x_{k-1}$ be distinct real numbers. There exist two distinct probability distributions $\mu_+$ and $\mu_-$, supported on disjoint subsets of $\{x_0,\dots,x_{k-1}\}$, such that
\[
\int x^s\,d\mu_+(x)=\int x^s\,d\mu_-(x),
\qquad s=0,1,\dots,k-2.
\]
\end{lemma}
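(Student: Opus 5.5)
The plan is to take a nonzero null vector of the $(k-1)\times k$ Vandermonde matrix and split it into its positive and negative parts. Let $V$ be the matrix with rows indexed by $s=0,\dots,k-2$, columns by $j=0,\dots,k-1$, and entries $V_{sj}=x_j^s$. Since $V$ has more columns than rows, there is a nonzero $w\in\R^k$ with $Vw=0$, i.e.
\[
\sum_{j=0}^{k-1} w_j x_j^s=0,\qquad s=0,1,\dots,k-2 .
\]
Any $k-1$ of the columns form a square Vandermonde matrix on distinct nodes, which is invertible. Hence $V$ has rank $k-1$, the null space is one-dimensional, and every $w_j\neq0$: if some $w_j=0$, the remaining $k-1$ columns would admit a nontrivial dependence, which is impossible.

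Next I will pin down the signs, which gives the alternation mentioned in the text. By Cramer's rule (equivalently, Lagrange interpolation), one can take
\[
w_j=\prod_{i\ne j}\frac{1}{x_j-x_i}.
\]
The identity $\sum_j w_j x_j^s=0$ for $s\le k-2$ is the standard statement that the leading coefficient of the degree-$\le k-1$ interpolant of $x^s$ vanishes. The sign of $w_j$ is $(-1)^{k-1-j}$, because exactly $k-1-j$ of the factors $x_j-x_i$ are negative. So the signs alternate, and both the positive and the negative index sets are nonempty since $k\ge2$. Even without this formula, the $s=0$ equation $\sum_j w_j=0$ together with $w\ne0$ already forces both signs to appear, and that is all the proof really needs.

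Finally, set $P=\{j:w_j>0\}$ and $M=\{j:w_j<0\}$. These are disjoint, nonempty, and cover all indices. The $s=0$ equation gives $Z:=\sum_{j\in P}w_j=\sum_{j\in M}|w_j|>0$. Define
\[
\mu_+=Z^{-1}\sum_{j\in P}w_j\delta_{x_j},\qquad \mu_-=Z^{-1}\sum_{j\in M}|w_j|\delta_{x_j}.
\]
These are probability measures with disjoint supports, so they are distinct. Rearranging $\sum_j w_jx_j^s=0$ and dividing by $Z$ gives equality of the moments of order $0,\dots,k-2$.

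There is no real obstacle here. The only point requiring care is the common normalization: one must observe that the zeroth-moment equation simultaneously guarantees both parts are nonempty and have equal total mass, so a single constant $Z$ normalizes both measures.
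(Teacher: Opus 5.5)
Your proof is correct and follows essentially the same route as the paper's. You take the barycentric weights, which span the one-dimensional null space of the $(k-1)\times k$ Vandermonde matrix, note their alternating signs, and use the zeroth-moment equation to normalize the positive and negative parts by a common constant into moment-matching probability measures on disjoint supports.
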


\begin{proof}
Define the barycentric weights
\[
a_i=\frac{1}{\prod_{\ell\ne i}(x_i-x_\ell)},
\qquad i=0,\dots,k-1.
\]
Then
\[
\sum_{i=0}^{k-1} a_i x_i^s=0,
\qquad s=0,1,\dots,k-2.
\]
Indeed, the vector $(a_i)_i$ lies in the nullspace of the $(k-1)\times k$ Vandermonde matrix with rows indexed by $s=0,\dots,k-2$; since this matrix has rank $k-1$, the nullspace is one-dimensional. Equivalently, the identities follow from the leading-coefficient identity in Lagrange interpolation of every polynomial of degree at most $k-2$. Since the nodes are strictly increasing, the signs of the $a_i$ alternate. Thus the positive and negative parts are both nonempty, and the equation for $s=0$ gives equal total mass.

Let
\[
P=\{i:a_i>0\},\qquad M=\{i:a_i<0\},
\qquad
W=\sum_{i\in P}a_i=-\sum_{i\in M}a_i,
\]
and define
\[
\mu_+(\{x_i\})=
\begin{cases}
a_i/W, & i\in P,\\
0, & i\notin P,
\end{cases}
\qquad
\mu_-(\{x_i\})=
\begin{cases}
-a_i/W, & i\in M,\\
0, & i\notin M.
\end{cases}
\]
These are probability distributions on disjoint supports. Dividing the identities for $\sum_i a_i x_i^s$ by $W$ gives the claimed moment matching through degree $k-2$.
\end{proof}

\begin{remark}\label{rem:finite-vandermonde-tightness}
If two probability distributions supported on the same $k$ fixed points agree on moments $s=0,1,\dots,k-1$, then they are identical. Their difference vector lies in the kernel of the full $k\times k$ Vandermonde matrix, which is invertible. Thus matching through degree $k-2$ is the strongest possible nontrivial finite matching on $k$ fixed support points.
\end{remark}

To pass from finitely many equations to all moments, we choose a rapidly separated support and take a controlled infinite-support limit of the Vandermonde construction. The decay estimate in the next theorem simultaneously guarantees normalization, finiteness of every moment, and the local mass bounds used later.

\begin{theorem}\label{prop:key-prop-1}\label{thm:binary-separator}
Fix integers $n,B\ge 3$ and define
\[
x_0=1,\qquad x_1=B,\qquad x_j=(nB)^{B^{j-2}}\quad (j\ge 2).
\]
There are weights $(w_j)_{j\ge0}$ and a normalizing constant $W>0$ such that
\[
\nu(\{x_j\}) =
\begin{cases}
w_j/W, & j\text{ even},\\
0, & j\text{ odd},
\end{cases}
\qquad
\mu(\{x_j\}) =
\begin{cases}
|w_j|/W, & j\text{ odd},\\
0, & j\text{ even},
\end{cases}
\]
are probability distributions on disjoint supports. Both distributions have finite moments of every order and agree on all moments.
Writing $p_j:=\nu(\{x_j\})$ and $q_j:=\mu(\{x_j\})$, the following quantitative bounds hold:
\[
p_0\ge 1-\frac{2}{n},\qquad
q_1\ge 1-\frac{1}{n^2B^2},\qquad
p_2\ge \frac{1}{4n},
\]
with $q_0=p_1=q_2=0$. Moreover, $W\ge n$ and
\[
|w_j|\le \frac{8(nB)^2}{x_j^2}\qquad\text{for all }j\ge0.
\]
\end{theorem}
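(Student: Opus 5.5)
We take the limit of the barycentric weights from \Cref{lem:finite-binary-vandermonde}. For $K\ge 2$ and nodes $x_0,\dots,x_{K-1}$, rescale the finite null vector by the product of the larger nodes:
\[
w_j^{(K)}:=a_j\prod_{\ell=0}^{K-1}x_\ell=\frac{x_j\prod_{\ell\ne j}x_\ell}{\prod_{\ell\ne j}(x_j-x_\ell)}
=x_j\prod_{\ell\ne j}\frac{x_\ell}{x_j-x_\ell}.
\]
Up to a global sign $(-1)^{K-1}$, which we fix so that $w_0>0$, this is the unique null vector of the first $K-1$ moment equations. For $\ell>j$ the factor $x_\ell/(x_j-x_\ell)$ equals $-1/(1-x_j/x_\ell)$. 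Hence, with the sign normalized, the limit as $K\to\infty$ exists and is
\[
w_j=(-1)^j\, x_j\prod_{\ell<j}\frac{x_\ell}{x_j-x_\ell}\prod_{\ell>j}\frac{1}{1-x_j/x_\ell}.
\]
The infinite product converges because $\sum_{\ell>j}x_j/x_\ell<\infty$, and doubly exponential growth makes it at most $2$. The sign is $(-1)^j$: each $\ell<j$ contributes a positive factor, and the $\ell>j$ factors are negative uniformly in $K$ after the global normalization. This gives even indices positive and odd indices negative, which is the disjoint-support split.

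\textbf{Decay bound.} For $j\ge 2$ we have $\prod_{\ell<j}x_\ell/(x_j-x_\ell)\le 2\prod_{\ell<j}x_\ell/x_j^{\,j}$. With the given nodes, $\prod_{\ell<j}x_\ell=B\,(nB)^{(B^{j-2}-1)/(B-1)}\le B\,x_j^{1/(B-1)}$. Therefore $|w_j|\lesssim B\,x_j^{1-j+1/(B-1)}$, which is $\le 8(nB)^2/x_j^2$ once $j\ge 3$ by a routine comparison. The cases $j=0,1,2$ we check by hand:
\[
w_0\approx 1,\qquad |w_1|\approx B\cdot\frac{1}{B-1}\cdot B\approx B,\qquad w_2\approx \frac{B}{(nB)}\approx \frac1n .
\]
More precisely, $w_2=x_2\cdot\frac{1}{x_2-1}\cdot\frac{B}{x_2-B}\cdot(1+o)$ with $x_2=nB$. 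We then read off $W=\sum_{\text{even}}w_j$. The moment identities give this sum equal to $\sum_{\text{odd}}|w_j|$.

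\textbf{Passing to all moments.} This is the main obstacle. For fixed $s$ we need $\sum_j w_jx_j^s=0$. For $K>s+1$ the finite vectors satisfy $\sum_{j<K}w^{(K)}_jx_j^s=0$, and $w^{(K)}_j\to w_j$ pointwise. We need domination uniform in $K$: $|w^{(K)}_j|x_j^s\le C|w_j|\,x_j^s$ with the tail $\sum_j |w_j|x_j^s<\infty$. The first inequality holds because the truncated $\ell>j$ product lies between $1$ and the full product. The summability holds because $|w_j|\le x_j^{-j+O(1)}$ decays superpolynomially in $x_j$. Dominated convergence then gives every identity, finite moments of all orders, and equal masses for $s=0$. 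Normalizing by $W$ yields $\nu,\mu$.

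\textbf{Quantitative bounds.} These follow from the explicit small-$j$ values. For $p_0=w_0/W$, note $W=w_0+w_2+\dots$ and $w_0=\prod_{\ell\ge1}(1-1/x_\ell)^{-1}$. For $q_1=|w_1|/W$, note $W=|w_1|+|w_3|+\dots$ with $|w_3|/|w_1|\le 1/(nB)^2$. Together with $W=|w_1|(1+O((nB)^{-2}))\approx B^2/(B-1)\cdot(\cdots)$, this is the step whose constants I would have to tune. The lower bound $W\ge n$ needs care with the scaling: one may rescale all $w_j$ by a common factor such as $n$, or by $x_1x_2$, without affecting the statement. I would fix the normalization so that $W\ge n$ while keeping $|w_j|\le 8(nB)^2/x_j^2$, which has slack $(nB)^2$ at $j=0,1,2$. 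The bound $p_2\ge 1/(4n)$ then comes from $w_2/w_0\approx 1/n$ after tracking the factors $(1-1/B)$ and $(1-1/(nB))$.
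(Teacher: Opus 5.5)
Your construction of the limiting weights and your passage to all moments are correct, and they use a genuinely different normalization from the paper. The paper divides the finite null vector by its $x_2$-entry, so $w_2^{(n,K)}=1$. Its update factor $(x_\ell-x_2)/(x_\ell-x_j)$ is then above or below $1$ according as $j>2$ or $j<2$, and it needs a whole family of bounds $|w_j|\le C_M/x_j^M$ plus a separate uniform-in-$K$ lemma. Your normalization by $(-1)^{K-1}\prod_{\ell<K}x_\ell$ makes every update factor equal to $(1-x_j/x_K)^{-1}>1$. Hence $|w_j^{(K)}|\uparrow|w_j|$ for every $j$, and the single estimate $|w_j|\le 4x_j^{1-j}\prod_{\ell<j}x_\ell$ gives a summable majorant for $|w_j|x_j^s$ for every $s$. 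That is cleaner than the paper's route. The two limits are proportional: the paper's weights are yours divided by your $w_2$.

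The gap is in the two scale-dependent claims, $W\ge n$ and $|w_j|\le 8(nB)^2/x_j^2$, which you defer to ``tuning''; the bookkeeping you sketch for them fails.

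\textbf{(i)} $|w_1|$ is not $\approx B$. The only $\ell<1$ factor is $x_0/(x_1-x_0)=1/(B-1)$, and the $\ell=2$ factor is $(1-1/n)^{-1}$, so $|w_1|=\frac{B}{B-1}\cdot\frac{n}{n-1}\prod_{\ell\ge3}(1-B/x_\ell)^{-1}$. Thus $w_0$, $|w_1|$ and $W$ are all $\Theta(1)$ (indeed $|w_1|-w_0\approx w_2$), and you must rescale by some $c\gtrsim n$.

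\textbf{(ii)} The slack at $j=2$ is not $(nB)^2$. The target there is $8(nB)^2/x_2^2=8$, while $w_2\approx 1/(n-1)$, so $c\lesssim 8n$. In particular, rescaling by $x_1x_2=nB^2$ gives $cw_2>B^2>8$.

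\textbf{(iii)} The same $\Theta(n)$ window binds at $j=3$, where $|w_3|\approx nB^2/x_3^2$. There your crude estimate $\prod_{\ell<j}x_\ell\le Bx_j^{1/(B-1)}$ loses exactly the factor $(nB)^{1/(B-1)}$, since $\prod_{\ell<3}x_\ell=nB^2$. After rescaling it only yields $c|w_3|\lesssim 4cB(nB)^{B/(B-1)}/x_3^2$. For $B=3$ this forces $c\lesssim\sqrt n$, which is incompatible with $c\gtrsim n$ for large $n$.

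So, as written, your estimates admit no scaling that meets both requirements.

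The repair is short. Take $c=n$; taking $c=1/w_2$ instead recovers the paper's normalization.
\begin{itemize}
\item Then $W\ge n|w_1|\ge n\cdot\frac{B}{B-1}\cdot\frac{n}{n-1}>n$.
\item Directly, $nw_0\le 2n$, $n|w_1|\le 3n$ and $nw_2\le 2$.
\item At $j=3$, the exact product gives $n|w_3|\le 1.1\,n^2B^2/x_3^2$.
\item For $j\ge4$ your crude bound suffices, because $x_j^{3-j+1/(B-1)}<1$.
\end{itemize}
The scale-invariant bounds also need the corrected values. Since $W\approx|w_1|$, you get $w_2/W\approx\frac{B-1}{nB-1}\in[\frac{2}{3n},\frac1n)$, which gives both $p_0\ge1-\frac2n$ and $p_2\ge\frac1{4n}$. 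Finally, $(|w_3|+|w_5|+\cdots)/|w_1|\le 1.1\,nB^2/x_3^2\le\frac{1}{n^2B^2}$ gives the bound on $q_1$.
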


The first three bounds expose the local structure: $\nu$ places almost all of its mass at $1$ and a $1/\Theta(n)$ mass at $nB$, whereas $\mu$ places almost all of its mass at $B$. The remaining tail is small enough for the quantile and order-statistic applications.
\Cref{fig:binary-separator-support} summarizes the alternating support pattern and the designated atoms used later.

\input{figures/binary_separator}

\subsection{\texorpdfstring{$N$}{N}-Way Residue-Class Construction}

A binary pair suffices when a lower bound compares two regimes. To control randomized policies that may distribute their actions over many disjoint regions, we need arbitrarily many distributions with one common moment sequence. The next construction replaces the binary sign split by residue classes
and uses roots of unity to force all residue-class moment sums to agree (see \Cref{fig:nway-residue-classes} for an illustration).

\begin{theorem}\label{prop:key-prop-2}\label{thm:nway-separator}
Fix integers $B,N\ge2$ and let $(w_j)_{j\in\N_0}$ be the unique coefficients that satisfy
\[
G_N(x):=\prod_{m=0}^{\infty}\sum_{u=0}^{N-1}(xB^{-m})^u
=\sum_{j=0}^{\infty}w_jx^j.
\]
There is a common normalizing constant $W\in[1,4)$ such that, for every $h\in\{0,\dots,N-1\}$,
\[
\mu_h(\{B^j\}) :=
\begin{cases}
w_j/W, & j\equiv h\pmod N,\\
0, & \text{otherwise},
\end{cases}
\]
is a probability distribution. The distributions $\mu_0,\dots,\mu_{N-1}$ are supported on disjoint residue classes, have finite moments of every order, and agree on all moments.

Moreover, given $j\in\N_0$, let $L= \lfloor j/(N-1) \rfloor$, and $b=j-L(N-1)$, and define
\[
e(j):=\frac{(N-1)L(L-1)}{2}+Lb,
\]
the coefficients $w_j$ satisfy
\[
B^{-e(j)}\le w_j\le 2^jB^{-e(j)}.
\]

In addition, $j-e(j)\le N-1$, with equality if and only if $j\in\{N-1,\dots,2N-2\}$.

\end{theorem}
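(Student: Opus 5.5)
The plan rests on the functional identity behind $G_N$ and on roots of unity. Each factor equals $\frac{1-(xB^{-m})^N}{1-xB^{-m}}$. For $|x|<B$ (indeed for every $x\in\C$), the infinite product converges absolutely, because $\sum_m |x|B^{-m}<\infty$ controls $\log$ of each factor once $|x|B^{-m}<1/2$. Hence $G_N$ is entire and has nonnegative Taylor coefficients $w_j$.

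\textbf{Moment matching.} Let $\zeta=e^{2\pi i/N}$. Evaluating $G_N$ at $\zeta B^{k}$ for an integer $k\ge0$, the factor with $m=k$ equals $\sum_{u=0}^{N-1}\zeta^u=0$. So $G_N(\zeta^r B^k)=0$ for every $r\not\equiv0\pmod N$ and every $k\ge0$. Writing $S_h(k):=\sum_{j\equiv h}w_jB^{jk}$, we have
\[
G_N(\zeta^rB^k)=\sum_{h=0}^{N-1}\zeta^{rh}S_h(k),
\]
with absolute convergence since $G_N$ is entire. These sums vanish for $r=1,\dots,N-1$. Discrete Fourier inversion then gives $S_0(k)=\cdots=S_{N-1}(k)=G_N(B^k)/N$. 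Since $\int x^k\,d\mu_h=S_h(k)/W$, the case $k=0$ shows that the single constant $W:=G_N(1)/N$ normalizes all $\mu_h$ simultaneously, and the cases $k\ge1$ show that all moments are finite and equal. The supports $\{B^j:j\equiv h\}$ are disjoint.

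\textbf{Normalizing constant.} We have $G_N(1)=N\prod_{m\ge1}\sum_{u<N}B^{-mu}$, so $W=\prod_{m\ge1}\frac{1-B^{-mN}}{1-B^{-m}}$. This is at least $1$. It is also at most $\prod_{m\ge1}(1-2^{-m})^{-1}\approx 3.46<4$, which handles $W\in[1,4)$.

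\textbf{Coefficient bounds.} Expanding the product, $w_j=\sum B^{-\sum_m m u_m}$, where the sum runs over sequences $(u_m)$ with $u_m\in\{0,\dots,N-1\}$ and $\sum u_m=j$. The exponent $\sum m u_m$ is minimized greedily by filling the smallest $m$ first: $u_0=\cdots=u_{L-1}=N-1$ and $u_L=b$. This gives exactly $\sum_{m<L}m(N-1)+Lb=e(j)$. That single term gives the lower bound $w_j\ge B^{-e(j)}$.

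For the upper bound, I would compare every other sequence to the greedy one. Any admissible sequence is obtained from the greedy one by moving units to larger indices, and I will bound the count of sequences with excess exponent $e(j)+d$. The crude approach: the number of sequences with $\sum u_m=j$ and $\sum m u_m=e(j)+d$ is at most the number of partition-type objects of size about $j+d$, hence at most $2^{j}\cdot 2^{d}$ or similar. Then summing $B^{-d}2^{d}$ over $d\ge0$ with $B\ge2$ needs care, because at $B=2$ this series does not converge.

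A better idea is an injection or recursion. I would use $G_N(x)=\bigl(\sum_{u<N}x^u\bigr)G_N(x/B)$. This gives
\[
w_j=\sum_{u=0}^{\min(j,N-1)}w_{j-u}'B^{-(j-u)},
\]
where $w'$ denotes the coefficients of $G_N$ again, so $w_j=\sum_{u}B^{-(j-u)}w_{j-u}$. Induction on $j$ then proceeds: assume $w_i\le 2^iB^{-e(i)}$ for $i<j$ (and for $i=j$ via the $u=0$ term, which needs handling). The key inequality is $e(j-u)+(j-u)\ge e(j)$ with slack growing in how far $u$ is from $\min(j, N-1)$. This is the main obstacle. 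I would try to show $e(i)+i\ge e(j)+(N-1-u)$ roughly, so that the geometric sum over $u$ is bounded by $2^{j}$ using the factor $2^{j-u}\le 2^{j-1}$ and $\sum 2^{-\cdot}\le 2$. The $u=0$ term gives $w_j(1-B^{-j})$ on the left side, which is absorbed.

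\textbf{Final inequality.} The statement $j-e(j)\le N-1$ is a direct computation in $L,b$: $j-e(j)=(N-1)L(3-L)/2+b(1-L)$. Checking $L=0$, $L=1$, and $L\ge2$ separately gives $\le N-1$, with equality exactly when $L=1$, or when $L=0$ and $b=N-1$, or $L=2$ and $b=0$. Together these are $j\in\{N-1,\dots,2N-2\}$.
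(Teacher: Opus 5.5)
Your convergence argument, the roots-of-unity moment matching, the formula $W=\prod_{m\ge1}\frac{1-B^{-mN}}{1-B^{-m}}\in[1,4)$, the greedy lower bound $w_j\ge B^{-e(j)}$, and the case analysis of $j-e(j)$ all follow the paper's proof. Your case ``$L=0$, $b=N-1$'' cannot occur, since $b\le N-2$, but your equality set is still correct.

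The genuine gap is the upper bound $w_j\le 2^jB^{-e(j)}$, which you do not actually prove. Your recursion $w_j=\sum_{u=0}^{\min(j,N-1)}B^{-(j-u)}w_{j-u}$, from $G_N(x)=\bigl(\sum_{u<N}x^u\bigr)G_N(x/B)$, is correct. However, you leave the needed inequality $e(j)\le e(j-u)+(j-u)$ as ``the main obstacle'' and propose instead the strengthened form $e(j-u)+(j-u)\ge e(j)+(N-1-u)$. That strengthening is false. For $N\ge3$ and $j=u=1$, the left side is $e(0)=0$ and the right side is $e(1)+N-2=N-2>0$. Indeed it fails whenever $1\le u\le j<N-1$, since then $e(j)=e(j-u)=0$. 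Moreover, your sketched bookkeeping (bounding $2^{j-u}\le2^{j-1}$ and a geometric sum by $2$) already spends the full $2^j$ before you divide by $1-B^{-j}$. It therefore leaves no room to absorb the $u=0$ term, so the induction as sketched does not close.

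No slack is needed: the plain inequality closes your induction exactly. Take $w_0=1$ as the base case, and for $j\ge1$ set $U:=\min(j,N-1)\ge1$. The induction hypothesis together with $e(j-u)+(j-u)\ge e(j)$ gives
\begin{align*}
(1-B^{-j})\,w_j&=\sum_{u=1}^{U}B^{-(j-u)}w_{j-u}\le B^{-e(j)}\sum_{u=1}^{U}2^{j-u}\\
&=2^jB^{-e(j)}\bigl(1-2^{-U}\bigr)\le 2^jB^{-e(j)}\bigl(1-B^{-j}\bigr),
\end{align*}
where the last step uses $U\le j$ and $B\ge2$.

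The plain inequality follows from the identity $e(k)=\sum_{t=0}^{k-1}\lfloor t/(N-1)\rfloor$. The difference $e(j)-e(j-u)$ is a sum of $u\le N-1$ such floors starting at $t=j-u$. This is at most the sum over the $N-1$ consecutive integers $j-u,\dots,j-u+N-2$, and that sum equals exactly $j-u$.

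The paper avoids the recursion altogether. It lists each index $m$ with multiplicity $c_m$ as $t_1\le\cdots\le t_j$ and notes that the cap $c_m\le N-1$ forces $t_i\ge\lfloor(i-1)/(N-1)\rfloor$. It then maps each admissible sequence injectively to its excess vector $(\delta_i)\in\N_0^j$, giving $w_j\le B^{-e(j)}(1-1/B)^{-j}\le2^jB^{-e(j)}$.

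This is the sharp form of the counting you abandoned. There are exactly $\binom{d+j-1}{j-1}$ excess vectors of total $d$, and $\sum_{d\ge0}\binom{d+j-1}{j-1}B^{-d}=(1-1/B)^{-j}$ is finite for every $B\ge2$, including $B=2$. The divergence you feared was an artifact of your crude bound $2^{j+d}$. Both routes rest on the same floor-sum structure of $e(j)$. The paper's uses it once to get a closed-form bound, while yours needs it as a separate lemma, or else the minimality of the greedy exponent, which you assert but do not prove.
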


\input{figures/nway_separator}

%% file: figures/binary_separator.tex
\begin{figure}[t]
\centering
\begin{tikzpicture}[
    paperfig,
    nuatom/.style={circle,draw=figblue,fill=figblue,minimum size=6pt,inner sep=0pt},
    muatom/.style={rectangle,draw=figorange,fill=figorange,minimum size=6pt,inner sep=0pt},
    bound/.style={font=\scriptsize,align=center,text=black!75}
]
    \node[font=\small] at (5.0,2.55)
        {$\displaystyle \int x^t\,d\mu=\int x^t\,d\nu\qquad(t\in\N_0)$};

    \draw[nustyle] (-0.15,1.2) -- (10.3,1.2);
    \draw[mustyle] (-0.15,0.4) -- (10.3,0.4);
    \node[nuatom,minimum size=4pt] at (-0.46,1.2) {};
    \node[left,font=\small\bfseries] at (-0.63,1.2) {$\nu$};
    \node[muatom,minimum size=4pt] at (-0.46,0.4) {};
    \node[left,font=\small\bfseries] at (-0.63,0.4) {$\mu$};

    \foreach \x in {0,2,4,6,8,10}{
        \draw[figguide] (\x,0.12) -- (\x,1.43);
    }

    \node[nuatom] (nu0) at (0,1.2) {};
    \node[nuatom] (nu2) at (4,1.2) {};
    \node[nuatom] (nu4) at (8,1.2) {};
    \node[muatom] (mu1) at (2,0.4) {};
    \node[muatom] (mu3) at (6,0.4) {};
    \node[muatom] (mu5) at (10,0.4) {};

    \node[bound,above=3pt of nu0] {$p_0\ge1-\dfrac2n$};
    \node[bound,above=3pt of mu1] {$q_1\ge1-\dfrac1{n^2B^2}$};
    \node[bound,above=3pt of nu2] {$p_2\ge\dfrac1{4n}$};

    \node[left,font=\scriptsize,text=black!65] at (-0.3,-0.05) {$j$};
    \node[left,font=\scriptsize,text=black!65] at (-0.3,-0.43) {$x_j$};
    \foreach \x/\j/\val in {
        0/0/1,
        2/1/B,
        4/2/nB,
        6/3/(nB)^B,
        8/4/x_4,
        10/5/x_5
    }{
        \node[font=\scriptsize,text=black!65] at (\x,-0.05) {$\j$};
        \node at (\x,-0.43) {$\val$};
    }

    \node[font=\large,text=figgray] at (11.0,0.75) {$\cdots$};
    \node[font=\scriptsize,text=black!72] at (5,-0.9)
        {support index $j$ (not value scale); markers indicate support};
\end{tikzpicture}
\caption{Binary separator: alternating supports and three controlled atoms.}
\label{fig:binary-separator-support}
\end{figure}

%% file: figures/nway_separator.tex
\begin{figure}[t]
\centering
\begin{tikzpicture}[
    paperfig,
    czeromark/.style={circle,draw=figblue,fill=figblue,minimum size=6pt,inner sep=0pt},
    conemark/.style={rectangle,draw=figorange,fill=figorange,minimum size=6pt,inner sep=0pt},
    ctwomark/.style={regular polygon,regular polygon sides=3,draw=figgreen,fill=figgreen,minimum size=7pt,inner sep=0pt},
    cthreemark/.style={diamond,draw=figpurple,fill=figpurple,minimum size=7pt,inner sep=0pt}
]

    \foreach \y/\h/\style in {2.4/0/czeromark,1.6/1/conemark,0.8/2/ctwomark,0/3/cthreemark}{
        \draw[figguide] (-0.05,\y) -- (8.8,\y);
        \node[\style,scale=.7] at (-0.48,\y) {};
        \node[left,font=\small\bfseries] at (-0.68,\y) {$\mu_{\h}$};
    }
    \foreach \j in {0,...,10}{
        \pgfmathsetmacro{\x}{0.85*\j}
        \draw[figguide] (\x,-0.18) -- (\x,2.62);
        \node[above,font=\scriptsize,text=black!60] at (\x,2.62) {$\j$};
    }
    \node[above,font=\scriptsize,text=black!60] at (9.05,2.62) {$j$};

    \foreach \j in {0,4,8}{
        \pgfmathsetmacro{\x}{0.85*\j}
        \node[czeromark] at (\x,2.4) {};
    }
    \foreach \j in {1,5,9}{
        \pgfmathsetmacro{\x}{0.85*\j}
        \node[conemark] at (\x,1.6) {};
    }
    \foreach \j in {2,6,10}{
        \pgfmathsetmacro{\x}{0.85*\j}
        \node[ctwomark] at (\x,0.8) {};
    }
    \foreach \j in {3,7}{
        \pgfmathsetmacro{\x}{0.85*\j}
        \node[cthreemark] at (\x,0) {};
    }
    \node[font=\large,text=figgray] at (9.2,1.2) {$\cdots$};

    \draw[figaxis] (-0.05,-0.8) -- (8.8,-0.8)
        node[right,font=\scriptsize] {exponent $j$ (support point $B^j$)};
\end{tikzpicture}
\caption{$N$-way separator for $N=4$.}
\label{fig:nway-residue-classes}
\end{figure}

%% file: sections/quantile.tex
\section{Quantile Ambiguity: VaR, Chance Constraints, and Newsvendor}\label{sec:quantile}

The binary separator shows that exact moments do not determine quantiles, even up to an arbitrary multiplicative factor. We first state this separation and its consequences for risk and robust capacity, then use the same hidden critical quantile to prove the newsvendor lower bound.

\begin{corollary}\label{cor:quantile-separation}
For every $\beta\in[1/2,1)$ and every integer $B\ge3$, there exist two distributions $\mu,\nu$ on $\R_+$ that agree on all moments and satisfy
\[
Q_\mu(\beta)=B,\qquad Q_\nu(\beta)=1.
\]
\end{corollary}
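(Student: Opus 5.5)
The plan is to apply \Cref{thm:binary-separator} directly, with the parameter $n$ chosen large enough in terms of $\beta$. The atom bounds there already say that $\nu$ concentrates almost all its mass at $x_0=1$ and $\mu$ almost all its mass at $x_1=B$. Both measures are supported on $\{x_j\}_{j\ge0}\subseteq[1,\infty)$. So the $\beta$-quantile of each should be its leading atom, provided the leading atom's mass is at least $\beta$.

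Concretely, given $\beta\in[1/2,1)$ and an integer $B\ge3$, fix an integer $n\ge\max\{3,\,2/(1-\beta)\}$. This is possible because $\beta<1$. Let $\nu,\mu$ be the pair from \Cref{thm:binary-separator} with parameters $n,B$. By that theorem they agree on all moments, so only the two quantiles remain to be computed from the definition $Q_\psi(u)=\inf\{x\ge0: F_\psi(x)\ge u\}$.

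For $\nu$: its atoms lie in $\{x_0,x_2,x_4,\dots\}\subseteq[1,\infty)$, so $F_\nu(x)=0<\beta$ for all $x<1$. Also
\[
F_\nu(1)\ge p_0\ge 1-\tfrac2n\ge\beta
\]
by the choice of $n$. Hence $Q_\nu(\beta)=1$.

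For $\mu$: $q_0=0$ and its atoms lie in $\{x_1,x_3,\dots\}\subseteq[B,\infty)$, because $x_j\ge nB>B$ for $j\ge2$. So $F_\mu(x)=0<\beta$ for $x<B$. Moreover
\[
F_\mu(B)\ge q_1\ge 1-\tfrac{1}{n^2B^2}\ge 1-\tfrac2n\ge\beta .
\]
Hence $Q_\mu(\beta)=B$.

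I do not expect a real obstacle, since all the analytic work (normalization, finiteness of moments, moment agreement) is packaged in \Cref{thm:binary-separator}. The only points needing care are the following. First, $n$ must be chosen depending on $\beta$, which is exactly where $\beta<1$ is used. Second, the vanishing of mass below the leading atom ($x_0=1$ is the smallest node, and $q_0=0$) is what makes the infimum in the quantile definition land exactly on $1$ and on $B$, rather than only giving upper bounds. The hypothesis $\beta\ge1/2$ is not needed by this argument beyond $\beta>0$.
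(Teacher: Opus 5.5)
Your proposal is correct and follows the paper's proof: choose $n$ of order $2/(1-\beta)$, apply \Cref{thm:binary-separator}, and read off the quantiles from the leading-atom masses $p_0\ge 1-2/n$ and $q_1\ge 1-1/(n^2B^2)$. You also make explicit two points the paper leaves implicit. First, neither measure has mass below its leading atom, which is what pins the infimum exactly at $1$ and at $B$. Second, the hypothesis $\beta\ge 1/2$ is not actually needed beyond $\beta>0$.
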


\begin{proof}
Choose an integer $n>\max\{4,2/(1-\beta)\}$ and apply \Cref{thm:binary-separator}. Since $\beta\ge1/2$, the smallest atoms of $\mu$ and $\nu$ have enough mass to contain level $\beta$: $q_1>\beta$ and $p_0>\beta$. Thus $Q_\mu(\beta)=B$ and $Q_\nu(\beta)=1$.
\end{proof}

\begin{corollary}\label{cor:risk-quantile-separation}
Fix integers $B\ge3$ and $n\ge4$, and let $\mu,\nu$ be the binary separator distributions with parameters $B,n$. Then, for every $u\in[1/2,1-2/n]$,
\[
\VaR_u(\mu)=B,\qquad \VaR_u(\nu)=1.
\]
Moreover, for every $u\in(0,1-2/n]$,
\[
\LCVaR_u(\mu)=B,\qquad \LCVaR_u(\nu)=1.
\]
\end{corollary}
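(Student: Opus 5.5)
The proof reads off the quantile functions of $\mu$ and $\nu$ from the atom bounds in \Cref{thm:binary-separator}. The agreement of all moments is already part of that theorem, so only the risk functionals need to be checked.

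\emph{Step 1: the structure of $\mu$.} By the theorem, $\mu$ has no atom at $x_0=1$ or at $x_2=nB$. Its smallest support point is $x_1=B$, which carries mass $q_1\ge 1-1/(n^2B^2)$. Hence $F_\mu(x)=0$ for $x<B$ and $F_\mu(B)=q_1$. Since $n\ge4$ and $B\ge3$, we have $1/(n^2B^2)\le 1/144<2/n$, so $q_1>1-2/n$. By the definition of $Q_\mu$, it follows that $Q_\mu(t)=B$ for every $t\in(0,q_1]$, and in particular for all $t\in(0,1-2/n]$.

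\emph{Step 2: the structure of $\nu$.} The smallest support point of $\nu$ is $x_0=1$, which carries mass $p_0\ge 1-2/n$. Hence $F_\nu(x)=0$ for $x<1$ and $F_\nu(1)=p_0$. It follows that $Q_\nu(t)=1$ for all $t\in(0,p_0]\supseteq(0,1-2/n]$.

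\emph{Step 3: the conclusions.} The range $[1/2,1-2/n]$ is nonempty because $n\ge4$, and it lies inside $(0,1-2/n]$. So Steps 1 and 2 give $\VaR_u(\mu)=B$ and $\VaR_u(\nu)=1$ on that range. For the lower-tail average, fix $u\in(0,1-2/n]$. Then $Q_\mu\equiv B$ on $(0,u]$, so $\LCVaR_u(\mu)=\frac1u\int_0^u B\,dt=B$. The same computation with $Q_\nu\equiv 1$ gives $\LCVaR_u(\nu)=1$.

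The only point needing care is the boundary level $u=1-2/n$, where $p_0$ may equal $u$ exactly. This case is still fine, because the quantile uses $F\ge u$, and $F_\nu(1)=p_0\ge u$ gives $Q_\nu(u)=1$.
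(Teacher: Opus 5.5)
Your proof is correct and follows the paper's argument: the smallest atoms of $\mu$ (at $B$) and of $\nu$ (at $1$) each carry mass at least $1-2/n$, so both quantile functions are constant on $(0,1-2/n]$ and integrate to the stated lower-tail averages. One small slip: the chain $1/(n^2B^2)\le 1/144<2/n$ is false for $n\ge 288$, but the inequality you need, $1/(n^2B^2)<2/n$, holds directly because it is equivalent to $2nB^2>1$ (and the argument only needs $q_1\ge 1-2/n$).
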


\begin{proof}
Apply \Cref{thm:binary-separator}. The smallest support point of $\mu$ is $B$ and has mass at least $1-1/(n^2B^2)>1-2/n$; the smallest support point of $\nu$ is $1$ and has mass at least $1-2/n$. Hence the displayed quantiles are constant over the stated ranges, and integrating the constant quantile functions gives the lower-tail average quantiles.
\end{proof}

\begin{corollary}\label{cor:chance-capacity}
Fix $\beta\in[1/2,1)$ and $B\ge3$. There are moment-equivalent distributions $\mu,\nu$ such that any moment-based capacity rule that is feasible for every distribution in this moment class at service level $\beta$ must choose capacity at least $B$. Consequently, on distribution $\nu$ its capacity is a factor at least $B$ larger than the distribution-specific minimum feasible capacity.
\end{corollary}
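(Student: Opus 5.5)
The plan is to reduce the statement directly to \Cref{cor:quantile-separation}, after first fixing what ``feasible at service level $\beta$'' means. A capacity $c\ge0$ is feasible for a demand distribution $\psi$ at level $\beta$ if $\P_\psi(D\le c)\ge\beta$, that is, $F_\psi(c)\ge\beta$. By the definition of $Q_\psi$ and right-continuity of $F_\psi$, this is equivalent to $c\ge Q_\psi(\beta)$. Consequently, the distribution-specific minimum feasible capacity is exactly $Q_\psi(\beta)$, and the infimum defining it is attained.

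Next, I take $\mu,\nu$ from \Cref{cor:quantile-separation} with the given $\beta\in[1/2,1)$ and $B\ge3$. These are the binary separator distributions with some $n>\max\{4,2/(1-\beta)\}$, so they agree on all moments and satisfy $Q_\mu(\beta)=B$ and $Q_\nu(\beta)=1$. Let $m$ denote their common moment sequence. By \Cref{def:moment-based-policy}, a moment-based capacity rule receives identical information under $\mu$ and $\nu$. It therefore outputs the same capacity $c=c(m)$ in both cases; if the rule is randomized, it outputs the same law of $c$.

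Now suppose the rule is feasible for every distribution in $\mathcal M(m)$. Since $\mu\in\mathcal M(m)$, feasibility requires $c\ge Q_\mu(\beta)=B$; for a randomized rule this must hold almost surely. This proves the first claim. Under $\nu$, the same capacity $c\ge B$ is used, while the minimum feasible capacity for $\nu$ is $Q_\nu(\beta)=1$. The ratio is therefore at least $B/1=B$, which is the second claim.

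The only step needing care is the equivalence $F_\psi(c)\ge\beta\iff c\ge Q_\psi(\beta)$, in particular that the infimum in $Q_\psi$ is attained. For these discrete laws this is immediate. By \Cref{thm:binary-separator}, $\mu$ has no mass below $B$, and its atom at $B$ has mass $q_1>\beta$, so $F_\mu(c)=0<\beta$ for $c<B$. Likewise, $\nu$ has no mass below $1$ and its atom at $1$ has mass $p_0>\beta$. This makes both quantile identifications and the feasibility thresholds explicit. No further obstacle is expected, since the moment-matching work has already been done in \Cref{thm:binary-separator}.
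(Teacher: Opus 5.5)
Your proposal is correct and is the paper's argument: take the pair from \Cref{cor:quantile-separation}, use $F_\psi(c)\ge\beta \iff c\ge Q_\psi(\beta)$ to get that feasibility under $\mu$ forces $c\ge B$, and compare with $\nu$'s own minimum $Q_\nu(\beta)=1$. You add only the right-continuity step and the ``same disclosure, same capacity'' step, which the paper leaves implicit. One caveat concerns your remark that a randomized rule must have $c\ge B$ almost surely: this holds only if feasibility is imposed for each realized capacity, and fails if feasibility is only required on average over the rule's randomness (put small mass on $c=0$ and the rest on a capacity that is feasible for the whole class by Markov's inequality). The corollary concerns deterministic capacity choices, so this does not affect your proof.
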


\begin{proof}
Use the pair from \Cref{cor:quantile-separation}. Feasibility at service level $\beta$ means choosing $q$ with $F(q)\ge\beta$. Under $\mu$, this requires $q\ge Q_\mu(\beta)=B$. Under $\nu$, the distribution-specific minimum feasible capacity is $Q_\nu(\beta)=1$.
\end{proof}

This corollary concerns rules that require feasibility uniformly over the moment class. It shows that robust feasibility alone can force an arbitrarily large over-capacity relative to the distribution-specific minimum feasible capacity.

\subsection{Newsvendor Problem}

Recall that in the newsvendor problem, we define the expected profit of ordering $q$ units as

\[
\Profit_{\psi,\beta}(q):=\E_\psi[\min\{D,q\}]-(1-\beta)q,
\qquad \beta\in[0,1],
\]
where $D\sim \psi$ is the demand, and $(1-\beta)$ is the per-unit cost.

\begin{lemma}\label{lem:newsvendor-critical-fractile}
Fix $\beta\in[0,1]$ and let $\psi$ be a distribution on $\R_+$ with finite first moment. Then $q\mapsto\Profit_{\psi,\beta}(q)$ is concave on $[0,\infty)$. For $q>0$, its one-sided derivatives are
\[
\partial_-\Profit_{\psi,\beta}(q)=\beta-F_\psi(q^-),
\qquad
\partial_+\Profit_{\psi,\beta}(q)=\beta-F_\psi(q).
\]
The right-derivative formula also holds at $q=0$.
Consequently,
\[
\arg\max_{q\ge0}\Profit_{\psi,\beta}(q)
=
\{q\ge0:\ F_\psi(q^-)\le\beta\le F_\psi(q)\}.
\]
\end{lemma}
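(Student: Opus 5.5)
The plan is to write the expected sales term as an integral of the tail function. For $q\ge0$ we have $\min\{D,q\}=\int_0^q \mathbf 1\{D>t\}\,dt$, so by Tonelli
\[
\E_\psi[\min\{D,q\}]=\int_0^q \bigl(1-F_\psi(t)\bigr)\,dt .
\]
This is finite because $\min\{D,q\}\le q$, so the first-moment assumption is not even needed for finiteness; it only ensures the profit is well defined uniformly in $q$. Consequently
\[
\Profit_{\psi,\beta}(q)=\int_0^q \bigl(\beta-F_\psi(t)\bigr)\,dt .
\]

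Concavity then follows because the integrand $t\mapsto\beta-F_\psi(t)$ is nonincreasing. An antiderivative of a monotone nonincreasing function is concave: its chord slopes over $[a,b]$ are averages of the integrand over $[a,b]$, and these averages decrease as the interval moves to the right.

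For the one-sided derivatives, fix $q>0$ and $h>0$ and look at the difference quotients.
\begin{itemize}
\item The right quotient is $\frac1h\int_q^{q+h}(\beta-F_\psi(t))\,dt$. Since $F_\psi$ is right-continuous, it tends to $\beta-F_\psi(q)$ as $h\downarrow0$.
\item The left quotient is $\frac1h\int_{q-h}^{q}(\beta-F_\psi(t))\,dt$. Since $F_\psi(t)\uparrow F_\psi(q^-)$ as $t\uparrow q$, it tends to $\beta-F_\psi(q^-)$.
\end{itemize}
Both limits follow from the monotone limits of $F_\psi$ via a simple squeeze, since the integrand is monotone on the interval. The right-derivative argument at $q=0$ is identical.

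For the argmax characterization, use that for a concave function on $[0,\infty)$ a point $q>0$ is a global maximizer iff $\partial_+\Profit\le0\le\partial_-\Profit$. With the formulas above this reads $\beta\le F_\psi(q)$ and $F_\psi(q^-)\le\beta$.

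The main obstacle is the boundary point $q=0$, where only the right derivative is available. There, $0$ is a maximizer iff $\partial_+\Profit(0)\le0$, i.e.\ $F_\psi(0)\ge\beta$. Under the convention $F_\psi(0^-)=\P(X<0)=0\le\beta$, the left condition holds automatically, so the set formula also covers $q=0$. I would state this convention explicitly.
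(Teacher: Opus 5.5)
Your proof is correct and follows the paper's route: the tail-integral representation $\Profit_{\psi,\beta}(q)=\int_0^q(\beta-F_\psi(t))\,dt$, concavity from the nonincreasing integrand, one-sided derivatives as one-sided limits of the integrand, and the first-order condition for concave functions. The paper integrates $\P_\psi(D\ge t)$ rather than $\P_\psi(D>t)$, which changes nothing. Your explicit handling of $q=0$ adds welcome detail, and it needs no extra convention, since the paper's preliminaries already define $F_\psi(0^-)=\P_\psi(X<0)=0$.
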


\begin{proof}
The tail-integral formula gives
\[
\Profit_{\psi,\beta}(q)=\int_0^\infty \P_\psi(\min\{D,q\}\ge t)\,dt-(1-\beta)q = \int_0^q\P_\psi(D\ge t)\,dt-(1-\beta)q,
\]
since $\P_\psi(\min\{D,q\}\geq t) = \P_\psi(D\ge t) \mathbf{1}_{\{q\geq t\}}$.
 The integrand in the display above is nonincreasing, so the profit is concave. The one-sided derivatives are the left and right limits of the integrand minus $1-\beta$, which gives the displayed formulas and the standard first-order optimality condition for a concave function.
\end{proof}

The following lemma states that the highest profit is achieved by a deterministic order amount, and is an immediate consequence of \Cref{lem:newsvendor-critical-fractile} together with Jensen's inequality.

\begin{lemma}\label{lem:newsvendor-randomization}
Fix $\beta\in[0,1]$, let $\psi$ be a distribution on $\R_+$ with
finite first moment, and let $\lambda$ be a randomized ordering rule. Writing
$\bar q:=\int_{[0,\infty)}q\,d\lambda(q)$, we have
\[
\Profit_{\psi,\beta}(\lambda)\le \Profit_{\psi,\beta}(\bar q).
\]
\end{lemma}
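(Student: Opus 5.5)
The plan is to apply Jensen's inequality to the concave profit curve from \Cref{lem:newsvendor-critical-fractile}. By that lemma, $q\mapsto\Profit_{\psi,\beta}(q)$ is concave on $[0,\infty)$, and it is finite everywhere because $0\le\E_\psi[\min\{D,q\}]\le\E_\psi[D]<\infty$. Since $\lambda$ has finite first moment, $\bar q$ is a well-defined point of $[0,\infty)$.

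Before invoking Jensen, I will check that $\Profit_{\psi,\beta}$ is $\lambda$-integrable, so that $\Profit_{\psi,\beta}(\lambda)$ is defined. This follows from the bound $|\Profit_{\psi,\beta}(q)|\le \E_\psi[D]+(1-\beta)q$ together with $\int q\,d\lambda<\infty$.

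Jensen's inequality for a concave function on an interval then gives
\[
\int_{[0,\infty)}\Profit_{\psi,\beta}(q)\,d\lambda(q)\le \Profit_{\psi,\beta}\Bigl(\int_{[0,\infty)} q\,d\lambda(q)\Bigr)=\Profit_{\psi,\beta}(\bar q),
\]
which is exactly the claim.

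To avoid relying on any regularity assumption in Jensen's inequality, I can instead use a supporting line at $\bar q$. Take the slope $g:=\partial_+\Profit_{\psi,\beta}(\bar q)=\beta-F_\psi(\bar q)$, which exists by \Cref{lem:newsvendor-critical-fractile} even when $\bar q=0$. Concavity gives $\Profit_{\psi,\beta}(q)\le\Profit_{\psi,\beta}(\bar q)+g(q-\bar q)$ for all $q\ge0$. Integrating against $\lambda$ makes the linear term vanish, since $\int (q-\bar q)\,d\lambda=0$.

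The supporting-line inequality is standard for concave functions. The right derivative at $\bar q$ works as a supergradient: for $q>\bar q$ it follows from the slope monotonicity of concave functions, and for $q<\bar q$ it follows because the left derivative is at least the right derivative. So there is no real obstacle; the only points needing care are the integrability check above and the boundary case $\bar q=0$, and the latter is covered by the right-derivative formula at $0$ in \Cref{lem:newsvendor-critical-fractile}.
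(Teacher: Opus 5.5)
Your proposal is correct and matches the paper's proof: check $\lambda$-integrability of $q\mapsto\Profit_{\psi,\beta}(q)$ from a linear bound in $q$ (the paper uses $\E_\psi[D]+q$), then apply Jensen to the concave profit curve from \Cref{lem:newsvendor-critical-fractile}. Your supporting-line argument with slope $\beta-F_\psi(\bar q)$ is a valid self-contained proof of that Jensen step.
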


\begin{proof}

By definition, $\lambda$ has finite first moment, so $\bar q<\infty$.
Moreover,
\[
\bigl|\Profit_{\psi,\beta}(q)\bigr|
\le \E_\psi[D]+q,
\]
and hence $q\mapsto\Profit_{\psi,\beta}(q)$ is integrable under $\lambda$.
The function is concave by \Cref{lem:newsvendor-critical-fractile}, so Jensen's
inequality gives
\[
\Profit_{\psi,\beta}(\lambda)
=\int_{[0,\infty)}\Profit_{\psi,\beta}(q)\,d\lambda(q)
\le
\Profit_{\psi,\beta}\!\left(\int_{[0,\infty)}q\,d\lambda(q)\right)
=\Profit_{\psi,\beta}(\bar q).\qedhere
\]

\end{proof}

Finally, for the two moment-agreeing distributions $\mu,\nu$ constructed in \Cref{thm:binary-separator}, the lemmas shown earlier yield that both optimal order amounts and optimal profits differ by an arbitrary (multiplicative) factor.

\begin{lemma}\label{lem:newsvendor-optima}
Fix $\beta\in[1/2,1)$ and $\varepsilon\in(0,1)$. Let $B,n$ be integers satisfying
\[
B\ge \frac{4}{\varepsilon(1-\beta)},
\qquad
n>\max\left\{2B,\frac{2}{1-\beta}\right\},
\]
    and let $\mu,\nu$ be the binary separator distributions with parameters $B,n$. Then the unique optimal order quantities and profits are
\[
q_{\mu,\beta}^*=B,\qquad \Profit_{\mu,\beta}^*=\beta B,
\qquad
q_{\nu,\beta}^*=1,\qquad \Profit_{\nu,\beta}^*=\beta.
\]
\end{lemma}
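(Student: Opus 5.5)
The plan is to apply the first-order characterization of \Cref{lem:newsvendor-critical-fractile} to each of $\mu$ and $\nu$, using only the atom bounds of \Cref{thm:binary-separator}. The point in both cases is that the smallest support point already carries mass strictly greater than $\beta$, so the critical fractile sits exactly at that atom. Both distributions have finite first moments by \Cref{thm:binary-separator}, so the lemma applies.

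\emph{The distribution $\mu$.} Its support is $\{x_j: j \text{ odd}\}$, whose smallest point is $x_1 = B$, with mass $q_1 \ge 1-1/(n^2B^2)$. Since $n > 2/(1-\beta)$ and $B\ge 3$, we have $1/(n^2B^2) < 1-\beta$, hence $q_1 > \beta$. We check the optimality condition $F_\mu(q^-)\le\beta\le F_\mu(q)$ on three ranges of $q$:
\begin{itemize}
\item for $q<B$, $F_\mu(q)=0<\beta$, so the condition fails;
\item at $q=B$, $F_\mu(B^-)=0\le\beta\le q_1\le F_\mu(B)$, so the condition holds;
\item for $q>B$, $F_\mu(q^-)\ge q_1>\beta$, so the condition fails.
\end{itemize}
Hence $q^*_{\mu,\beta}=B$ is the unique maximizer. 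Since $D\ge B$ holds $\mu$-a.s., we get $\min\{D,B\}=B$ a.s., and therefore $\Profit_{\mu,\beta}(B)=B-(1-\beta)B=\beta B$.

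\emph{The distribution $\nu$.} Its smallest support point is $x_0=1$, with $p_0\ge 1-2/n$. The hypothesis $n>2/(1-\beta)$ is exactly what makes $1-2/n>\beta$, so $p_0>\beta$. The same three-range check then singles out $q=1$ as the unique maximizer:
\begin{itemize}
\item for $q<1$, $F_\nu(q)=0<\beta$;
\item at $q=1$, $F_\nu(1^-)=0\le\beta\le F_\nu(1)$;
\item for $q>1$, $F_\nu(q^-)\ge p_0>\beta$.
\end{itemize}
Since $D\ge1$ holds $\nu$-a.s., the optimal profit is $1-(1-\beta)=\beta$.

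The only delicate step is uniqueness, which needs the atom masses to exceed $\beta$ strictly rather than merely weakly. This is why the strict inequality $n>2/(1-\beta)$ is imposed. The remaining hypotheses, $B\ge 4/(\varepsilon(1-\beta))$ and $n>2B$, are not needed for this lemma; presumably they enter the subsequent bound on how badly a common (randomized) order quantity can perform on both instances.
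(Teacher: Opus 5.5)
Your proof is correct and follows essentially the same route as the paper: both rest on the first-order characterization in \Cref{lem:newsvendor-critical-fractile} together with the strict atom bounds $q_1>\beta$ and $p_0>\beta$ that come from $n>2/(1-\beta)$. The only difference is packaging: the paper locates the optimum via \Cref{cor:risk-quantile-separation} and proves uniqueness separately (profit $\beta q$ on $[0,B]$, then a negative right derivative at $B$ plus concavity), whereas you get existence and uniqueness in one step by showing the argmax set is exactly $\{B\}$ (respectively $\{1\}$).
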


\begin{proof}
Since $n>2/(1-\beta)$, we have
$\beta<1-2/n$, so \Cref{cor:risk-quantile-separation} gives
$Q_\mu(\beta)=B$ and $Q_\nu(\beta)=1$. These quantities are optimal by
\Cref{lem:newsvendor-critical-fractile}.

For uniqueness, $\Profit_{\mu,\beta}(q)=\beta q$ on $[0,B]$ and
\[
\partial_+\Profit_{\mu,\beta}(B)=\beta-F_\mu(B)=\beta-q_1<0,
\]
using $q_1> \beta$. Concavity then makes the profit strictly decreasing after $B$. The proof for $\nu$ is identical, using $p_0>\beta$ at the atom $1$. The profit values follow because the supports are contained in $[B,\infty)$ and $[1,\infty)$, respectively.
\end{proof}

For a randomized ordering rule, \Cref{lem:newsvendor-randomization} upper-bounds profit under either demand law by the profit of the deterministic order quantity at its mean $\bar q$. \Cref{fig:newsvendor-incompatible-orders} visualizes the resulting split at $\bar q=\varepsilon B$.

\input{figures/newsvendor_incompatibility}

\begin{theorem}\label{thm:newsvendor}

Fix $\beta\in[1/2,1)$ and $\varepsilon\in(0,1)$. Let $B,n$ be integers satisfying
\[
B\ge \frac{4}{\varepsilon(1-\beta)},
\qquad
n>\max\left\{2B,\frac{2}{1-\beta}\right\},
\]
and let $\mu,\nu$ be the binary separator distributions with parameters $B,n$.
Then every randomized ordering rule $\lambda$ satisfies

\[
\min\left\{
\frac{\Profit_{\mu,\beta}(\lambda)}{\Profit_{\mu,\beta}^*},
\frac{\Profit_{\nu,\beta}(\lambda)}{\Profit_{\nu,\beta}^*}
\right\}
<\varepsilon.
\]
\end{theorem}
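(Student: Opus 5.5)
By \Cref{lem:newsvendor-randomization}, it suffices to bound the deterministic profits at the mean order $\bar q$: we have $\Profit_{\psi,\beta}(\lambda)\le\Profit_{\psi,\beta}(\bar q)$ for $\psi\in\{\mu,\nu\}$. By \Cref{lem:newsvendor-optima}, $\Profit_{\mu,\beta}^*=\beta B$ and $\Profit_{\nu,\beta}^*=\beta$. We then split into two cases according to whether $\bar q<\varepsilon B$ or $\bar q\ge\varepsilon B$. These are exactly the two regions pictured in \Cref{fig:newsvendor-incompatible-orders}.

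\emph{Case $\bar q<\varepsilon B$.} The support of $\mu$ lies in $[B,\infty)$, so for $q\le B$ we have $\Profit_{\mu,\beta}(q)=q-(1-\beta)q=\beta q$. Since $\varepsilon<1$, this gives $\Profit_{\mu,\beta}(\bar q)=\beta\bar q<\varepsilon\beta B$. Hence the $\mu$-ratio is strictly below $\varepsilon$.

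\emph{Case $\bar q\ge\varepsilon B$.} We bound the $\nu$-profit by splitting $\E_\nu[\min\{D,\bar q\}]$ into two parts:
\[
\E_\nu[\min\{D,\bar q\}]\le p_0\cdot 1+\bar q\,(1-p_0)\le 1+\frac{2\bar q}{n},
\]
using $p_0\ge 1-2/n$ from \Cref{thm:binary-separator} and $\min\{D,\bar q\}\le\bar q$ off the atom at $1$. It follows that
\[
\Profit_{\nu,\beta}(\bar q)\le 1+\bar q\Bigl(\frac{2}{n}-(1-\beta)\Bigr).
\]
The hypothesis $n>2B\ge 2\cdot 4/(1-\beta)$ gives $2/n<(1-\beta)/4$, so the coefficient of $\bar q$ is at most $-\tfrac34(1-\beta)$. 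Then, using $\bar q\ge\varepsilon B\ge 4/(1-\beta)$, we get
\[
\Profit_{\nu,\beta}(\bar q)\le 1-3=-2<\varepsilon\beta.
\]
So the $\nu$-ratio is negative, and in particular below $\varepsilon$.

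The main obstacle is this second case. A crude bound such as $\Profit_\nu\le\E_\nu[D]$ fails, because the tail of $\nu$ (for instance the atom of mass $\ge 1/(4n)$ at $nB$) can carry a large mean. The fix is to cap each demand at $\bar q$ and use that the non-$1$ mass is only $2/n$. This is where $n>2B$ is needed: it makes $2/n$ small compared with $1-\beta$, so the cost term dominates. If the constants turn out tight, a fallback is to show only $\Profit_{\nu,\beta}(\bar q)\le 1-\tfrac34(1-\beta)\varepsilon B<0$, which already suffices.
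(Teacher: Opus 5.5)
Your proposal is correct and follows essentially the same route as the paper. Both arguments reduce to the mean order $\bar q$ via Jensen and split at $\varepsilon B$. Below $\varepsilon B$ both use the linear $\mu$-profit $\beta\bar q$; above it both show the $\nu$-profit is negative using $1-p_0\le 2/n$, $n>2B$, and $\varepsilon B\ge 4/(1-\beta)$.

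The only difference is cosmetic. You bound $\E_\nu[\min\{D,\bar q\}]\le 1+2\bar q/n$ directly at $\bar q$. The paper instead first uses that the $\nu$-profit is decreasing on $[1,\infty)$ and then evaluates it at $\varepsilon B<x_2$.
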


\begin{proof}
Let $\bar q=\int_{[0,\infty)}q\,d\lambda(q)$, which is finite by the
definition of a randomized ordering rule. \Cref{lem:newsvendor-randomization} gives
\[
\Profit_{\mu,\beta}(\lambda)\le \Profit_{\mu,\beta}(\bar q),
\qquad
\Profit_{\nu,\beta}(\lambda)\le \Profit_{\nu,\beta}(\bar q).
\]
If $\bar q<\varepsilon B$, then $\bar q<B$ and
\[
\frac{\Profit_{\mu,\beta}(\lambda)}{\Profit_{\mu,\beta}^*}
\le \frac{\beta\bar q}{\beta B}<\varepsilon.
\]
If $\bar q\ge\varepsilon B$, then $\varepsilon B>1$ and the $\nu$-profit is decreasing on $[1,\infty)$. Since $\varepsilon B<B<nB=x_2$,
\[
\Profit_{\nu,\beta}(\lambda)
\le p_0+(1-p_0)\varepsilon B-(1-\beta)\varepsilon B
\le 1+\frac{2\varepsilon B}{n}-(1-\beta)\varepsilon B<0,
\]
by $1-p_0\le2/n$, $n>2B$, and $B\ge4/(\varepsilon(1-\beta))$. Thus the $\nu$ ratio is below $\varepsilon$.
\end{proof}

The binary separator therefore yields lower bounds for decisions governed by one hidden critical quantile. Posted pricing requires a stronger form of ambiguity: the next section uses the $N$-way construction to hide many disjoint threshold regions simultaneously.

%% file: figures/newsvendor_incompatibility.tex
\begin{figure}[H]
\centering
\begin{tikzpicture}[
    paperfig,
    curve/.style={line width=1.2pt}
]
    \def\xone{0.85}
    \def\xeps{3.15}
    \def\xmax{8.25}
    \def\yone{2.15}
    \def\yeps{0.82}

    \fill[figorange!5] (0.05,-0.62) rectangle (\xeps,2.75);
    \fill[figblue!5] (\xeps,-0.62) rectangle (\xmax,2.75);

    \draw[figaxis] (0,-0.02) -- (8.65,-0.02)
        node[right] {$z=\bar q/B$};
    \draw[figaxis] (0,-0.58) -- (0,2.88)
        node[above,align=center] {normalized\\profit};

    \draw[figguide,densely dashed] (\xone,-0.08) -- (\xone,\yone);
    \draw[figguide,densely dashed] (\xeps,-0.58) -- (\xeps,2.7);
    \draw[figguide,densely dashed] (\xmax,-0.08) -- (\xmax,\yone);
    \draw[figguide,densely dashed] (0,\yeps) -- (\xeps,\yeps);
    \draw[figguide,densely dashed] (0,\yone) -- (\xmax,\yone);

    \draw[curve,figorange] (0,0) -- (\xmax,\yone);
    \draw[curve,figblue,dashed] (0,0) -- (\xone,\yone) -- (\xeps,-0.36);
    \draw[curve,figblue,dashed,->] (\xeps,-0.36) -- (3.55,-0.6);

    \fill[figblue] (\xone,\yone) circle (2.5pt);
    \fill[figorange] (\xmax,\yone) rectangle +(5pt,5pt);
    \fill[figorange] (\xeps,\yeps) circle (2.1pt);
    \fill[figblue] (\xeps,-0.36) circle (2.1pt);

    \node[left] at (-0.08,0) {$0$};
    \node[left] at (-0.08,\yeps) {$\varepsilon$};
    \node[left] at (-0.08,\yone) {$1$};
    \node[below,align=center] at (\xone,-0.08)
        {$1/B$\\[-1pt]\scriptsize $q_{\nu,\beta}^*=1$};
    \node[below] at (\xeps,-0.08) {$\varepsilon$};
    \node[below,align=center] at (\xmax,-0.08)
        {$1$\\[-1pt]\scriptsize $q_{\mu,\beta}^*=B$};

    \node[fill=white,inner sep=1.5pt] at (6.15,1.55) {$\mu$};
    \node[fill=white,inner sep=1.5pt] at (1.72,1.25) {$\nu$};

    \node[align=center] at (1.7,2.58)
        {$\bar q<\varepsilon B$\\[-1pt]$\mu\text{-ratio}<\varepsilon$};
    \node[align=center] at (5.7,2.58)
        {$\bar q\ge\varepsilon B$\\[-1pt]$\nu\text{-profit}<0$};

\end{tikzpicture}
  \caption{\footnotesize Newsvendor: the two moment-equivalent demand laws require incompatible mean order quantities.}
\label{fig:newsvendor-incompatible-orders}
\end{figure}

%% file: sections/revenue.tex
\section{Threshold Ambiguity: Revenue Maximization and Randomized Pricing}\label{sec:revenue}

In this section, we show that the constructions in \Cref{sec:impossibility} yield strong impossibility results for revenue maximization. To be more precise, posted-price revenue is a threshold objective since a price performs well only if it lies in the right tail region of the value distribution. The constructions in \Cref{sec:impossibility} then create moment-equivalent distributions whose good threshold regions are far apart.

\subsection{Deterministic Posted Prices}

\begin{theorem}\label{thm:revenue-deterministic}
For every $\varepsilon\in(0,1)$, there exist two distributions $\mu,\nu$ that agree on all moments such that every deterministic price $v>0$ satisfies
\[
\min\left\{
\frac{\Rev_\mu(v)}{\Rev_\mu^*},
\frac{\Rev_\nu(v)}{\Rev_\nu^*}
\right\}
<\varepsilon.
\]
\end{theorem}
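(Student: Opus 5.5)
The plan is to apply \Cref{thm:binary-separator} with large parameters and split the price axis at the support points. With large parameters, $\mu$ is essentially a point mass at $B$. Meanwhile $\nu$ has most of its mass at $1$ but a $1/\Theta(n)$ atom at $nB$, so its optimal price is near $nB$. Thus the good price region of $\mu$ is around $B$ and that of $\nu$ is around $nB$, and no single price serves both.

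First I would record benchmark lower bounds. For $\mu$, pricing at $B$ gives
\[
\Rev_\mu^*\ge \Rev_\mu(B)=B\,\P_\mu(X\ge B)=B,
\]
since $\mu$ is supported on $[B,\infty)$. For $\nu$, pricing at $nB=x_2$ gives
\[
\Rev_\nu^*\ge nB\,p_2\ge B/4.
\]

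Second, I would bound the tail contributions using $|w_j|\le 8(nB)^2/x_j^2$ and $W\ge n$. For $j\ge 3$ this gives
\[
x_j\,q_j,\ x_j\,p_j\le \frac{8nB^2}{x_j}.
\]
Because $x_3=(nB)^B$ and the nodes grow doubly exponentially, the tail sums
\[
T_k:=\sum_{j\ge k} x_j\max\{p_j,q_j\}
\]
are at most a constant times $nB^2/(nB)^B$ for $k\ge3$. This is $o(1)$ as $n,B\to\infty$. The key elementary fact is that, for $v\in(x_{k-1},x_k]$, monotonicity of the $x_j$ gives
\[
v\,\P(X\ge v)\le \sum_{j\ge k} x_j\,\P(X=x_j).
\]
So prices beyond a support point earn at most the corresponding tail sum.

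Third comes a case analysis on the price $v$. If $v\le 1$, then $\Rev_\nu(v)\le 1$, so the $\nu$-ratio is at most $4/B$. If $v\in(1,B]$, then $\Rev_\nu(v)\le v(1-p_0)\le 2B/n$, so the $\nu$-ratio is at most $8/n$. If $v\in(B,nB]$, then $\mu$ has no atoms in $(B,x_3)$, so $\Rev_\mu(v)\le T_3$ and the $\mu$-ratio is at most $T_3/B$. If $v>nB$, then $\nu$ has no atoms in $(nB,x_4)$, so $\Rev_\nu(v)\le T_4$ and the $\nu$-ratio is at most $4T_4/B$. Choosing $B$ and $n$ large enough that $4/B$, $8/n$, and $T_3/B+4T_4/B$ are all below $\varepsilon$ finishes the proof, for instance $B>4/\varepsilon$ and $n>8/\varepsilon$.

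The only step needing care is the tail estimate in the second step. One must check that $8nB^2/x_j$ summed over $j\ge3$ is genuinely small, which holds since $x_j\ge (nB)^{B^{j-2}}$ decays superexponentially in the reciprocal. One must also make sure the same bound covers both $p_j$ and $q_j$ up to the normalization $W$. Everything else is direct reading of the atom bounds in \Cref{thm:binary-separator}.
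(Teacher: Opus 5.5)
Your proof is correct and matches the paper's argument: the same benchmark bounds $\Rev_\mu^*\ge B$ and $\Rev_\nu^*\ge B/4$, the same cases $v\le 1$ and $1<v\le B$ charged to $\nu$, and the tail estimate $\frac1W\sum_{j\ge3}x_j|w_j|\le 16nB^2/x_3$ charged to $\mu$. The only differences are cosmetic. Your extra case $v>nB$, charged to $\nu$, is unnecessary, because the same $\mu$-tail bound already covers every $v>B$. The paper also takes $n\ge 2B$ where you take $n>8/\varepsilon$.
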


\begin{proof}
Choose integers $B,n$ with $B\ge\lceil4/\varepsilon\rceil+1$ and $n\ge2B$, and let $\mu,\nu$ be the binary separator distributions. Then
\[
\Rev_\mu^*\ge \Rev_\mu(B)=B,
\qquad
\Rev_\nu^*\ge \Rev_\nu(nB)=nB\,p_2\ge B/4.
\]
Fix a price $v>0$. If $v\le1$, then $\Rev_\nu(v)\le1$, so $\Rev_\nu(v)/\Rev_\nu^*<\varepsilon$. If $1<v\le B$, the support of $\nu$ has no atom in $(1,nB)$ and
\[
\Rev_\nu(v)=v(1-p_0)\le \frac{2B}{n},
\]
so $\Rev_\nu(v)/\Rev_\nu^*\le8/n<\varepsilon$. If $v>B$, then the atom $B$ under $\mu$ does not buy, and
\[
\Rev_\mu(v)\le
\sum_{\substack{j\ge3\\ j\text{ odd}}}x_jq_j
\le
\frac1W\sum_{j\ge3}x_j|w_j|
\le
\frac{16nB^2}{x_3}.
\]
Since $x_3=(nB)^B$, this is an $\varepsilon$ fraction of $\Rev_\mu^*\ge B$ for the chosen parameters.
\end{proof}

\begin{remark}
The binary construction only rules out a single deterministic price. A lottery that mixes the two distribution-specific good prices can obtain a constant factor for this pair because revenue is nonnegative. The $N$-way construction below therefore addresses randomized rules by creating many disjoint good price regions with identical moment disclosures.
\end{remark}

\subsection{Randomized Posted Prices}

For integers $N,B\ge2$, let $n_h$ be the unique integer in $\{N-1,\dots,2N-2\}$ with $n_h\equiv h\pmod N$, and define
\[
I_h:=(B^{n_h-1},B^{n_h}].
\]
Also set
\[
C_N:=\sum_{j\notin\{N-1,\dots,2N-2\}}2^{2j-e(j)-(N-2)},
\]
where $e(j)$ is defined in \Cref{thm:nway-separator}.
This constant is finite; we establish convergence in the proof of the following localization bound.

\begin{lemma}\label{lem:N-class-revenue-localization}
Fix integers $N,B\ge2$, and let $\mu_0,\dots,\mu_{N-1}$ be the distributions from \Cref{thm:nway-separator}. For every $h\in\{0,\dots,N-1\}$:
\[
\Rev_{\mu_h}^*\ge \frac{B^{N-1}}{4},
\]
and every $v\notin I_h$ satisfies
\[
\Rev_{\mu_h}(v)\le (2^{2N-2}+C_N)B^{N-2}.
\]
\end{lemma}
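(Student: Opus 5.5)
The plan is to read everything off the atom masses $\mu_h(\{B^j\})=w_j/W$ in \Cref{thm:nway-separator}, using $W\in[1,4)$, the bounds $B^{-e(j)}\le w_j\le 2^jB^{-e(j)}$, and the exponent bound $j-e(j)\le N-1$, with equality exactly on the window $\{N-1,\dots,2N-2\}$.

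\emph{Lower bound.} The window $\{N-1,\dots,2N-2\}$ consists of $N$ consecutive integers, so it contains exactly one index in each residue class. For class $h$ this index is $n_h$, and it is an atom of $\mu_h$. Post the price $v=B^{n_h}$. Then
\[
\Rev_{\mu_h}^*\ge B^{n_h}\,\mu_h(\{B^{n_h}\})\ge \frac{B^{n_h}B^{-e(n_h)}}{W}=\frac{B^{N-1}}{W}\ge \frac{B^{N-1}}{4},
\]
because $n_h-e(n_h)=N-1$ and $W<4$.

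\emph{Upper bound for $v\notin I_h$.} The idea is that revenue at $v$ is controlled by the atoms at or above $v$. I would split into two cases.
\begin{enumerate}[label=(\roman*),leftmargin=*]
\item If $v>B^{n_h}$, every buying atom has index $j>n_h$. Since all its atoms exceed $v$, this gives $\Rev_{\mu_h}(v)\le\sum_{j>n_h,\ j\equiv h}B^j w_j/W$, which is at most $\sum_{j}B^jw_j$ over $j\equiv h$, $j\neq n_h$.
\item If $v\le B^{n_h-1}$, then $v$ times the probability of the atoms $B^j\ge v$ with $j\le n_h-1$ is at most $\sum_{j\le n_h-1,\ j\equiv h}B^jw_j$. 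The atoms with $j\ge n_h$ are each at least $B^{n_h}\ge B v$, so $v$ times their mass is at most $\sum_{j\ge n_h}B^{j-1}w_j$. In particular the $n_h$ term contributes at most $B^{N-2}$. Altogether,
\[
\Rev_{\mu_h}(v)\le B^{N-2}\cdot(\text{const})+\sum_{j\equiv h,\ j\ne n_h}B^jw_j .
\]
\end{enumerate}
So in both cases it suffices to bound $\sum_{j\notin\{n_h\}}B^{j}w_j\le\sum 2^jB^{j-e(j)}$ over off-window indices, plus the contributions from the window.

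\emph{Off-window indices.} For $j$ outside $\{N-1,\dots,2N-2\}$ we have $j-e(j)\le N-2$. Note that $e(j)$ grows quadratically in $j$, so $j-e(j)\to-\infty$ quadratically. This gives the pointwise bound $2^jB^{j-e(j)}\le 2^{j}\,2^{\,j-e(j)-(N-2)}B^{N-2}$, valid because $B\ge2$ and the exponent $j-e(j)-(N-2)$ is nonpositive. The right-hand side is exactly the summand of $C_N$ times $B^{N-2}$. This pointwise bound is the key step: it matches $C_N$ and yields convergence, since $2j-e(j)\to-\infty$ quadratically.

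\emph{Window indices.} There are at most $N$ such indices, and each $j$ satisfies $j\le 2N-2$, so $w_j\le 2^{2N-2}B^{-e(j)}$. For a window index $j\ne n_h$ in the same class, case (i) is vacuous because the window contains only $n_h$ from class $h$. In case (ii) the only window term is $n_h$ itself, which contributes $B^{n_h-1}w_{n_h}\le 2^{2N-2}B^{N-2}$.

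\emph{Main obstacle.} The main difficulty is the bookkeeping in case (ii), where $v\le B^{n_h-1}$. There I must check that $v$ times the mass of atoms above $v$ costs at most one factor $B^{-1}$ relative to each atom's value, and that the lower atoms below $n_h$ are off-window. Using $W\ge1$, the total is then at most $(2^{2N-2}+C_N)B^{N-2}$.
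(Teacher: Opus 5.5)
Your proposal is correct and follows essentially the paper's argument. The lower bound comes from the single window atom $B^{n_h}$, using $n_h-e(n_h)=N-1$ and $W<4$. The upper bound combines the pointwise estimate $B^jw_j\le 2^{2j-e(j)-(N-2)}B^{N-2}$ on the off-window indices of class $h$ with the lone window atom $B^{n_h}$ priced at $v\le B^{n_h-1}$; the paper merely splits off $v\le B^{N-2}$ via $\Rev_{\mu_h}(v)\le v$, which your per-atom bound absorbs. One small slip: in case (ii) the $n_h$ term is bounded by $B^{n_h-1}w_{n_h}\le 2^{n_h}B^{N-2}$, not by $B^{N-2}$ as you first assert, but your later window paragraph uses the correct bound $2^{2N-2}B^{N-2}$, so the conclusion stands.
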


\begin{proof}

First verify that $C_N$ is finite. Write $j=L(N-1)+b$, with
$L\in\N_0$ and $b\in\{0,\dots,N-2\}$. The exponent in its defining
series is $A_L+(2-L)b$, where
$A_L:=(N-1)(5L-L^2)/2-(N-2)$. For $L\ge3$, the $L$-th block is
bounded by $(N-1)2^{A_L}$, and
$A_{L+1}-A_L=(N-1)(2-L)\le-1$. These bounds form a geometric tail:
\[
\sum_{L=3}^{\infty}\sum_{b=0}^{N-2}2^{A_L+(2-L)b}
\le (N-1)2^{A_3}\sum_{k=0}^{\infty}2^{-k}<\infty.
\]
The finitely many terms with $L\le2$ cause no difficulty, so $C_N<\infty$.

For the lower bound, \Cref{thm:nway-separator} gives $w_{n_h}\ge B^{-e(n_h)}$ and $n_h-e(n_h)=N-1$, while $W<4$. Hence
\[
\Rev_{\mu_h}^*
\ge B^{n_h}\mu_h(\{B^{n_h}\})
\ge \frac{B^{N-1}}{4}.
\]
For the upper bound, first sum the contribution of all exponents outside the special block. By \Cref{thm:nway-separator},
\[
\sum_{j\notin\{N-1,\dots,2N-2\}} B^j\mu_h(\{B^j\})
\le
C_NB^{N-2}.
\]
Now fix $v\notin I_h$. If $v\le B^{N-2}$, then $\Rev_{\mu_h}(v)\le B^{N-2}$. If $B^{N-2}<v\le B^{n_h-1}$, the only special-block atom that can contribute is $B^{n_h}$, and
\[
v\mu_h(\{B^{n_h}\})
\le B^{n_h-1}w_{n_h}
\le 2^{2N-2}B^{N-2}.
\]
If $v>B^{n_h}$, no special-block atom contributes. Combining these cases gives the claim.
\end{proof}

\Cref{fig:pricing-localization} illustrates how the disjoint intervals $I_h$ convert revenue localization into a lower bound against every randomized price.
\input{figures/pricing_localization}

We now prove this section's main result:

\begin{theorem}\label{thm:randomized-pricing-all-moments}

For every $\varepsilon>0$, there exist an integer $N\ge2$ and
moment-equivalent distributions $\mu_0,\dots,\mu_{N-1}$ such that, for every
randomized pricing strategy $\lambda$, there exists an index
$h\in\{0,\dots,N-1\}$ satisfying
\[
\frac{\Rev_{\mu_h}(\lambda)}{\Rev_{\mu_h}^*}<\varepsilon.
\]

\end{theorem}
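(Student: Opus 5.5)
Given $\varepsilon>0$, I would take the family $\mu_0,\dots,\mu_{N-1}$ from \Cref{thm:nway-separator} and fix $N$ first, then $B$. I would combine the localization bound of \Cref{lem:N-class-revenue-localization} with an averaging argument over the index $h$. The key structural fact is that the intervals $I_h=(B^{n_h-1},B^{n_h}]$ are pairwise disjoint. This holds because the $n_h$ are the distinct integers $N-1,\dots,2N-2$, so the $I_h$ are consecutive disjoint $B$-adic shells. Hence for any randomized pricing rule $\lambda$ we have $\sum_{h}\lambda(I_h)\le 1$, and some index $h$ has $\lambda(I_h)\le 1/N$.

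For that index I would split the revenue integral as
\[
\Rev_{\mu_h}(\lambda)=\int_{I_h}\Rev_{\mu_h}(v)\,d\lambda(v)+\int_{(0,\infty)\setminus I_h}\Rev_{\mu_h}(v)\,d\lambda(v)
\le \lambda(I_h)\,\Rev_{\mu_h}^*+(2^{2N-2}+C_N)B^{N-2}.
\]
Here the second term uses the off-interval bound of \Cref{lem:N-class-revenue-localization}. Dividing by $\Rev_{\mu_h}^*\ge B^{N-1}/4$, again from that lemma, gives
\[
\frac{\Rev_{\mu_h}(\lambda)}{\Rev_{\mu_h}^*}\le \frac1N+\frac{4(2^{2N-2}+C_N)}{B}.
\]

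I would then choose $N>2/\varepsilon$, so that $1/N<\varepsilon/2$. With $N$ now fixed, $C_N$ is a finite constant independent of $B$. This is exactly why the lemma bounds it by a $B$-free series, and it is why $N$ must be fixed before $B$. I would then choose $B>8(2^{2N-2}+C_N)/\varepsilon$, so the second term is below $\varepsilon/2$. Since the $\mu_h$ are moment-equivalent by \Cref{thm:nway-separator}, this proves the statement.

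The main subtlety is the order of quantifiers: the index $h$ depends on $\lambda$, but the family itself must not. This works because the bound above is uniform over all $h$ once $N$ and $B$ are fixed. A second point to check is measurability and integrability. Both are immediate because $0\le\Rev_{\mu_h}(v)\le \E_{\mu_h}[X]<\infty$, so the integrals are well defined and the splitting is legitimate. I do not expect a genuine obstacle, since the heavy lifting sits in \Cref{lem:N-class-revenue-localization}.
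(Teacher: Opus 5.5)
Your proposal is correct and follows the paper's proof essentially step for step. You pigeonhole over the disjoint intervals $I_h$ to get $\lambda(I_h)\le 1/N$, split the revenue integral, and use \Cref{lem:N-class-revenue-localization} both for the off-interval bound and for the lower bound on $\Rev_{\mu_h}^*$; you then fix $N$ before choosing $B$ (you should also take $B\ge 2$ so that the construction applies).
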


\begin{proof}
Choose an integer $N\ge2$ so that $1/N<\varepsilon/2$, and then choose an integer $B\ge2$ so large that $4(2^{2N-2}+C_N)/B \le \varepsilon/2$. Let $\mu_0,\dots,\mu_{N-1}$ be the distributions from \Cref{thm:nway-separator} with these parameters $N$ and $B$. Consider now any randomized pricing strategy $\lambda$. Since the intervals $I_0,\dots,I_{N-1}$ are pairwise disjoint, $\sum_{h=0}^{N-1}\lambda(I_h)\le 1$, and so there exists some $h_*\in\{0,\dots,N-1\}$ such that $\lambda(I_{h_*})\le 1/N$. We next split the expected revenue of $\lambda$ against $\mu_{h_*}$ into prices inside and outside $I_{h_*}$:
\begin{align*}
    \Rev_{\mu_{h_*}}(\lambda) &= \int_{I_{h_*}} \Rev_{\mu_{h_*}}(v) d\lambda(v) + \int_{(0,\infty)\setminus I_{h_*}} \Rev_{\mu_{h_*}}(v) d\lambda(v) \\
    &\le \lambda(I_{h_*})\Rev_{\mu_{h_*}}^* + (2^{2N-2}+C_N)B^{N-2},
\end{align*}
where the last inequality holds by using the trivial bound $\Rev_{\mu_{h_*}}(v)\le \Rev_{\mu_{h_*}}^*$ on $I_{h_*}$ and by \Cref{lem:N-class-revenue-localization} outside $I_{h_*}$. We conclude by using again \Cref{lem:N-class-revenue-localization} to lower bound $\Rev_{\mu_{h_*}}^*$,
\[
    \frac{\Rev_{\mu_{h_*}}(\lambda)}{\Rev_{\mu_{h_*}}^*} \le \lambda(I_{h_*}) + \frac{(2^{2N-2}+C_N)B^{N-2}}{B^{N-1}/4} \le \frac{1}{N}+\frac{4(2^{2N-2}+C_N)}{B} < \frac{\varepsilon}{2}+\frac{\varepsilon}{2}=\varepsilon.
\]
This proves the claim.
\end{proof}

\paragraph{Relation to earlier many-scale pricing lower bounds.}
Once \Cref{lem:N-class-revenue-localization} has localized the revenue of each \(\mu_h\), the final averaging step is very close to Claim A.1 of Babaioff et al.~\cite{BabaioffBDS17}: prices are divided into geometrically separated intervals, one selects an interval receiving little probability under the price lottery, and prices outside that interval are charged to a multiplicative revenue loss.  The difference lies in the hard instances. Babaioff et al.\ may use unrelated value distributions---in the relevant claim, deterministic values at the selected scales---whereas our distributions \(\mu_0,\ldots,\mu_{N-1}\) are fixed members of one common full moment class.  The role of \Cref{thm:nway-separator} and \Cref{lem:N-class-revenue-localization} is precisely to preserve all moments while retaining the off-interval revenue bound needed by that argument.
A continuous version of the same many-scale obstruction appears in \cite{GiannakopoulosPocasTsigonias2022}, who mix two-point laws with a common mean but different monopoly prices so that the resulting mixture is a truncated equal-revenue distribution.  In contrast, their higher moments vary across the hard family, whereas all members of our finite family agree on every moment.

The $N$-way construction yields the randomized-pricing lower bound by hiding the appropriate threshold region among many disjoint alternatives. Its first-atom concentration property serves a different purpose in the next section: it embeds hard deterministic configurations into moment-equivalent secretary instances.

%% file: figures/pricing_localization.tex
\begin{figure}[t]
\centering
\begin{tikzpicture}[
    paperfig,
    interval/.style={
        draw=black!55,
        fill=white,
        minimum width=1.55cm,
        minimum height=0.78cm,
        inner sep=2pt
    }
]
    \matrix (regions) [
        matrix of nodes,
        nodes={interval,anchor=center},
        column sep=-\pgflinewidth
    ] {
        $I_0$ & $I_1$ & $\cdots$ & $I_{h_*}$ & $\cdots$ & $I_{N-1}$ \\
    };

    \node[above=4pt of regions-1-1] {$\lambda(I_0)$};
    \node[above=4pt of regions-1-2] {$\lambda(I_1)$};
    \node[above=4pt of regions-1-4,text=figpurple!85!black]
        {$\lambda(I_{h_*})\le1/N$};
    \node[above=4pt of regions-1-6] {$\lambda(I_{N-1})$};

    \fill[pattern=north east lines,pattern color=black!45]
        (regions-1-4.south west) rectangle (regions-1-4.north east);
    \draw[figselected]
        (regions-1-4.south west) rectangle (regions-1-4.north east);
    \node[fill=white,inner sep=1pt] at (regions-1-4.center) {$I_{h_*}$};

    \draw[decorate,decoration={brace,amplitude=4pt},figgray]
        ($(regions-1-1.north west)+(0,0.62)$) --
        ($(regions-1-6.north east)+(0,0.62)$)
        node[midway,above=5pt]
        {$\displaystyle \sum_{h=0}^{N-1}\lambda(I_h)\le1$};

    \node[figbox] (inside)
        at ($(regions.south)+(-2.75,-1.45)$)
        {$v\in I_{h_*}$: ratio $\le1$};
    \node[figbox] (outside)
        at ($(regions.south)+(2.75,-1.45)$)
        {$v\notin I_{h_*}$: ratio $\le\frac{4(2^{2N-2}+C_N)}{B}$};
    \node[figbox] (bound)
        at ($(regions.south)+(0,-3.05)$)
        {$\displaystyle
          \frac{\Rev_{\mu_{h_*}}(\lambda)}
               {\Rev_{\mu_{h_*}}^*}
          \le \lambda(I_{h_*})+\frac{4(2^{2N-2}+C_N)}{B}
          \le \frac1N+\frac{4(2^{2N-2}+C_N)}{B}$};

    \draw[figarrow]
        (regions-1-4.south) -- (inside.north);
    \draw[figarrow]
        (regions-1-4.south) -- (outside.north);
    \draw[figarrow]
        (inside.south) -- (bound.north west);
    \draw[figarrow]
        (outside.south) -- (bound.north east);
\end{tikzpicture}
\caption{Randomized pricing: one disjoint region receives lottery mass at most $1/N$.}
\label{fig:pricing-localization}
\end{figure}

%% file: sections/secretary.tex
\section{Selection Ambiguity: The Secretary Problem}\label{sec:secretary}

We study the labeled cardinal-value model defined in
\Cref{sec:preliminaries}. When the independent distributions are known in
full, the worst-case finite-horizon value equals the full-information
i.i.d.\ value \cite{EsfandiariEtAl2020,Nuti2022}; the latter converges to roughly
$0.5801$ \cite{GilbertMosteller1966}. We show that replacing full
distributional knowledge by exact moment sequences lowers the robust value to
$s_m$, which converges to $1/e$ (\Cref{cor:secretary-asymptotic}). Thus even
the observed cardinal values cannot
recover the advantage provided by full distributional information.

The upper bound uses the following consequence of the $N$-way construction:
each residue-class distribution can be concentrated near a different
designated value while all distributions agree on all moments.

\begin{corollary}\label{cor:first-atom-concentration}
Fix an integer $N\ge2$, and for each integer $B\ge2$ let
$\mu_0,\dots,\mu_{N-1}$ be the distributions from
\Cref{thm:nway-separator}. Then
\[
\min_{0\le h\le N-1}\mu_h(\{B^h\})\longrightarrow 1
\qquad\text{as }B\to\infty.
\]
\end{corollary}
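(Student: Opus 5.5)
We would prove this directly from the coefficient estimates in \Cref{thm:nway-separator}. The smallest exponent in residue class $h$ is $j=h$, so $\mu_h(\{B^h\})=w_h/W$ and
\[
1-\mu_h(\{B^h\})=\frac{1}{W}\sum_{\substack{j\equiv h \ (\mathrm{mod}\ N)\\ j>h}} w_j .
\]
It therefore suffices to show two things: $w_h\to 1$ as $B\to\infty$ for each fixed $h\le N-1$, and the tail sum over $j\ge h+N$ in class $h$ is $O(1/B)$ uniformly. The tail bound then also gives $w_h/W\to1$, since $W=\sum_{j\equiv h} w_j$ by normalization.

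For the first atom, take $0\le h\le N-2$. Then $L=0$ and $b=h$, so $e(h)=0$ and the bounds give $1\le w_h\le 2^h$. We will sharpen this using the product formula. The coefficient of $x^h$ in $G_N$ equals $\sum \prod_m B^{-m u_m}$ over tuples $(u_m)$ with $0\le u_m\le N-1$ and $\sum u_m=h$. The tuple $u_0=h$, all others $0$, contributes exactly $1$. Every other tuple has some $u_m>0$ with $m\ge1$, so it contributes at most $B^{-1}$. The number of tuples with weighted exponent $\sum m u_m = s$ is bounded by a quantity $c_h(s)$ independent of $B$ and growing at most polynomially in $s$. Hence $w_h\le 1+\sum_{s\ge1}c_h(s)B^{-s}=1+O_h(1/B)$.

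The case $h=N-1$ needs care. Here $L=1$ and $b=0$, so $e(N-1)=0$ as well. The tuple $u_0=N-1$ is admissible because $u_0\le N-1$, so the same argument gives $w_{N-1}=1+O(1/B)$.

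For the tail, we need $e(j)\ge1$ for every $j\ge N$, with enough growth to sum the series. When $j\ge N$, either $L\ge2$, giving $e(j)\ge N-1\ge1$, or $L=1$ with $b\ge1$, giving $e(j)=b\ge1$. Using $w_j\le 2^jB^{-e(j)}$, the tail is bounded by $\sum_{j\ge N}2^jB^{-e(j)}$. Since $e(j)$ grows quadratically in $L$, this series converges for all $B\ge2$. For $B\ge 4$, each term is at most $B^{-1}\cdot 2^j(2/B)^{e(j)-1}$, roughly, so the whole sum is $O_N(1/B)$ by dominated convergence: every term tends to $0$, and the terms are dominated by the $B=2$ (or $B=4$) series. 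The same bound also shows $W\to1$.

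The main obstacle is making the tail estimate uniform in $B$. The crude bound $2^j$ has to be beaten by $B^{-e(j)}$, and for small $L$ the exponent $e(j)$ is only linear in $b$. We handle this with the dominated-convergence comparison to a fixed base $B_0$: for $B\ge B_0$ and $e(j)\ge1$,
\[
2^jB^{-e(j)}\le 2^jB_0^{-e(j)}\cdot\frac{B_0}{B},
\]
which is summable by the finiteness argument already given for $C_N$ in \Cref{lem:N-class-revenue-localization}. This gives an $O(1/B)$ rate. Since $N$ is fixed, taking the minimum over $h$ costs nothing.
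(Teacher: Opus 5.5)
Your proof is correct, but it is organized differently from the paper's. You work class by class. You write $\mu_h(\{B^h\})=w_h/(w_h+T_h)$, where $T_h$ is the tail of residue class $h$. You then bound $T_h\le (2/B)\sum_{j\ge N}2^{j-e(j)}$ by comparison with base $B_0=2$, which needs the quadratic growth of $e(j)$ to make the comparison series summable.

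You also prove $w_h=1+O_h(1/B)$ by counting tuples, but that step is superfluous. Since $w_h\ge1$, you already have $\mu_h(\{B^h\})\ge 1/(1+T_h)$.

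The paper uses the fact that the normalizer $W$ is the same for every residue class. Hence $\mu_h(\{B^h\})=w_h/W\ge 1/W$ for all $h$ at once, and it suffices to show $W\to1$. The paper evaluates $W$ on the zero class alone. There $Nk-e(Nk)\le N-1$ gives $e(Nk)\ge k$, hence $w_{Nk}\le(2^N/B)^k$ and $W\le 1/(1-2^N/B)$.

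So the paper gets an explicit geometric bound with no auxiliary summability argument and no per-class tail estimates. Your route gives a separate uniform $O_N(1/B)$ tail bound for each class, valid for all $B\ge2$, with a somewhat worse constant of order $N2^N$ versus $2^N$. It pays for this with the extra comparison series and the unneeded estimate on $w_h$.
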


\begin{proof}

Fix $N$. For each $h\in\{0,\dots,N-1\}$, choosing the monomial $x^h$
from the $m=0$ factor and the constant term from every other factor
shows that $w_h\ge1$. It therefore suffices to show that the common
normalizer tends to one.

Use the zero residue class to write $W=\sum_{k\ge0}w_{Nk}$.
The exponent bounds in \Cref{thm:nway-separator} give $e(Nk)\ge k$:
this is immediate for $k=0$, and for $k\ge1$ it follows from
$Nk-e(Nk)\le N-1$. Thus, for $B\ge2^{N+1}$,
\[
1\le W=\sum_{k\ge0}w_{Nk}
\le\sum_{k\ge0}\left(\frac{2^N}{B}\right)^k
=\frac{1}{1-2^N/B}\longrightarrow1.
\]
Since $\mu_h$ is a probability measure, we have, uniformly in $h$,
\[
\frac1W\le\mu_h(\{B^h\})=\frac{w_h}{W}\le1.
\]
Taking the minimum over $h$ proves the claim.

\end{proof}

The next lemma uses the finite Ramsey theorem in the following sense: on a sufficiently large
ordered grid, every cardinal-value policy has an $m$-point subgrid on which
cardinal magnitudes can be erased up to an arbitrarily small loss.

\begin{lemma}\label{lem:ramsey-erasure}
Fix an integer $m\ge1$ and $\varepsilon>0$. There exists an integer
$N_0=N_0(m,\varepsilon)\ge m$ such that the following holds. For every integer
$N\ge N_0$, every ordered grid
\[
v_0<v_1<\cdots<v_{N-1},
\]
and every randomized cardinal-value secretary policy $\pi$ on this grid, there are distinct indices $h_1,\dots,h_m\in\{0,\dots,N-1\}$ and an assignment of the values $v_{h_1},\dots,v_{h_m}$ to the $m$ labels such that, under uniformly random arrival order,
\[
\P_\pi(\text{selects the largest assigned value})\le s_m+\varepsilon.
\]
\end{lemma}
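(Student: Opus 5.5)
We follow the classical Ramsey reduction of cardinal secretary policies to ordinal ones, in the style of Moser/Samuels-type arguments. First we restrict attention to a canonical family of instances: choose $m$ grid indices $h_1<\dots<h_m$, and assign $v_{h_{\sigma(1)}},\dots,v_{h_{\sigma(m)}}$ to labels $1,\dots,m$ for a permutation $\sigma$. We also fix a discretization of the policy's randomization. At each step $t$, the policy sees a history consisting of the labels and values arrived so far, and it stops with some probability. The idea is to record these stopping probabilities as a function of the \emph{set} of grid indices used, rounded to a finite mesh of width $\delta=\varepsilon/m^2$ (or so). This gives a finite coloring: to each $m$-subset $S=\{h_1<\dots<h_m\}$ of $\{0,\dots,N-1\}$, we assign the vector of rounded stopping probabilities over all permutations $\sigma$, all arrival orders, and all prefixes. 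The rounded probabilities are indexed by relative ranks within $S$, so the data is finite. The number of colors $K=K(m,\varepsilon)$ is finite and independent of the grid.

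Next we apply the finite Ramsey theorem. Let $N_0$ be the Ramsey number guaranteeing a monochromatic subset of size $2m$ (or $M$ large, to be chosen) for $m$-subsets with $K$ colors. On such a monochromatic set $T$, every $m$-subset $S\subseteq T$ induces the same rounded stopping rule. The key point is that this rule depends only on relative ranks within $S$ and on labels. However, a prefix of a history on $S$ involves only the values seen so far. Strictly, we should therefore color prefix-subsets of every size $k\le m$ with the stopping probability given the relative order pattern. Iterating Ramsey over $k=1,\dots,m$, or coloring $m$-sets by the full tuple of prefix data, yields a set $T$ on which the stopping probability at each history is, up to $\delta$, a function only of labels, arrival positions, and the relative ranks of observed values. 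That is, it is an ordinal policy, up to $\delta$ per decision. This consistency across different $m$-subsets containing the same prefix is the subtle point and the main obstacle. I would handle it by coloring each $k$-subset $U$ (with $k\le m$) by the rounded stopping probabilities for histories whose observed value set is $U$, across all label/order patterns. Taking $T$ monochromatic simultaneously for all $k$ by nested Ramsey then makes the decision on a prefix independent of which actual values from $T$ appear.

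Finally we average. Choose $S\subseteq T$ of size $m$, and assign its values to labels by a uniformly random permutation $\sigma$, independent of the arrival order. Under this randomization, the induced ordinal policy (with labels now carrying no information, since the assignment is uniform) faces the classical no-information secretary problem, so its success probability is at most $s_m$. The rounding error accumulates to at most $m\delta\le\varepsilon$ over the at most $m$ stopping decisions. Hence the average over $\sigma$ is at most $s_m+\varepsilon$, and some deterministic assignment $\sigma$ attains at most this average. This gives the indices $h_1,\dots,h_m$ and the assignment claimed in the lemma. The bound $N_0\ge m$ is automatic from the Ramsey numbers.
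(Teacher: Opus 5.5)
Your proposal is correct and follows essentially the same route as the paper: quantize the behavioral stopping probabilities to mesh $\delta$, color each $k$-subset ($k\le m$) of grid indices by the quantized stopping data of all histories using exactly those values (over all arrival orderings and label tuples), apply finite Ramsey iteratively to obtain a homogeneous $m$-set on which $\pi$ is within $\delta$ per decision of an ordinal policy, and then average over a uniformly random assignment of values to labels to invoke classical optimality $s_m$ and extract a deterministic assignment. Your first idea of coloring only $m$-sets would need the extra positional argument you flag, but the nested coloring over all prefix sizes that you settle on is exactly what the paper does.
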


\begin{proof}
It is harmless to allow $\pi$ to observe labels in addition to values; proving the claim for this stronger policy class also proves it without labels. Represent $\pi$ behaviorally: after each finite history of observed labels and values, it has a conditional stopping probability in $[0,1]$.

Fix a mesh size $\delta>0$, to be chosen as a function of $m$ and
$\varepsilon$, and quantize every stopping probability to a multiple of
$\delta$. For each $t=1,\dots,m$, color every $t$-element subset of
$\{0,\dots,N-1\}$ by the finite vector of quantized stopping probabilities
over all length-$t$ histories using exactly that subset of value indices. The
coordinates range over every ordering of those values and every ordered
$t$-tuple of distinct labels.

Apply the finite Ramsey theorem iteratively for
$t=1,\dots,m$. For all sufficiently large $N$, this gives an $m$-element
subset $H$ that is simultaneously homogeneous for every history length. On
histories using values from $H$, the quantized behavior of $\pi$ depends only
on time, the ordered label history, and the relative order pattern of the
observed values, rather than on their grid locations. Couple $\pi$ with a
randomized policy $\widetilde\pi$ having exactly this reduced dependence,
using the same uniform randomization at every decision. Their stopping
probabilities differ by at most $\delta$ at each of at most $m$ decisions, so
their success probabilities differ by at most $m\delta$.

Assign the values in $H$ to the $m$ labels by a uniformly random
bijection. For $\widetilde\pi$, the label history is then independent of the
final value ranks, and the cardinal locations have been erased. Averaging over
this assignment therefore gives a randomized classical secretary rule that
uses only the observed relative order pattern and independent auxiliary
randomness. Classical optimality
\cite{GilbertMosteller1966,Ferguson1989} bounds its success probability by
$s_m$. Consequently, the average success probability of $\pi$ over
assignments is at most $s_m+m\delta$, so some deterministic assignment has
success probability no larger than this average. Taking
$\delta\le\varepsilon/m$ proves the lemma.
\end{proof}

\begin{theorem}\label{thm:full-secretary-success}
For every integer horizon $m\ge1$, the optimal success probability in the full cardinal-value random-order secretary problem, robust over all exact moment disclosures and compatible independent distributions, is $s_m$.
\end{theorem}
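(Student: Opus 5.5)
The proof has two directions: a policy that guarantees $s_m$ against every instance, and a family of moment-equivalent instances on which no moment-based policy beats $s_m+\varepsilon$.

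\emph{Lower bound (achievability).} Let the policy ignore all moment information and the cardinal magnitudes, and run the classical cutoff rule on relative ranks. It rejects the first $r^*-1$ arrivals, where $r^*$ attains the maximum in $s_m$, and then stops at the first arrival that is at least as large as every value seen so far. Two points need care.

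First, the arrival order is uniform and independent of the values, so conditional on any realization of the values the rank sequence is a uniformly random permutation.

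Second, atoms may create ties. We break ties by an independent uniform random priority, which makes every realized value vector strictly ordered. Success only requires picking some maximizer, and the tie-broken leader is a maximizer. The classical computation therefore gives success probability at least $s_m$ conditional on every realization, hence unconditionally. This holds for every independent instance, so the robust value is at least $s_m$.

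\emph{Upper bound.} Fix $\varepsilon>0$ and a moment-based policy $\pi$.
\begin{enumerate}
\item Take $N\ge N_0(m,\varepsilon)$ from \Cref{lem:ramsey-erasure}, and let $\mu_0,\dots,\mu_{N-1}$ be the distributions of \Cref{thm:nway-separator}. They share a moment sequence $m^\star$.
\item By \Cref{cor:first-atom-concentration}, choose $B$ so large that $\mu_h(\{B^h\})\ge 1-\varepsilon/m$ for every $h$.
\item Every label is disclosed the same sequence $m^\star$. The policy's ex ante information is therefore identical across all instances in which each label $i$ independently gets some $\mu_{h(i)}$ with $h$ injective. So $\pi$ reduces to a single randomized cardinal-value policy $\pi'$ acting on observed values.
\item Restrict $\pi'$ to the grid $v_h=B^h$, $h<N$. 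Formally, define a grid policy that feeds grid observations to $\pi'$.
\item \Cref{lem:ramsey-erasure} yields distinct $h_1,\dots,h_m$ and a label assignment on which $\pi'$ selects the largest grid value with probability at most $s_m+\varepsilon$.
\item Now use the instance in which label $i$ has law $\mu_{h(i)}$. Its values are compatible with the common disclosure $m^\star$ and are independent, as required.
\item By a union bound, all $m$ labels realize their designated atoms $B^{h(i)}$ with probability at least $1-\varepsilon$. On that event the run coincides in law with the deterministic grid instance.
\item The success probability of $\pi$ is therefore at most $s_m+2\varepsilon$. Letting $\varepsilon\to0$ gives the bound $s_m$.
\end{enumerate}

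\emph{Main obstacle.} The delicate step is the coupling in steps 3, 4 and 7. Two points must be handled.
\begin{itemize}
\item On the good event the observations lie on the grid, so the behavior of $\pi$ there is determined by its restriction to grid histories. On the bad event (non-designated atoms) we simply concede success probability up to $\varepsilon$.
\item A measurability point: the policy's randomization and its dependence on labels are permitted in \Cref{lem:ramsey-erasure}, since that lemma allows labels to be observed. The disclosure is label-independent because all labels receive the same $m^\star$.
\end{itemize}
Distinct residue classes are needed so that the designated values are distinct grid points and no ties arise. This is guaranteed because $h_1,\dots,h_m$ are distinct.
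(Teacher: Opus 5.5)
Your proposal is correct and follows essentially the same argument as the paper. Achievability uses a classical cutoff rule with independent tie-breaking, and the upper bound combines the $N$-way separator and first-atom concentration, the Ramsey erasure lemma applied to the policy under the common disclosure on the grid $B^0<\cdots<B^{N-1}$, and a union bound over the designated atoms (your $\varepsilon$ and $\varepsilon/m$ in place of the paper's $\varepsilon/3$ and $\varepsilon/(3m)$ are immaterial).
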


\begin{proof}
For the lower bound, use a classical optimal skip-then-record rule,
breaking value ties by independent continuous tie-breakers. Conditional on
any realized values, this rule selects the lexicographic maximum with
probability $s_m$; that item is one of the value maximizers. Hence the robust
success probability is at least $s_m$.

For the upper bound, fix $\varepsilon>0$ and choose an integer
$N\ge N_0(m,\varepsilon/3)$ as in \Cref{lem:ramsey-erasure}. By
\Cref{cor:first-atom-concentration}, choose an integer $B\ge2$ large enough in
the $N$-way construction that
\[
\mu_h(\{B^h\})\ge 1-\frac{\varepsilon}{3m}
\qquad\text{for every }h=0,\dots,N-1.
\]
The parameters $N$ and $B$, and hence the common moment sequence,
depend only on $m$ and $\varepsilon$. Now fix any cardinal-value moment-based
policy $\pi$. All distributions $\mu_0,\dots,\mu_{N-1}$ agree on all moments,
so every label receives the same exact moment disclosure.

Apply \Cref{lem:ramsey-erasure} to the policy induced by $\pi$ under
this common disclosure on the ordered grid
$B^0<B^1<\cdots<B^{N-1}$. The lemma gives distinct residues
$h_1,\dots,h_m$ and an assignment to labels such that, on the corresponding
deterministic values $B^{h_1},\dots,B^{h_m}$, $\pi$ succeeds with probability
at most $s_m+\varepsilon/3$.

For each assigned label $i$, let $X_i\sim\mu_{h_i}$ independently. This instance is feasible under the common moment disclosure. Let
\[
\mathcal E:=\{X_i=B^{h_i}\text{ for every }i=1,\dots,m\}.
\]
A union bound gives
$\P(\mathcal E)\ge1-\varepsilon/3$. Conditional on $\mathcal E$, the values
are distinct and the instance is exactly the Ramsey-hard deterministic
secretary instance. Ties can occur only on $\mathcal E^c$ and are therefore
already covered by its error probability. Hence
\[
\P_\pi(\text{success})
\le
\P(\mathcal E)(s_m+\varepsilon/3)+\P(\mathcal E^c)
\le s_m+2\varepsilon/3.
\]
Since $\pi$ and $\varepsilon$ were arbitrary, no moment-based policy can robustly guarantee more than $s_m$. Together with the lower bound, this proves the theorem.
\end{proof}

\begin{corollary}\label{cor:secretary-asymptotic}
The finite-horizon robust values satisfy
\[
\lim_{m\to\infty}s_m=\frac1e.
\]
\end{corollary}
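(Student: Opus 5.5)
The plan is to combine \Cref{thm:full-secretary-success}, which identifies the robust value for horizon $m$ with $s_m$, with the classical asymptotic analysis of $s_m$. By the theorem, the finite-horizon robust values are exactly $s_m$, so the corollary reduces to the purely analytic statement $\lim_{m\to\infty}s_m=1/e$. I will prove this directly from the definition
\[
s_m=\max_{1\le r\le m}\frac{r-1}{m}\sum_{t=r}^{m}\frac{1}{t-1}.
\]

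\textbf{Upper bound.} For $r\ge2$, compare the sum with an integral: since $1/(t-1)\le\int_{t-2}^{t-1}dx/x$, we get $\sum_{t=r}^m\frac1{t-1}\le\log\frac{m-1}{r-2}$ whenever $r\ge3$. Write $x=(r-1)/m\in(0,1]$. Then
\[
\frac{r-1}{m}\sum_{t=r}^m\frac{1}{t-1}\le x\log\frac{1}{x}+O\!\left(\frac{1}{m}\right)
\]
uniformly in $r$. The small cases $r\in\{1,2\}$ contribute at most $(1+\log m)/m\to0$. Since $\sup_{x\in(0,1]}x\log(1/x)=1/e$, it follows that $\limsup_m s_m\le1/e$.

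The error term needs some care: replacing $\log\frac{m-1}{r-2}$ by $\log\frac{m}{r-1}$ costs about $\log\frac{r-1}{r-2}\le\frac{1}{r-2}$. After multiplying by $(r-1)/m$, this is $O(1/m)$ uniformly in $r$. This uniform control is the only mildly delicate point.

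\textbf{Lower bound.} Choose $r=\lceil m/e\rceil+1$. The lower integral estimate $\sum_{t=r}^m\frac1{t-1}\ge\log\frac{m}{r-1}$ then gives
\[
s_m\ge\frac{r-1}{m}\log\frac{m}{r-1}\longrightarrow\frac1e\cdot1=\frac1e,
\]
because $(r-1)/m\to1/e$. Hence $\liminf_m s_m\ge1/e$, and the limit equals $1/e$.

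I expect no real obstacle here. The result is classical and the preliminaries already recall it, so citing \cite{Ferguson1989} is an acceptable alternative to the integral-comparison argument above.
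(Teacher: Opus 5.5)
Your proposal is correct and follows the paper's proof. You invoke \Cref{thm:full-secretary-success} to identify the robust value at horizon $m$ with $s_m$, then use the classical limit $s_m\to 1/e$. The only difference is that the paper simply cites this classical asymptotic, as recalled in the preliminaries, whereas you verify it by integral comparison; your upper-bound error $\frac{r-1}{m}\log\frac{r-1}{r-2}\le\frac{2}{m}$ is indeed uniform in $r\ge3$.
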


\begin{proof}
By \Cref{thm:full-secretary-success}, the robust value at every finite
horizon $m\ge1$ is $s_m$. The classical limit recalled in
\Cref{sec:defn-problems} then gives the displayed convergence.
\end{proof}

We close the applications with prophet inequalities, where ambiguity
instead concerns the probability of seeing enough large observations to
change an order-statistic benchmark.

%% file: sections/prophet.tex
\section{Order-Statistic Ambiguity: Prophet Inequalities}\label{sec:prophet}

The hard instances of Correa et al.~\cite{CorreaCristiLivanosVerdugoZhang26} are stated for the expected-maximum benchmark, but their construction can also be adapted to every fixed order statistic. Each active block contains, with high probability, many copies of its designated value. The fixed-$r$ logarithmic barrier can therefore also be obtained by adapting their construction.

The binary separator yields the required order-statistic gap directly from its designated-atom and tail estimates. The resulting proof is modular and uses the same construction that drives the quantile and newsvendor lower bounds.

For integers $m\ge r\ge1$, let $\rho^{\mathrm{mom}}_{m,r}$ denote the
supremum of all $\alpha\ge0$ for which there is a randomized moment-based
single-choice policy such that, for every collection $Z_1,\dots,Z_m$ of
independent nonnegative random variables with finite moments of every order,
the policy's stopping time $\tau$ satisfies
\[
\E[Z_\tau]\ge \alpha\,\E[Z_{(r)}]
\]
under the disclosed moment sequences. This guarantee-based definition also covers
instances with zero benchmark, without forming an indeterminate ratio. In this
section, $\log$ denotes the natural logarithm.

The following consequence of the binary separator provides exactly the
order-statistic gap needed for the upper bound.

\begin{corollary}\label{cor:order-stat-separator}

For every fixed integer $r\ge1$ there is a constant $\kappa_r>0$ such that the following holds. Let $B,n$ be integers with $B,n\ge3$ and $n\ge\max\{2B,2r\}$, and let $\mu,\nu$ be the distributions from \Cref{thm:binary-separator}. If $X_1,\dots,X_n$ are i.i.d.\ from $\nu$ and $Y_1,\dots,Y_n$ are i.i.d.\ from $\mu$, then
\[
\E_\nu[X_{(r)}]\ge \kappa_r nB,
\qquad
\E_\mu[Y_{(1)}]\le 2B.
\]
In addition,
\[
\E_\nu\!\left[\sum_{i=1}^n X_i\mathbf 1_{\{X_i\ge x_3\}}\right]\le\frac{6}{n},
\qquad
\E_\mu\!\left[\sum_{i=1}^n Y_i\mathbf 1_{\{Y_i\ge x_3\}}\right]\le\frac{6}{n}.
\]

\end{corollary}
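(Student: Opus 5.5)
The plan is to read all three estimates directly off the atom and weight bounds of \Cref{thm:binary-separator}. We prove the tail estimates first, since the bound on $\E_\mu[Y_{(1)}]$ depends on them.

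\emph{Tail estimates.} For a single draw from either distribution, the contribution of atoms at or above $x_3$ is at most $\sum_{j\ge3}x_j|w_j|/W$. Using $W\ge n$ and $|w_j|\le 8(nB)^2/x_j^2$, this is at most
\[
\frac{8(nB)^2}{n}\sum_{j\ge3}\frac1{x_j}.
\]
Since $x_{j+1}=x_j^B$ for $j\ge2$, the sequence $1/x_j$ decays faster than geometrically with ratio at most $1/2$. Hence $\sum_{j\ge3}1/x_j\le 2/x_3$, and $x_3=(nB)^B\ge (nB)^3$.

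The per-item tail is therefore at most $16nB^2/(nB)^3=16/(n^2B)\le 16/(3n^2)\le 6/n^2$. Multiplying by $n$ items gives the bound $6/n$ for both $\nu$ (its support contains only the even indices $j\ge4$ among these) and $\mu$ (odd indices $j\ge3$).

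\emph{Upper bound on $\E_\mu[Y_{(1)}]$.} The support of $\mu$ is $\{B,x_3,x_5,\dots\}$, so pointwise
\[
Y_{(1)}\le B+\sum_i Y_i\mathbf 1_{\{Y_i\ge x_3\}}.
\]
Taking expectations and applying the tail estimate gives $\E_\mu[Y_{(1)}]\le B+6/n\le 2B$.

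\emph{Lower bound on $\E_\nu[X_{(r)}]$.} Let $K$ be the number of $i$ with $X_i\ge nB=x_2$. If $K\ge r$, then $X_{(r)}\ge nB$, so $\E_\nu[X_{(r)}]\ge nB\,\P(K\ge r)$. Now $K\sim\mathrm{Bin}(n,p)$ with $p=1-p_0\ge p_2\ge 1/(4n)$. Since $\P(\mathrm{Bin}(n,p)\ge r)$ is nondecreasing in $p$, we may take $p=1/(4n)$. Bounding by the event of exactly $r$ successes,
\[
\P(K\ge r)\ge\binom nr(4n)^{-r}\Bigl(1-\tfrac1{4n}\Bigr)^{n}.
\]
With $\binom nr\ge (n/r)^r$ and $(1-1/(4n))^n\ge e^{-1/2}$ for $n\ge3$, this is at least $\kappa_r:=(4r)^{-r}e^{-1/2}$. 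The hypothesis $n\ge 2r$ ensures $n\ge r$, so the binomial bound makes sense.

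The only step requiring care is the tail summation: one must use the doubly exponential growth of the nodes to collapse $\sum_j 1/x_j$ to a constant multiple of $1/x_3$, and check that the constant ($16/3$) stays below $6$ using $B\ge3$. Everything else is routine binomial bookkeeping.
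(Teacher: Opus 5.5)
Your proof is correct and follows the paper's argument: the tail bound comes from $W\ge n$, $|w_j|\le 8(nB)^2/x_j^2$ and $\sum_{j\ge3}1/x_j\le 2/x_3$; the maximum is controlled by the pointwise bound $Y_{(1)}\le B+\sum_i Y_i\mathbf 1_{\{Y_i\ge x_3\}}$; and the order statistic is handled by a single-term binomial lower bound on the number of large draws. The differences are cosmetic. You count draws with $X_i\ge nB$, which is exactly binomial with success probability $1-p_0\ge 1/(4n)$, instead of using stochastic domination for draws equal to $x_2$. Your estimate $\binom nr\ge(n/r)^r$ then gives the constant $\kappa_r=(4r)^{-r}e^{-1/2}$ in place of the paper's $e^{-1/4}/(8^r r!)$.
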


\begin{proof}

Let $H:=\sum_{i=1}^n\mathbf 1_{\{X_i=x_2\}}$, and let
$H_0\sim\operatorname{Bin}(n,1/(4n))$. Since $p_2\ge1/(4n)$, the variable
$H$ stochastically dominates $H_0$, and hence
\[
\P(H\ge r)\ge\P(H_0=r)=
\binom{n}{r}\frac{1}{(4n)^r}\left(1-\frac{1}{4n}\right)^{n-r}
\ge \frac{e^{-1/4}}{8^r r!}=: \kappa_r,
\]
using $n\ge2r$. Here
$\binom nr\ge(n/2)^r/r!$, while
$\log(1-x)\ge-x/(1-x)$ with $x=1/(4n)$ gives
$(1-1/(4n))^{n-r}\ge e^{-1/4}$. On the event $H\ge r$,
$X_{(r)}\ge x_2=nB$.

For the tails, \Cref{thm:binary-separator} gives
\[
n\sum_{j\ge3} x_j\frac{|w_j|}{W}
\le
8(nB)^2\sum_{j\ge3}\frac1{x_j}
\le
\frac{16(nB)^2}{x_3}
\le \frac{16}{nB}
<\frac{6}{n},
\]
because $x_3=(nB)^B$ and $B\ge3$. This proves both displayed tail bounds. Finally, every non-tail sample from $\mu$ equals $B$, so
\[
\E_\mu[Y_{(1)}]\le B+
\E_\mu\!\left[\sum_{i=1}^n Y_i\mathbf 1_{\{Y_i\ge x_3\}}\right]\le 2B
\]
for the stated parameter range.

\end{proof}

\begin{theorem}\label{thm:prophet}

For every integer $r\ge1$, there exist constants $a_r,b_r>0$ and an integer
$m_0(r)\ge\max\{r,2\}$ such that, for every integer $m\ge m_0(r)$,
\[
\frac{a_r}{\log m}
\le \rho^{\mathrm{mom}}_{m,r}
\le \frac{b_r}{\log m}.
\]

\end{theorem}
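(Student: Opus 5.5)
The lower bound $\rho^{\mathrm{mom}}_{m,r}\ge a_r/\log m$ will be imported rather than reproved. Correa et al.~\cite{CorreaCristiLivanosVerdugoZhang26} give a moment-based single-threshold policy with $\E[Z_\tau]\ge (c/\log m)\,\E[\max_i Z_i]$. Since $\E[Z_{(r)}]\le\E[Z_{(1)}]$, the same policy certifies $a_r:=c$ for every $r$. If their guarantee is only stated for $m$ beyond some threshold, that threshold is absorbed into $m_0(r)$. All the work is therefore in the upper bound. The plan for it is a block construction that uses the binary separator from \Cref{thm:binary-separator} at geometrically many scales, in the spirit of the $\Theta(1/\log m)$ hard instances.

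\emph{Construction.} Fix $r$, and let $K\approx c'\log m$ be the number of scales. For each level $k=1,\dots,K$, take a block of $n_k$ variables, with both $n_k$ and $B_k$ growing geometrically in $k$ (for instance $B_k$ with $B_{k+1}\ge 2n_kB_k$, and $n_k\ge\max\{2B_k,2r\}$). The constraint $\sum_k n_k\le m$ is what forces $K=O(\log m)$; I would either take $n_k=2^k$ with $B_k$ fixed, or use $n_k$ constant with rescaled values $\lambda_kX$. Rescaling by $\lambda_k$ preserves moment equivalence, since all moments get multiplied by $\lambda_k^s$. Every variable in block $k$ is given the same moment disclosure, namely that of $\nu$ rescaled at scale $k$. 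The adversary picks an index $k^*$: blocks $k<k^*$ are drawn from $\mu$ (the ``sure, moderate'' law), block $k^*$ is drawn from $\nu$, and blocks $k>k^*$ are drawn from $\nu$ as well, or zeroed out by appending a further level. Blocks arrive in increasing order of $k$.

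\emph{Analysis.} By \Cref{cor:order-stat-separator}, in instance $k^*$ the benchmark satisfies $\E[Z_{(r)}]\ge\kappa_r\, n_{k^*}B_{k^*}\lambda_{k^*}$. The scales are chosen so that $V_k:=n_kB_k\lambda_k$ is the designated big value at level $k$, while the typical value under $\mu$ is $B_k\lambda_k\approx V_k/n_k$. The tail contributions $6/n$ from \Cref{cor:order-stat-separator}, rescaled, are negligible. The policy sees identical disclosures across all $K$ instances. Its run until it first observes a value of the form $x_2\lambda_k$, or a distinguishing $x_3$-tail atom, is therefore coupled across instances, except through the realized values. The key point is that a $\mu$-block realizes (w.h.p.) only the value $B_k\lambda_k$, whereas a $\nu$-block realizes mostly $1\cdot\lambda_k$ and occasionally $nB\lambda_k$.

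This needs one adjustment. To make the early portion of the run indistinguishable, the ``low'' observations should look identical across instances; otherwise the policy learns $k^*$ from seeing $B\lambda_k$ versus $\lambda_k$. So I would arrange scales with $\lambda_{k+1}=B_k\lambda_k$, so that the $\mu$-atom at level $k$ coincides with the $\nu$-atom $1\cdot\lambda_{k+1}$ at level $k+1$. This telescoping alignment is the main obstacle, and I expect it to need the most care, possibly by interleaving labels.

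Given indistinguishability, let $\pi_k$ be the probability that the policy stops during the ``low phase'' of block $k$. The $\pi_k$ sum to at most $1$, so some $k^*$ has $\pi_{k^*}\lesssim 1/K$. The policy's reward in instance $k^*$ then comes from three sources. Stopping in block $k^*$ at the big value yields at most $\pi_{k^*}V_{k^*}$. Stopping earlier at small values yields at most $V_{k^*}/n_{k^*}$, by geometric decay. Later values are zeroed, and tails contribute $O(1/n)$ times the scale.

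Hence the reward is $O(V_{k^*}/\log m)$, while the benchmark is $\Omega(\kappa_rV_{k^*})$. This gives $b_r=O(1/\kappa_r)$, valid for $m\ge m_0(r)$ large enough that the union bounds on atom events (from $p_0,q_1$ in \Cref{thm:binary-separator}) cost only constants.
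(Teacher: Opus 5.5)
\textbf{There is a genuine gap in the upper bound: your switch runs in the wrong direction.} The lower bound is the same as the paper's: import Correa et al.\ and use $Z_{(r)}\le Z_{(1)}$. The problem is in the hard family. In your family the $\nu$-block at level $k^*$, which carries the rare large atom, is preceded by $\mu$-blocks, and this family is easy whatever you place after $k^*$.

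\emph{Later blocks are $\nu$.} Then every instance has a $\nu$-block at the top level $K$, so the benchmark is $\Theta(V_K)$ in all of them. The rule that ignores blocks $1,\dots,K-1$ and stops at the first value equal to the level-$K$ atom $x_2\lambda_K$ is constant-competitive, since that value appears with probability at least $1-e^{-1/4}$.

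\emph{Later blocks carry no large values.} Then the rule ``stop at the first observation equal to the $x_2$-atom of its own level'' is constant-competitive. $\mu$ never produces that atom ($q_2=0$), block $k^*$ produces it with probability at least $1-e^{-1/4}$, and the benchmark is $O(V_{k^*})$. Zeroing blocks is excluded anyway, because zero coordinates disclose a different moment sequence and reveal $k^*$ ex ante.

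The indistinguishability you hope to engineer cannot be arranged. $\mu$ and $\nu$ have disjoint atom sets, and each coordinate's scale is fixed by its own disclosure, so a single observation in block $k^*$ identifies it as a $\nu$-block. Setting $\lambda_{k+1}=B_k\lambda_k$ does not change the fact that $B_k\lambda_k$ is not an atom of the level-$k$ $\nu$. Interleaving labels does not help either, because the disclosures separate the scales. So your $\pi_k$ is not a quantity shared by the instances.

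\textbf{The missing idea is to reverse the order.} In $\mathcal I_t$, let blocks $1,\dots,t$ be $\nu$ and blocks $t+1,\dots,L$ be $\mu$. Then no value-level indistinguishability is needed. Through block $t$, the history under $\mathcal I_t$ has exactly the law it has under the all-$\nu$ instance. Hence the per-block stopping probabilities $a_j$ computed there, with $\sum_j a_j\le 1$, remain valid in $\mathcal I_t$ for $j\le t$. After the switch the policy may detect it but gains nothing, because every non-tail $\mu$-value equals $x_1=B$.

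This forces the moderate value to be the same in every block. So fix $B=3$ and let the growth come from the block sizes $n_j=N_r2^{j-1}$; the large atom is then $x_2=3n_j$, and $L\gtrsim\log m$ blocks fit in the horizon. Your parameter choices do not fit this. Rescaling by $\lambda_k$ would make the post-switch $\mu$-values $B\lambda_k$ large. Your constraint pair $B_{k+1}\ge 2n_kB_k$, $n_k\ge 2B_k$ forces doubly exponential growth and leaves only $O(\log\log m)$ levels.

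Finally, in the correct order the earlier blocks also contain large atoms. The reward bound is therefore $\E_{\mathcal I_t}[Z_\tau]\le 3\sum_{j\le t}n_ja_j+O(1)$, not ``small earlier values plus $\pi_{k^*}V_{k^*}$''. Averaging over $t$ with weights proportional to $2^{-t}$, which exploits $n_j/n_t=2^{j-t}$, turns this into an $O(1)$ average reward. The corresponding average benchmark is of order $\kappa_rN_rL$, which gives the claimed $b_r/\log m$ bound.
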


For the lower-bound direction we use the following theorem of Correa et
al.~\cite{CorreaCristiLivanosVerdugoZhang26}, stated directly in terms of the
guarantee above.

\begin{proposition}\label{prop:log-lowerbound}

There are a constant $a>0$ and an integer $m_{\mathrm{lb}}\ge2$ such that,
for every integer $m\ge m_{\mathrm{lb}}$,
\[
\rho^{\mathrm{mom}}_{m,1}\ge \frac{a}{\log m}.
\]

\end{proposition}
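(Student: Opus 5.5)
Since \Cref{prop:log-lowerbound} is the lower bound of Correa et al.~\cite{CorreaCristiLivanosVerdugoZhang26}, the paper only needs to quote it. To make the logic self-contained, I would reprove it with a randomized \emph{single-threshold} policy whose threshold is drawn uniformly from $O(\log m)$ geometrically spaced scales. Each scale is computed from the disclosed moments alone. The goal is to show that one of these scales captures a constant fraction of $\E[\max_i Z_i]$; the $1/\log m$ factor is then just the probability of guessing that scale.

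\textbf{Step 1: anchoring the scale.} Set $\mu^*:=\max_i\E[Z_i]$, which is read off from the first moments. Then
\[
\mu^*\le\E[\max_i Z_i]\le\sum_i\E[Z_i]\le m\mu^*.
\]
For a threshold $t$, stopping at the first $Z_i\ge t$ earns at least $t\,\P(\max_i Z_i\ge t)$. So a threshold drawn uniformly from $t_j=\mu^*2^j$, $j=0,\dots,\lceil\log_2 m\rceil$, earns at least
\[
\frac{1}{2\lceil\log_2 m\rceil+2}\int_{\mu^*}^{2m\mu^*}\P\Bigl(\max_i Z_i\ge t\Bigr)\,dt .
\]
This is $\Omega(1/\log m)$ times $\E[\max_i Z_i]$, minus two losses. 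The first is the mass of the integral below $\mu^*$, which is at most $\mu^*$. The second is the mass above $2m\mu^*$. To absorb the first loss, I would mix in, with constant probability, the rule ``accept the item with the largest mean deterministically''. This earns $\mu^*$.

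\textbf{Step 2: the tail above $2m\mu^*$.} This is where moments must enter, and I expect it to be the main obstacle. The tail $\int_{2m\mu^*}^\infty\P(\max_i Z_i\ge t)\,dt$ can dominate $\E[\max_i Z_i]$, and first moments alone cannot locate it. My first attempt would use the $k$-th power sums $S_k:=\sum_i\E[Z_i^k]$ and the associated scales $\tau_k:=S_k^{1/k}$, which are nondecreasing in $k$. By Markov's inequality,
\[
\P\Bigl(\max_i Z_i\ge e\,\tau_k\Bigr)\le e^{-k}.
\]
So beyond $\tau_{k}$ only exponentially small probability remains. Choosing $k$ up to about $\log m$, together with the comparison $\tau_k\le m^{1/k}\bigl(\E[\max_i Z_i^k]\bigr)^{1/k}$, should show that the relevant upper scale lies within the window spanned by $\tau_1,\dots,\tau_{\lceil\log m\rceil}$. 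That window contains only $O(\log m)$ dyadic scales.

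\textbf{Step 3: assembling the policy.} I would augment the threshold menu with the dyadic points of this moment-determined window and take the uniform mixture. I would then redo the integral bound from Step 1 over the enlarged range. What remains is to show that the portion of $\E[\max_i Z_i]$ lying beyond the largest scale actually used is at most a constant fraction of the benchmark. The hardest part is that heavy contributions from a rare, huge atom may sit beyond every $\tau_k$ with $k\lesssim\log m$. This is exactly the phenomenon the binary separator exploits for the matching upper bound. If the plain Markov argument fails to bound it, I would fall back on the refined truncation argument of Correa et al., and otherwise cite \cite{CorreaCristiLivanosVerdugoZhang26} directly.
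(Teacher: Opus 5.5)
The paper gives no proof of this proposition. It is imported from Correa et al.~\cite{CorreaCristiLivanosVerdugoZhang26} and restated in terms of $\rho^{\mathrm{mom}}_{m,1}$, so the fallback in your last sentence is exactly the paper's argument. The self-contained route you sketch, however, has a genuine gap in Step~2.

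First, the scales $\tau_k=S_k^{1/k}$ need not be nondecreasing: if $Z_i\equiv1$ for all $i$, then $\tau_k=m^{1/k}$. More seriously, the window they span is not $O(\log m)$ dyadic scales wide. Take $Z_1=H$ with probability $1/H$ and $Z_1=0$ otherwise, with all other coordinates zero. Then $\tau_k=H^{1-1/k}$, so already $[\tau_1,\tau_2]=[1,\sqrt H]$ contains about $\tfrac12\log_2H$ dyadic scales, which is unbounded for fixed $m$. The comparison $\tau_k\le m^{1/k}\E[\max_iZ_i^k]^{1/k}$ cannot fix this, because $\E[\max_iZ_i^k]^{1/k}$ is not controlled by $\E[\max_iZ_i]$; in this example they are $H^{1-1/k}$ and $1$. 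A uniform mixture over that window therefore does not give the relevant scale probability $\Omega(1/\log m)$, and Step~3 cannot yield a guarantee depending on $m$ alone.

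The missing idea is that the tail above $2m\mu^*$ was never an obstacle. It only looked like one because you lower-bounded the reward $R(t)$ of the threshold-$t$ rule solely by $t\,\P(Z^*\ge t)$, where $Z^*:=\max_iZ_i$. Independence also gives
\[
R(t)=\sum_i\E\bigl[Z_i\mathbf 1_{\{Z_i\ge t\}}\bigr]\,\P(Z_j<t\ \forall j<i)\ge\P(Z^*<t)\,\E\bigl[Z^*\mathbf 1_{\{Z^*\ge t\}}\bigr].
\]
So any threshold with $\P(Z^*\ge t)\le1/2$ collects half of all mass above it, rare huge atoms included. Markov on the means gives $\P(Z^*\ge4m\mu^*)\le1/4$ (assume $\mu^*>0$; otherwise every $Z_i=0$).

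Set $t_j=2^j\mu^*$ for $j=0,\dots,J:=\lceil\log_2m\rceil+2$. There are two cases.
\begin{itemize}
\item If $\P(Z^*\ge t_0)<1/2$, then $\E[Z^*]\le\mu^*+2R(t_0)$, and your largest-mean rule covers the $\mu^*$ term.
\item Otherwise there is a largest $j^*<J$ with $\P(Z^*\ge t_{j^*})\ge1/2$, and then $\E[Z^*]\le t_{j^*+1}+\E[Z^*\mathbf 1_{\{Z^*\ge t_{j^*+1}\}}]\le4R(t_{j^*})+2R(t_{j^*+1})$.
\end{itemize}
Run the largest-mean rule or a uniformly random threshold from $\{t_0,\dots,t_J\}$, each with probability $1/2$. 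This gives $\E[Z_\tau]\ge\E[Z^*]/(8(J+1))$, which is the proposition with $m_{\mathrm{lb}}=2$.

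So your Step~1 skeleton already suffices using only first moments. The substance of the $\Theta(1/\log m)$ result is that the remaining moments cannot improve on this, which is the upper bound in \Cref{thm:prophet}.
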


\begin{proof}[Proof of \Cref{thm:prophet}]

The lower bound follows from \Cref{prop:log-lowerbound}: since
$Z_{(1)}\ge Z_{(r)}$ pointwise, a guarantee against the maximum is also
one against the $r$-th largest value. We may therefore take $a_r:=a$.

For the upper bound, we arrange the binary separator pairs in blocks of
doubling size. At base $B=3$, a block of $n$ draws from $\nu_n$ has
an order-statistic benchmark proportional to $n$, whereas the values
below the tail threshold $x_3$ in its moment-equivalent $\mu_n$ block
are all $3$. The hard
instances switch from $\nu$-blocks to $\mu$-blocks at different points.
A policy must allocate its stopping probability across the growing
blocks without learning the switch point from the disclosed moments.

Let $\kappa_r$ be the constant in \Cref{cor:order-stat-separator} and take
$N_r:=\lceil\max\{6,2r,1/\kappa_r\}\rceil$, so that the separator
estimates apply to every block size $n\ge N_r$. Write $(\mu_n,\nu_n)$
for the corresponding pair with base $3$. Fix $m\ge4N_r^2$ and let $L$
be the largest integer with $N_r(2^L-1)\le m$. We can then fit $L$
blocks of sizes $n_j=N_r2^{j-1}$ into the horizon. In instance
$\mathcal I_t$, blocks $1,\dots,t$ contain independent draws from
$\nu_{n_j}$, blocks $t+1,\dots,L$ contain independent draws from
$\mu_{n_j}$, and any remaining coordinates are zero. All coordinates
are independent. Every instance discloses the same moment sequence
at each coordinate, as illustrated in \Cref{fig:prophet-hard-instances}.

\input{figures/prophet_hard_instances}

Fix any randomized moment-based stopping rule $\tau$. On the all-$\nu$
instance $\mathcal J:=\mathcal I_L$, let
$a_j:=\P_{\mathcal J}(\tau\text{ stops in block }j)$.
These probabilities satisfy $\sum_{j=1}^L a_j\le1$. Through any block
$j\le t$, the history has the same law under $\mathcal I_t$ as under
$\mathcal J$, so the stopping probability in that block is still $a_j$.
Ordinary rewards in a $\nu$-block are at most $3n_j$; after the switch,
they are at most $3$. Adding the tail bounds from
\Cref{cor:order-stat-separator} gives
\[
\E_{\mathcal I_t}[Z_\tau]
\le 3\sum_{j\le t}n_ja_j+3+6\sum_{j=1}^L\frac1{n_j}
\le 3\sum_{j\le t}n_ja_j+3+\frac{12}{N_r}.
\]
Here the last inequality uses the geometric sum
$\sum_{j=1}^L2^{1-j}<2$.

To compare these rewards across scales, average the switch points with
weights $\pi_t=2^{-t}/Z_L$, where
$Z_L:=\sum_{t=1}^L2^{-t}\in[1/2,1)$. The decreasing weights offset the
growing block sizes: $\sum_{t=j}^L\pi_t\le2^{2-j}$, while
$n_j=N_r2^{j-1}$. The total stopping probability is at most one, so the
weighted policy reward stays bounded. In contrast, block $t$ alone
gives $\E_{\mathcal I_t}[Z_{(r)}]\ge3\kappa_r n_t$, and each switch
point contributes the same amount to the weighted benchmark. Thus
\begin{align*}
\sum_{t=1}^L\pi_t\,\E_{\mathcal I_t}[Z_\tau]
&\le 6N_r+3+\frac{12}{N_r}=:A_r,\\
\sum_{t=1}^L\pi_t\,\E_{\mathcal I_t}[Z_{(r)}]
&\ge \frac{3\kappa_rN_rL}{2Z_L}
\ge \frac{3\kappa_rN_rL}{2}.
\end{align*}
Consequently, at least one switch point $t$ satisfies
\[
\E_{\mathcal I_t}[Z_\tau]
\le \frac{2A_r}{3\kappa_rN_rL}\,
\E_{\mathcal I_t}[Z_{(r)}].
\]
It remains to relate the number of blocks to the horizon. Maximality of
$L$ gives $N_r(2^{L+1}-1)>m$, and $m\ge4N_r^2$ then implies
\[
L>\log_2(m/N_r)-1\ge\tfrac12\log_2m
=\frac{\log m}{2\log2}.
\]
The reward on the selected instance is therefore at most
$b_r/\log m$ times its benchmark, with the explicit constants
\[
b_r:=\frac{4A_r\log2}{3\kappa_rN_r},
\qquad
m_0(r):=\max\{r,m_{\mathrm{lb}},4N_r^2\}.
\]
This proves the upper bound for every $m\ge m_0(r)$ and completes the theorem.

\end{proof}

For the multi-selection variant, we take the reward to be the total value of the selected items. This is the stronger reward model for an upper bound; the same conclusion then also applies when the reward is only the maximum selected value.

\begin{corollary}\label{cor:prophet-r-selection}

Fix an integer $r\ge1$, and let $b_r$ and $m_0(r)$ be upper-bound constants from
\Cref{thm:prophet}. For every integer $m\ge m_0(r)$ and every moment-based algorithm
that selects at most $r$ values and receives their total value, there is a
length-$m$ instance $Z_1,\dots,Z_m$ of independent nonnegative random variables with finite
moments of every order on which the algorithm's expected reward is at most
\[
\frac{r b_r}{\log m}\,\E[Z_{(r)}].
\]

\end{corollary}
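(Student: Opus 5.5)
The plan is to reuse the hard family $\mathcal I_1,\dots,\mathcal I_L$ from the proof of \Cref{thm:prophet} verbatim, with the same block sizes $n_j=N_r2^{j-1}$, base $B=3$, switch weights $\pi_t$, and the same $m_0(r)$. The only change is in how the algorithm's reward is bounded. Every instance discloses the same moment sequence at each coordinate, so the algorithm receives identical ex ante information on all $\mathcal I_t$. Moreover, through any block $j\le t$ the history under $\mathcal I_t$ has the same law as under the all-$\nu$ instance $\mathcal J$.

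\textbf{Step 1: decompose the $r$-selection algorithm.} For $k=1,\dots,r$ let $\tau_k$ be the time of the $k$-th selection, with $\tau_k=\infty$ and reward zero if fewer than $k$ items are chosen. Each $\tau_k$ is a randomized stopping time adapted to the same filtration, and it is moment-based because it is a function of the algorithm's rule. The total reward is $\sum_{k=1}^r Z_{\tau_k}$.

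\textbf{Step 2: apply the single-choice bound to each $\tau_k$.} The key observation is that the proof of \Cref{thm:prophet} bounds $\E_{\mathcal I_t}[Z_\tau]$ for any single stopping time $\tau$ using only two facts: the block-stopping probabilities $a_j$ under $\mathcal J$ sum to at most one, and they are preserved under $\mathcal I_t$ up to the switch. So for each $k$ we set $a^{(k)}_j:=\P_{\mathcal J}(\tau_k\in\text{block }j)$ and obtain
\[
\sum_{t}\pi_t\E_{\mathcal I_t}[Z_{\tau_k}]\le A_r .
\]
The tail terms of \Cref{cor:order-stat-separator} bound the expected sum of all tail values, so they bound the tail contribution of each $\tau_k$ separately. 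Summing over $k$ gives
\[
\sum_t\pi_t\E_{\mathcal I_t}\Bigl[\sum_k Z_{\tau_k}\Bigr]\le rA_r .
\]
The benchmark side is unchanged, $\sum_t\pi_t\E_{\mathcal I_t}[Z_{(r)}]\ge 3\kappa_rN_rL/2$.

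\textbf{Step 3: conclude.} Averaging shows that some switch point $t$ has reward at most $\frac{2rA_r}{3\kappa_rN_rL}\E_{\mathcal I_t}[Z_{(r)}]$. The relation $L>\log m/(2\log 2)$ then turns this into the factor $rb_r/\log m$. The instance has length $m$, and its coordinates are independent, nonnegative, and have all moments finite.

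\textbf{Main obstacle.} The delicate point is justifying that each $\tau_k$ individually satisfies the hypotheses used in \Cref{thm:prophet}. In particular, the bound $\E[Z_{\tau_k}\mathbf 1\{\text{block } j\}]\le 3n_j a^{(k)}_j$ for ordinary values, and $\le 3$ after the switch, must hold even though $\tau_k$ depends on the earlier selections. This follows because all earlier selections are measurable with respect to the same history, so the law-equality argument through block $t$ applies unchanged. If that bookkeeping proved awkward, I would instead take the single randomized stopping time $\tau$ defined by choosing $k$ uniformly from $\{1,\dots,r\}$ and stopping at $\tau_k$. This gives a single-choice moment-based policy with $\E[Z_\tau]=\frac1r\E[\sum_k Z_{\tau_k}]$, and \Cref{thm:prophet} applies directly.
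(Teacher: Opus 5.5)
Your proposal is correct. Your fallback is exactly the paper's proof: draw $J$ uniformly from $\{1,\dots,r\}$ and stop at the $J$-th selection, so that $\E[Z_\tau]=\frac1r\E[\sum_k Z_{\tau_k}]$, then apply the construction from the proof of \Cref{thm:prophet} to this induced single-choice policy and multiply by $r$.

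Your primary route is the same reduction unrolled by linearity of expectation. Instead of invoking the witness construction as a black box on one randomized stopping time, you re-run the block-averaging bound for each $\tau_k$ and sum over $k$. This obliges you to recheck that the law-equality through block $t$ governs each $\tau_k$. You justify this correctly: earlier selections are functions of the same history and internal randomness. In return, the route makes the source of the factor $r$ transparent, since the total block-stopping mass $\sum_k\sum_j a^{(k)}_j$ is at most $r$ rather than $1$. It also lets you count the tail contribution only once, because the selected indices are distinct. That gives $6rN_r+3r+12/N_r$ in place of $rA_r$, a cosmetic gain that leaves the $rb_r/\log m$ conclusion unchanged. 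The paper's reduction is more modular: it needs only the fact that the construction in \Cref{thm:prophet} defeats every randomized single-choice moment-based policy.
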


\begin{proof}

Given an algorithm that selects at most $r$ values, sample $J$ uniformly from $\{1,\dots,r\}$ before the arrivals and stop when the algorithm makes its $J$-th selection, receiving value $0$ if it makes fewer than $J$ selections. Conditional on the algorithm's selected values, the resulting single-choice algorithm obtains exactly $1/r$ of their total value in expectation over $J$. Apply the explicit upper-bound construction in the proof of \Cref{thm:prophet} to this induced policy. Its witness instance, after multiplying the inequality by $r$, gives the claim.

\end{proof}

This completes the lower-bound applications of the construction toolkit.

%% file: figures/prophet_hard_instances.tex
\begin{figure}[H]
\centering
\begin{tikzpicture}[
    font=\footnotesize,
    nu/.style={draw=figblue!85,fill=figblue!12,line width=0.65pt},
    mu/.style={draw=figorange!90,fill=figorange!12,dashed,line width=0.75pt},
    zero/.style={
        draw=figgray!75,
        fill=figgray!8,
        pattern=north east lines,
        pattern color=figgray!45
    }
]
    \def\rowh{0.66}
    \def\yone{3.15}
    \def\ytwo{2.22}
    \def\yt{1.29}
    \def\yL{0.36}

    \node[font=\scriptsize,text=figgray] at (-0.38,4.08) {block};
    \foreach \x/\lab in {
        0.65/{1},
        1.95/{2},
        3.05/{\cdots},
        4.25/{t},
        5.75/{t+1},
        7.00/{\cdots},
        8.30/{L}
    }{
        \node[align=center] at (\x,4.08) {$\lab$};
    }
    \node[align=center] at (9.72,4.08) {zero\\padding};

    \node[left,font=\small] at (-0.12,\yone) {$\mathcal I_1$};
    \node[left,font=\small] at (-0.12,\ytwo) {$\mathcal I_2$};
    \node[left,font=\small] at (-0.12,\yt) {$\mathcal I_t$};
    \node[left,font=\small] at (-0.12,\yL) {$\mathcal J=\mathcal I_L$};

    \draw[nu] (0,\yone-\rowh/2) rectangle (1.3,\yone+\rowh/2)
        node[midway] {$\nu$};
    \draw[mu] (1.3,\yone-\rowh/2) rectangle (2.6,\yone+\rowh/2)
        node[midway] {$\mu$};
    \draw[mu] (2.6,\yone-\rowh/2) rectangle (3.5,\yone+\rowh/2)
        node[midway] {$\cdots$};
    \draw[mu] (3.5,\yone-\rowh/2) rectangle (5.0,\yone+\rowh/2)
        node[midway] {$\mu$};
    \draw[mu] (5.0,\yone-\rowh/2) rectangle (6.5,\yone+\rowh/2)
        node[midway] {$\mu$};
    \draw[mu] (6.5,\yone-\rowh/2) rectangle (7.5,\yone+\rowh/2)
        node[midway] {$\cdots$};
    \draw[mu] (7.5,\yone-\rowh/2) rectangle (9.1,\yone+\rowh/2)
        node[midway] {$\mu$};
    \draw[zero] (9.1,\yone-\rowh/2) rectangle (10.35,\yone+\rowh/2)
        node[midway,fill=white,inner sep=1pt] {$0$};

    \draw[nu] (0,\ytwo-\rowh/2) rectangle (1.3,\ytwo+\rowh/2)
        node[midway] {$\nu$};
    \draw[nu] (1.3,\ytwo-\rowh/2) rectangle (2.6,\ytwo+\rowh/2)
        node[midway] {$\nu$};
    \draw[mu] (2.6,\ytwo-\rowh/2) rectangle (3.5,\ytwo+\rowh/2)
        node[midway] {$\cdots$};
    \draw[mu] (3.5,\ytwo-\rowh/2) rectangle (5.0,\ytwo+\rowh/2)
        node[midway] {$\mu$};
    \draw[mu] (5.0,\ytwo-\rowh/2) rectangle (6.5,\ytwo+\rowh/2)
        node[midway] {$\mu$};
    \draw[mu] (6.5,\ytwo-\rowh/2) rectangle (7.5,\ytwo+\rowh/2)
        node[midway] {$\cdots$};
    \draw[mu] (7.5,\ytwo-\rowh/2) rectangle (9.1,\ytwo+\rowh/2)
        node[midway] {$\mu$};
    \draw[zero] (9.1,\ytwo-\rowh/2) rectangle (10.35,\ytwo+\rowh/2)
        node[midway,fill=white,inner sep=1pt] {$0$};

    \draw[nu] (0,\yt-\rowh/2) rectangle (1.3,\yt+\rowh/2)
        node[midway] {$\nu$};
    \draw[nu] (1.3,\yt-\rowh/2) rectangle (2.6,\yt+\rowh/2)
        node[midway] {$\nu$};
    \draw[nu] (2.6,\yt-\rowh/2) rectangle (3.5,\yt+\rowh/2)
        node[midway] {$\cdots$};
    \draw[nu] (3.5,\yt-\rowh/2) rectangle (5.0,\yt+\rowh/2)
        node[midway] {$\nu$};
    \draw[figselected]
        (3.5,\yt-\rowh/2) rectangle (5.0,\yt+\rowh/2);
    \draw[mu] (5.0,\yt-\rowh/2) rectangle (6.5,\yt+\rowh/2)
        node[midway] {$\mu$};
    \draw[mu] (6.5,\yt-\rowh/2) rectangle (7.5,\yt+\rowh/2)
        node[midway] {$\cdots$};
    \draw[mu] (7.5,\yt-\rowh/2) rectangle (9.1,\yt+\rowh/2)
        node[midway] {$\mu$};
    \draw[zero] (9.1,\yt-\rowh/2) rectangle (10.35,\yt+\rowh/2)
        node[midway,fill=white,inner sep=1pt] {$0$};

    \draw[nu] (0,\yL-\rowh/2) rectangle (1.3,\yL+\rowh/2)
        node[midway] {$\nu$};
    \draw[nu] (1.3,\yL-\rowh/2) rectangle (2.6,\yL+\rowh/2)
        node[midway] {$\nu$};
    \draw[nu] (2.6,\yL-\rowh/2) rectangle (3.5,\yL+\rowh/2)
        node[midway] {$\cdots$};
    \draw[nu] (3.5,\yL-\rowh/2) rectangle (5.0,\yL+\rowh/2)
        node[midway] {$\nu$};
    \draw[nu] (5.0,\yL-\rowh/2) rectangle (6.5,\yL+\rowh/2)
        node[midway] {$\nu$};
    \draw[nu] (6.5,\yL-\rowh/2) rectangle (7.5,\yL+\rowh/2)
        node[midway] {$\cdots$};
    \draw[nu] (7.5,\yL-\rowh/2) rectangle (9.1,\yL+\rowh/2)
        node[midway] {$\nu$};
    \draw[zero] (9.1,\yL-\rowh/2) rectangle (10.35,\yL+\rowh/2)
        node[midway,fill=white,inner sep=1pt] {$0$};

    \node[draw=figgray!60,fill=white,rounded corners=2pt,
          align=center,text width=4.1cm,inner xsep=4pt,inner ysep=3pt]
        at (2.15,-0.72)
        {block $j$: $n_j=N_r2^{j-1}$ coordinates\\(not to scale)};
    \node[figselected,fill=white,rounded corners=2pt,
          align=center,text width=4.35cm,inner xsep=4pt,inner ysep=3pt]
        at (7.9,-0.72)
        {benchmark block $t$\\[-1pt]
         $\displaystyle \E_{\mathcal I_t}[Z_{(r)}]\ge3\kappa_rn_t$};
    \node[draw=figgray!60,fill=white,rounded corners=2pt,
          align=center,inner xsep=5pt,inner ysep=3pt]
        at (5.17,-1.72)
        {$\mu_{n_j}$ and $\nu_{n_j}$ agree on all moments in every block $j$};
\end{tikzpicture}
\caption{Prophet hard instances: the switch point varies while block sizes double.}
\label{fig:prophet-hard-instances}
\end{figure}

%% file: sections/proofs.tex
\section{Proofs of Key Results}
\label{sec:proofs}

The preceding applications rely on the constructions in \Cref{sec:impossibility}. This section supplies the deferred technical work behind the statements in \Cref{sec:impossibility}: the limiting construction of the binary separator, the coefficient decay estimates that make all moments finite, and the analogous analytic estimates for the $N$-way residue-class separator.

\subsection{Proof of the Binary Separator}
The goal of this section is to prove \Cref{prop:key-prop-1}. We restate an extended version first.

\begin{proposition}
\label{prop:key-prop-1_extended}
Fix integers $n,B\ge 3$. Let
\[
x_0=1,\qquad x_1=B,\qquad x_j=(nB)^{B^{j-2}}\quad (j\ge 2).
\]
For each $K\ge 0$, let $(w_j^{(n,K)})_{j=0}^{K+2}$ be the unique solution of
\[
\sum_{j=0}^{K+2} w_j^{(n,K)}x_j^s=0 \quad (s=0,\dots,K+1),
\qquad
w_2^{(n,K)}=1.
\]
For each fixed $j\ge 0$, the quantity $w_j^{(n,K)}$ is defined for all
$K\ge \max\{0,j-2\}$. Define
\[
w_j:=\lim_{K\to\infty} w_j^{(n,K)},
\]
where the limit is taken over integers $K\ge \max\{0,j-2\}$.

Set
\[
W^+ := \sum_{\substack{j\ge 0\\ j\text{ even}}} w_j,
\qquad
W^- := \sum_{\substack{j\ge 0\\ j\text{ odd}}} |w_j|.
\]
Then \(W^+=W^-=:W\in(0,\infty)\). Define probability measures \(\nu\) and \(\mu\) on
\(\{x_j:j\ge 0\}\) by
\[
\nu(\{x_j\}) :=
\begin{cases}
\dfrac{w_j}{W}, & j\text{ even},\\
0, & j\text{ odd},
\end{cases}
\qquad
\mu(\{x_j\}) :=
\begin{cases}
\dfrac{|w_j|}{W}, & j\text{ odd},\\
0, & j\text{ even}.
\end{cases}
\]
Then \(\mu\) and \(\nu\) agree on all moments, and all of these moments are finite.
Moreover, these distributions satisfy:
\begin{itemize}
    \item[\(\mathrm{(i)}\)] \(p_0 := \P_{\nu}(X=1) \ge 1 - \frac{2}{n}\) and
    \(q_0 := \P_{\mu}(Y=1) = 0\);
    \item[\(\mathrm{(ii)}\)] \(p_1 := \P_{\nu}(X=B) =0\) and
    \(q_1 := \P_{\mu}(Y=B) \ge 1 - \frac{1}{n^2B^2}\);
    \item[\(\mathrm{(iii)}\)] \(p_2:= \P_{\nu}(X=nB) \ge \frac{1}{4n}\) and
    \(q_2 := \P_{\mu}(Y=nB) = 0\);
    \item[\(\mathrm{(iv)}\)] \(W \ge n\) and
    $|w_j| \le \frac{8(nB)^2}{x_j^2}$ for all $j\in \N$.
\end{itemize}
\end{proposition}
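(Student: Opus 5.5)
The approach is to write the finite solution explicitly through barycentric (Lagrange) weights and pass to the limit $K\to\infty$ termwise, with a uniform decay bound serving as a dominating sequence. By the proof of \Cref{lem:finite-binary-vandermonde}, the nullspace of the $(K+2)\times(K+3)$ Vandermonde system on $x_0,\dots,x_{K+2}$ is spanned by $a_j=\prod_{\ell\ne j}(x_j-x_\ell)^{-1}$. The normalization $w_2^{(n,K)}=1$ therefore gives
\[
w_j^{(n,K)}=\prod_{\ell\ne 2,\ \ell\le K+2}\frac{x_2-x_\ell}{x_j-x_\ell}\cdot\frac{x_j-x_2}{1}\cdot\frac{1}{\ldots}
\quad\text{equivalently}\quad
w_j^{(n,K)}=\frac{a_j}{a_2}=\prod_{\ell\le K+2,\ \ell\ne j,2}\frac{x_2-x_\ell}{x_j-x_\ell}\cdot\frac{1}{x_j-x_2}\cdot(x_2-x_j)\cdot\frac{-1}{1}\ \text{(sign bookkeeping)}.
\]
Concretely, I will write $w_j^{(n,K)}=\frac{\prod_{\ell\ne 2}(x_2-x_\ell)}{\prod_{\ell\ne j}(x_j-x_\ell)}$, with the products over $\ell\le K+2$. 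The signs alternate with $j$, and $w_2>0$, so the even indices are positive and the odd ones negative. Adding a new node $x_{K+3}$ multiplies $w_j^{(n,K)}$ by $(x_{K+3}-x_2)/(x_{K+3}-x_j)$, which equals $1+O(x_j/x_{K+3})$ for fixed $j$. Because the $x_\ell$ grow doubly exponentially, the product $\prod_\ell$ of these factors converges, so the limit $w_j$ exists and is nonzero with the same sign.

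For the decay bound (iv), consider $j\ge3$. In the ratio, the factors with $\ell>j$ give $\prod_{\ell>j}\frac{x_\ell-x_2}{x_\ell-x_j}\le 2$, say, by the rapid separation $x_\ell\ge x_j^{B}$. The factors with $\ell<j$, $\ell\ne2$, give $\prod\frac{|x_2-x_\ell|}{x_j-x_\ell}\le\frac{x_2^{\,j-2}}{\prod_{\ell<j}(x_j-x_\ell)}\cdot\ldots$. The remaining factor $1/(x_j-x_2)$, combined with at least one more factor $\ell\ne j$ of size about $x_j$ in the denominator, yields $|w_j|\lesssim (nB)^2/x_j^2$ once the numerator powers of $x_2=nB$ are absorbed. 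This absorption uses $x_j=(nB)^{B^{j-2}}$, which dominates $(nB)^{j}$ polynomially. For $j\le 2$ the bound is checked by hand: $w_2=1$, and $w_0,w_1$ are limits of $\frac{(x_2-x_0)(x_2-x_1)\cdots}{(x_0-x_1)(x_0-x_2)\cdots}$-type expressions, whose values are $w_0\approx n-1$ and $|w_1|\approx n$ up to factors $1+O(1/n)$. I expect these explicit values to be the fiddly part: one must track $w_0=\frac{(nB-1)\cdots}{(B-1)\cdots}$ and $|w_1|=\frac{(nB-1)\cdots}{(B-1)(nB-B)\cdots}$, with correction products from $\ell\ge3$ that are $1+O((nB)^{1-B})$.

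Next come moment matching and finiteness. With the bound $|w_j|\le 8(nB)^2x_j^{-2}$, the series $\sum_j|w_j|x_j^s$ converges for every $s$, because $x_j^{s-2}$ is summable against the doubly exponential growth for every fixed $s$ (only finitely many $j$ have $x_j^{s-2}$ large relative to later terms, and the later terms decay superexponentially). For each fixed $s$, $\sum_{j\le K+2}w_j^{(n,K)}x_j^s=0$ holds once $K\ge s-1$. The same decay bound holds uniformly in $K$ for the finite weights, with the same product estimates, so dominated convergence gives $\sum_j w_jx_j^s=0$. Taking $s=0$ gives $W^+=W^-$, and dividing by $W$ gives agreement of all moments.

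Finally, the quantitative items follow from the near-explicit values. $W\ge w_0+w_2\ge n$. Item (i) is $p_0=w_0/W\ge 1-(w_2+\sum_{j\ge4}w_j)/W\ge1-2/n$. Item (ii) uses $1-q_1=\sum_{j\ge3\text{ odd}}|w_j|/W\le 8(nB)^2x_3^{-2}\cdot2/n$, which is at most $1/(n^2B^2)$ since $x_3=(nB)^B$ with $B\ge3$. Item (iii) requires $W\le 4n$, obtained from $W=W^-\le|w_1|+\text{tail}\le 2n$ or so. The main obstacle is the uniform-in-$K$ product estimate for $|w_j^{(n,K)}|$ together with the sharp constants for $w_0$ and $w_1$; I would handle both by bounding each correction product $\prod_{\ell\ge3}(1+x_2/(x_\ell-x_2))\le \exp(\sum_\ell 2nB/x_\ell)\le 1+1/n$.
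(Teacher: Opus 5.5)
\textbf{There is a genuine gap in the all-moments step.} Your architecture matches the paper's: barycentric weights, a convergent update product, a decay bound uniform in $K$, dominated convergence, then explicit estimates. The problem is the step that gives finite moments and $\sum_j w_jx_j^s=0$ for every $s$. You dominate $|w_j^{(n,K)}|x_j^s$ by $8(nB)^2x_j^{s-2}$ and assert this is summable for every fixed $s$. It is not: for $s=2$ the terms are constant, and for $s\ge3$ they grow doubly exponentially. The bound in (iv) therefore controls only $s\in\{0,1\}$. It yields $W^+=W^-$, but it gives neither finiteness of moments of order $\ge2$ nor the dominated-convergence passage $K\to\infty$ in those identities. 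What is needed is $|w_j^{(n,K)}|\le C_Mx_j^{-M}$ for every $M$, uniformly in $K$ (\Cref{lem:all-moment-weight-decay,lem:all-moment-weight-decay-uniform-K}). With $M=s+2$, the dominating sequence becomes $C_{s+2}x_j^{-2}$.

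\textbf{How to get the stronger bound.} It comes from exactly the factors your sketch mishandles. For $j\ge3$, cancel $(x_2-x_j)$ against $(x_j-x_2)$; no factor $1/(x_j-x_2)$ survives, and
\[
|w_j|=\frac{(x_2-x_0)(x_2-x_1)}{(x_j-x_0)(x_j-x_1)}\prod_{m=3}^{j-1}\frac{x_m-x_2}{x_j-x_m}\prod_{\ell>j}\frac{x_\ell-x_2}{x_\ell-x_j}.
\]
For $3\le m<j$ the numerator is $x_m-x_2$, which far exceeds $x_2$, so your bound $x_2^{\,j-2}$ on these numerators is false. Instead, pair each numerator with its own denominator. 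Since $x_j\ge x_m^B\ge 2x_m$ and $x_m\le x_{j-1}=x_j^{1/B}$, you get $\frac{x_m-x_2}{x_j-x_m}\le\frac{2x_m}{x_j}\le 2x_j^{-(1-1/B)}$.
\begin{itemize}
\item Bounding every middle factor by $1$ gives (iv).
\item Keeping the last $L$ of them gives $|w_j|\le 8(nB)^2\,2^L x_j^{-2-L(1-1/B)}$ for $j\ge L+3$.
\item Choosing $L$ with $2+L(1-1/B)\ge M$, and bounding the finitely many $j\le L+2$ crudely, yields $C_Mx_j^{-M}$.
\end{itemize}

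\textbf{A smaller point about $w_0$ and $w_1$.} Your values $w_0\approx n-1$ and $|w_1|\approx n$ are off by a factor of about $B/(B-1)$; the leading terms are $\frac{B(n-1)}{B-1}$ and $\frac{nB-1}{B-1}$. Your step $W\ge w_0+w_2\ge n$ needs $w_0\ge n-1$. That requires the correction product $\prod_{\ell\ge3}\frac{x_\ell-x_2}{x_\ell-x_0}$ to be at least $1-1/B$. Your estimate $1+1/n$ does not give this when $B>n+1$, whereas the sharper bound $1-4(nB)^{1-B}$ does. Alternatively, argue as the paper does via $W\ge|w_1|\ge n$.
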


In accordance with the statement, let us fix integers $B,n\ge 3$, and define
\[
x_0=1,\qquad x_1=B,\qquad x_j=(nB)^{B^{j-2}}\quad \forall j\ge 2.
\]
Note that $x_{j+1}=x_j^B\ge x_j^3$ for all $j\ge 3$. In the remainder, we first construct finite-$K$ barycentric weights $w_j^{(n,K)}$, then pass to the limit $K\to\infty$. After proving existence of the limit in \Cref{lem:limit-weights-exist}, we will abbreviate $w_j:=w_j^{(n,\infty)}$ for $j \in \N$, which is the notation used in the statement of \Cref{prop:key-prop-1}. 

\subsubsection{Constructing Distributions Agreeing on Finitely Many Moments} 
\label{subsubsec:finite-moment-construction}
We now construct, for each integer $K\ge 0$, a pair of distributions $\mu, \nu$ matching (finite) moments through order $K+1$. To this end, consider the linear system
\[
    \sum_{j=0}^{K+2} w_j^{(n,K)} x_j^s=0,
    \qquad s\in\{0,1,\dots,K+1\},
\]
together with the normalization $w_2^{(n,K)}=1$.
Since the nodes $x_0,\dots,x_{K+2}$ are distinct, the matrix
\[
V_K=(x_j^s)_{\substack{0\le s\le K+1\\0\le j\le K+2}}
\in \mathbb R^{(K+2)\times (K+3)}
\]
has rank $K+2$. Hence $\ker(V_K)$ is one-dimensional. The vector
\[
\lambda^{(n,K)}:=\bigl(\lambda_j^{(n,K)}\bigr)_{j=0}^{K+2},
\qquad
\lambda_j^{(n,K)}=
\left(\prod_{\substack{0\le i\le K+2\\ i\neq j}}(x_j-x_i)\right)^{-1},
\]
lies in $\ker(V_K)$ and satisfies $\lambda_2^{(n,K)}\neq 0$ by the Lagrange interpolation identity below.

Therefore there is a unique null vector normalized by $w_2^{(n,K)}=1$, namely $w_j^{(n,K)}=\frac{\lambda_j^{(n,K)}}{\lambda_2^{(n,K)}}$.
With this notation,
\[
w_j^{(n,K)}
=
\frac{\lambda_j^{(n,K)}}{\lambda_2^{(n,K)}}
=
\frac{\displaystyle\prod_{i\neq 2}(x_2-x_i)}
{\displaystyle\prod_{i\neq j}(x_j-x_i)}.
\]
Indeed, for every polynomial $p$ of degree at most $K+1$, the Lagrange interpolation formula
\[
p(t)=\sum_{j=0}^{K+2} p(x_j)\prod_{\substack{0\le i\le K+2\\ i\neq j}}
\frac{t-x_i}{x_j-x_i}
\]
shows, by comparing the coefficient of $t^{K+2}$, that $\sum_{j=0}^{K+2}\lambda_j^{(n,K)}p(x_j)=0$.
Applying this with $p(t)=t^s$, $s=0,\dots,K+1$, gives the annihilation identities, and normalizing by $\lambda_2^{(n,K)}$ enforces $w_2^{(n,K)}=1$.

The sign pattern is explicit. The numerator $\prod_{i\neq 2}(x_2-x_i)$ has exactly $K$ negative factors, while the denominator $\prod_{i\neq j}(x_j-x_i)$ has exactly $K+2-j$ negative factors. Hence $\operatorname{sgn}(w_j^{(n,K)}) = (-1)^{K-(K+2-j)} = (-1)^{j-2}=(-1)^{j}$.
Accordingly, define
\[
I_+:=\{0 \le j\le K+2:\ j \text{ even}\},
\qquad
I_-:=\{0 \le j\le K+2:\ j \text{ odd}\}.
\]
Then, for every $K\ge 0$,
\[
\operatorname{sgn}(w_j^{(n,K)})>0 \iff j\in I_+,
\qquad
\operatorname{sgn}(w_j^{(n,K)})<0 \iff j\in I_-,
\qquad 0\le j\le K+2.
\]

The relation for $s=0$ gives $\sum_{j=0}^{K+2} w_j^{(n,K)}=0$, so the positive and negative parts agree:
\[
W^{(n,K)}
:=
\sum_{\substack{0\le j\le K+2\\ j\in I_+}} w_j^{(n,K)}
=
\sum_{\substack{0\le j\le K+2\\ j\in I_-}} |w_j^{(n,K)}|
>0.
\]
We define two probability measures on $\{x_0,\dots,x_{K+2}\}$ by
\[
\nu^{(n,K)}(\{x_j\})
:=
\frac{w_j^{(n,K)}}{W^{(n,K)}} \mathbf 1_{\{j\in I_+\}},
\qquad
\mu^{(n,K)}(\{x_j\})
:=
\frac{|w_j^{(n,K)}|}{W^{(n,K)}} \mathbf 1_{\{j\in I_-\}}.
\]
These measures have disjoint support, and for every $s\in \{0,1,\dots,K+1\}$, $\sum_j x_j^s  \nu^{(n,K)}(\{x_j\}) = \sum_j x_j^s  \mu^{(n,K)}(\{x_j\})$.
Thus $\mu^{(n,K)}$ and $\nu^{(n,K)}$ match moments through order $K+1$.

\subsubsection{Extending to Infinity}

We now pass to the limit \(K\to\infty\) and show that the resulting limiting measures
\(\mu^{(n,\infty)}\) and \(\nu^{(n,\infty)}\) match moments of every order.

\begin{lemma}
\label{lem:limit-weights-exist}
For each fixed $j\ge 0$, the limit $w_j^{(n,\infty)}:=\lim_{K\to\infty} w_j^{(n,K)}$ exists.
\end{lemma}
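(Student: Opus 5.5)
Since $K$ enters $w_j^{(n,K)}$ only through the nodes $x_{K+1},x_{K+2},\dots$ that appear in the two products, the natural approach is to write the weight as a finite product times a $K$-dependent tail product and show that the tail product converges. From the explicit formula
\[
w_j^{(n,K)}=\frac{\prod_{i\neq 2,\,0\le i\le K+2}(x_2-x_i)}{\prod_{i\neq j,\,0\le i\le K+2}(x_j-x_i)},
\]
fix $j$ and take $K\ge \max\{0,j-2\}$, so that both $j$ and $2$ lie among the indices. I split off the indices $i>M:=\max\{j,2\}$, which are exactly the indices added as $K$ grows. This gives
\[
w_j^{(n,K)}=c_j\cdot\prod_{i=M+1}^{K+2}\frac{x_i-x_2}{x_i-x_j},
\]
where $c_j$ is the ratio of the finite products over $i\le M$ and does not depend on $K$. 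I pair the two factors with index $i$ in numerator and denominator; each factor $(x_2-x_i)$ and $(x_j-x_i)$ is negative, so the signs cancel in pairs. When $K+2\le M$, there is nothing extra to handle: finitely many $K$ do not affect the limit.

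Existence of the limit therefore reduces to convergence of the infinite product $\prod_{i>M}\frac{1-x_2/x_i}{1-x_j/x_i}$. Every factor is positive and finite because $x_i>x_M\ge x_j,x_2$. By the standard criterion, the product converges to a nonzero limit once $\sum_{i>M}\bigl|\log(1-x_2/x_i)-\log(1-x_j/x_i)\bigr|<\infty$. It suffices to show $\sum_{i>M} x_M/x_i<\infty$ with each term at most $1/2$, since $|\log(1-u)|\le 2u$ for $u\le 1/2$.

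The ratios are easy to control because the nodes grow doubly exponentially. For $i\ge 3$ we have $x_{i}=x_{i-1}^B$, and hence $x_M/x_i\le x_{i-1}/x_i=x_{i-1}^{1-B}\le (nB)^{-2}\le 1/81$. Moreover, $x_i\ge (nB)^{B^{i-2}}$, so the sum is dominated by the geometric series $\sum_i (nB)^{-(i-M)}$, since $B^{i-2}\ge B^{M-2}+(i-M)$ combined with $x_M\le (nB)^{B^{M-2}}$. I would check the small cases $M=2$ and $j\in\{0,1\}$ directly; they are covered because $x_0,x_1\le x_2$. The limit is then $w_j=c_j\prod_{i>M}(\cdots)$, which is finite and nonzero, and its sign is $(-1)^j$.

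The one subtle step is the monotone-in-$K$ bookkeeping: one has to confirm that, for all $K$ in range, the indices $\le M$ contribute a single fixed constant. The tail bound itself is routine given $x_{i+1}=x_i^B$. I would also record the explicit bound $\prod_{i>M}(1-x_M/x_i)^{-1}\le 2$ at this point, since it is reused later for (iv).
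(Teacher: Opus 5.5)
Your proof is correct and follows essentially the same route as the paper: your cutoff $M=\max\{j,2\}$ is exactly the paper's $K_0+2$, so your constant $c_j$ equals $w_j^{(n,K_0)}$, and you then show that the tail product $\prod_{i>M}\frac{x_i-x_2}{x_i-x_j}$ converges to a finite nonzero limit because $\sum_{i>M} x_M/x_i<\infty$. The only difference is cosmetic: you bound the logarithms of the factors via $x_M/x_i\le (nB)^{-(i-M)}$, whereas the paper writes each factor as $1+\frac{x_j-x_2}{x_i-x_j}$ and uses $x_i\ge 2\max\{x_j,x_2\}$ together with the convergence of $\sum_i x_i^{-1}$.
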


\begin{proof}
Fix $j\ge 0$, and set $K_0:=\max\{0,j-2\}$. For $K\ge K_0$, when the new node $x_{K+3}$ is added, the normalized barycentric weights satisfy
\begin{equation}
\label{eq:update-impossibility}
w_j^{(n,K+1)}
=
w_j^{(n,K)}
\frac{x_2-x_{K+3}}{x_j-x_{K+3}}.
\end{equation}
Indeed, if
\[
\widetilde w_j^{(n,K)}
:=
\left(\prod_{\substack{0\le i\le K+2\\ i\neq j}}(x_j-x_i)\right)^{-1},
\]
then
\[
\widetilde w_j^{(n,K+1)}
=
\frac{\widetilde w_j^{(n,K)}}{x_j-x_{K+3}},
\qquad
\widetilde w_2^{(n,K+1)}
=
\frac{\widetilde w_2^{(n,K)}}{x_2-x_{K+3}},
\]
and dividing gives \eqref{eq:update-impossibility}.

Iterating \eqref{eq:update-impossibility} from $K_0$ to $K$, we obtain
\[
w_j^{(n,K)}
=
w_j^{(n,K_0)}
\prod_{\ell=K_0+3}^{K+2}
\frac{x_2-x_\ell}{x_j-x_\ell}.
\]
Since $\ell\ge K_0+3\ge \max\{j+1,3\}$, we have $x_\ell>x_j$, and therefore $\frac{x_2-x_\ell}{x_j-x_\ell} = \frac{x_\ell-x_2}{x_\ell-x_j} = 1+\frac{x_j-x_2}{x_\ell-x_j}$.

We claim that $x_\ell\ge 2\max\{x_j,x_2\}$ for all $\ell\ge K_0+3$. Indeed, if $j\le 2$, then $\ell\ge 3$, so $x_\ell\ge x_3=(nB)^B\ge (nB)^3\ge 2nB=2x_2\ge 2x_j$.
If $j\ge 3$, then $\ell\ge j+1$, so $x_\ell\ge x_{j+1}=x_j^B\ge x_j^3\ge 2x_j$,
and also $x_j\ge x_3\ge 2x_2$, hence $x_\ell\ge 2x_2$. Thus $x_\ell-x_j\ge \frac{x_\ell}{2}$,
and therefore $\left|\frac{x_j-x_2}{x_\ell-x_j}\right| \le \frac{2(x_j+x_2)}{x_\ell}$.

Because $x_{\ell+1}=x_\ell^B\ge x_\ell^3$, the series $\sum_{\ell\ge 3} x_\ell^{-1}$ converges. Hence $\sum_{\ell=K_0+3}^\infty \left| \frac{x_j-x_2}{x_\ell-x_j} \right| <\infty$.
Therefore the infinite product
\[
\prod_{\ell=K_0+3}^{\infty}
\left(
1+\frac{x_j-x_2}{x_\ell-x_j}
\right)
\]
converges to a finite, nonzero limit. Hence $w_j^{(n,K)}$ converges as $K\to\infty$, and $\operatorname{sgn}(w_j^{(n,\infty)}) = \operatorname{sgn}(w_j^{(n,K_0)}) = (-1)^j$, as desired.
\end{proof}

From now on, we write $w_j:=w_j^{(n,\infty)}$ for $j \ge 0$ in agreement with the notation of \Cref{prop:key-prop-1}. 

\subsubsection{Estimating Decay Rates for Limiting Weights}

We next derive quantitative decay estimates for the limiting weights. The normalization into limiting probability measures will be carried out only after absolute summability has been established. We first record a convenient tail estimate for the barycentric weights, to then prove that the distributions defined above match in all (infinite) moments:

\begin{lemma}
\label{lem:all-moment-weight-decay}
For every integer $M\ge 1$, define
\[
L_M:=\max\left\{0,\left\lceil \frac{M-2}{1-1/B}\right\rceil\right\}.
\]
Then, with
\[
    C_M^{(0,1,2)}
    :=
    \max\{ 2n, 2nB^M, (nB)^M \}, \qquad
    C_M^{(\ge 3)}
    :=
    \begin{cases}
        8(nB)^2 & M=1,2,\\[1ex]
        8(nB)^2\max\{ x_{L_M+2}^{ M-2}, 2^{L_M} \} & M\ge 3,
    \end{cases}
\]
the constant $C_M:=\max\{C_M^{(0,1,2)}, C_M^{(\ge 3)}\}$ satisfies
\[
    |w_j|
    \le
    \frac{C_M}{x_j^M}
    \qquad \forall j\ge 0.
\]
\end{lemma}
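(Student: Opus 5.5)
The plan is to work directly with the closed-form limiting weights. By \Cref{lem:limit-weights-exist} and the product formula of \Cref{subsubsec:finite-moment-construction}, for every $j\ge 0$ we have
\[
|w_j|=\lim_{K\to\infty}\frac{\prod_{i\ne 2,\,i\le K+2}|x_2-x_i|}{\prod_{i\ne j,\,i\le K+2}|x_j-x_i|}.
\]
I would split the factors into indices $i<j$, the index $i=j$ (which appears in the numerator), and indices $i>j$. The tail part $\prod_{i>\max\{j,2\}}\frac{x_i-x_2}{x_i-x_j}$ is controlled as in the proof of \Cref{lem:limit-weights-exist}. For $j\ge 2$ each factor is at most $1$, since $x_j\ge x_2$. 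For $j\in\{0,1\}$ each factor is $1+\frac{x_j-x_2}{x_i-x_j}\le 1$ as well, because now $x_j<x_2$. So the tail contributes at most $1$ in every case.

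For the small indices $j\in\{0,1,2\}$, I would bound the finitely many remaining factors by hand. For $j=2$ we have $w_2=1$, and $(nB)^M\cdot 1$ is covered by $C_M^{(0,1,2)}$. For $j=0,1$ the leading ratio is $\frac{(x_2-x_0)(x_2-x_1)}{(x_j-x_{1-j})(x_j-x_2)}$ times ratios $\frac{x_i-x_2}{x_i-x_j}\le 1$ for $i\ge 3$. This gives $|w_0|\le \frac{(nB-1)(nB-B)}{(B-1)(nB-1)}\le 2n$ and $|w_1|\le \frac{(nB-1)(nB-B)}{(B-1)(nB-B)}\le 2n$. Multiplying by $x_0^M=1$ and by $x_1^M=B^M$ then explains the constants $2n$ and $2nB^M$.

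For $j\ge 3$, the factor $|x_2-x_j|$ in the numerator cancels the $i=2$ factor of the denominator. What remains is
\[
|w_j|\le \frac{(x_2-x_0)(x_2-x_1)\prod_{i=3}^{j-1}(x_i-x_2)}{\prod_{i<j,\,i\ne 2}(x_j-x_i)}
\le \frac{(nB)^2\,2^{j-1}\prod_{i=3}^{j-1}x_i}{x_j^{\,j-1}},
\]
using $x_j-x_i\ge x_j/2$ for $i<j$. The key observation is the doubly exponential structure $x_i=x_j^{B^{i-j}}$, which gives
\[
\prod_{i=3}^{j-1}x_i\le x_j^{\sum_{k\ge1}B^{-k}}=x_j^{1/(B-1)}.
\]
Hence
\[
|w_j|\le 2(nB)^2\,2^{j-2}\,x_j^{-(j-1-1/(B-1))}.
\]
For $M=1,2$ one checks directly that $j-1-\frac1{B-1}\ge M$ and $2^{j-2}\le 4$ already suffice, which gives $8(nB)^2$. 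Actually the factor $2^{j-2}$ must be absorbed: in the regime of large $j$ I would use that $x_j$ grows doubly exponentially, so $2^{j}\le x_j^{\delta}$ for any fixed $\delta>0$ once $j$ is moderately large.

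For general $M\ge 3$, I would split at $j=L_M+2$. If $j\ge L_M+2$, then $(j-2)(1-1/B)\ge M-2$ makes the exponent $j-1-\frac1{B-1}$ at least $M$, up to the slack used to swallow $2^{j}$. If $3\le j< L_M+2$, I would bound $x_j^{M}|w_j|\le 8(nB)^2\, 2^{L_M}\, x_{L_M+2}^{M-2}$ crudely, using $x_j\le x_{L_M+2}$ and the exponent bound above. Together these give $C_M^{(\ge3)}$.

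The main obstacle is the bookkeeping in this last step. One has to verify that the threshold $L_M$ gives exactly exponent $\ge M$ while the geometric factor $2^{j}$ is absorbed. I would first try replacing the crude $2^{j-1}$ by the sharper estimate $\prod_{i<j}(1-x_i/x_j)^{-1}\le 2$, which holds because $\sum_{i<j}x_i/x_j$ is tiny. That would remove the $j$-dependence entirely and make the case split clean.
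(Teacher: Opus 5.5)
Your architecture matches the paper's. You use closed-form barycentric weights and treat $j\le 2$ by hand; your bounds $|w_0|\le\frac{nB-B}{B-1}\le 2n$, $|w_1|\le\frac{nB-1}{B-1}\le 2n$ and $w_2=1$ are correct. For $j\ge 3$ you cancel $x_j-x_2$, bound the leading ratio, the middle product and the tail separately, and split according to $L_M$. However, three steps in the $j\ge 3$ part are false as written, and because the lemma asserts explicit constants they cannot be waved away.

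\emph{(a) Tail direction.} For $j\ge 3$ we have $x_j>x_2$, so $\frac{x_i-x_2}{x_i-x_j}=1+\frac{x_j-x_2}{x_i-x_j}>1$. The bound $\le 1$ holds only for $j\in\{0,1\}$, and the factors equal $1$ for $j=2$. The correct estimate uses $x_i-x_j\ge x_i/2$ and $\sum_{i>j}x_i^{-1}\le 2x_j^{-3}$ to get $\prod_{i>j}\frac{x_i-x_2}{x_i-x_j}\le\exp\bigl(\sum_{i>j}2x_j/x_i\bigr)\le\exp(4/x_j^2)<2$. This factor $2$ is exactly the source of the $8=4\cdot 2$ in $C_M^{(\ge3)}$.

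\emph{(b) Off-by-one in the threshold.} The middle product has only $j-3$ factors. From $(j-2)(1-1/B)\ge M-2$ you get merely $j-1-\frac{1}{B-1}\ge M-1+\frac{M-3}{B-1}$, which falls short of $M$ whenever $M<B+2$. This is not a slack issue. For $M=B=3$ one has $L_3=2$ and $|w_4|\asymp (nB)^2x_3/x_4^3$, so $x_4^3|w_4|\asymp (nB)^2x_3$, which no absorption of powers of $2$ controls. The index $j=L_M+2$ must therefore be covered by the $x_{L_M+2}^{M-2}$ term, i.e.\ the split has to be $3\le j\le L_M+2$ versus $j\ge L_M+3$, as in the paper.

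\emph{(c) Crude regime.} Your bound $8(nB)^2\,2^{L_M}x_{L_M+2}^{M-2}$ is a product, and it exceeds $C_M^{(\ge3)}=8(nB)^2\max\{x_{L_M+2}^{M-2},2^{L_M}\}$ by a factor $2^{L_M}$. What you need there is $|w_j|\le 8(nB)^2/x_j^2$ for every $j\ge 3$. The paper gets this immediately by noting that each middle factor $\frac{x_m-x_2}{x_j-x_m}$ is at most $1$, since $x_j\ge x_m^B\ge 2x_m$. Relatedly, your check ``$j-1-\frac{1}{B-1}\ge M$'' for $M=2$ fails at $j=3$. There the middle product is empty, so you should not incur the $x_j^{1/(B-1)}$ loss.

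All of this is repairable inside your approach, and your proposed sharper estimate is the right tool. Since $\sum_{i<j,\,i\ne 2}x_i/x_j$ is tiny, $\prod_{i<j,\,i\ne 2}(x_j-x_i)\ge x_j^{j-1}/2$. Combined with the corrected tail factor, this gives $|w_j|\le 4(nB)^2x_j^{-(j-1-s_j)}$ for all $j\ge 3$, where $s_j:=\sum_{k=1}^{j-3}B^{-k}<\frac{1}{B-1}$ and $s_3=0$. Then $j-1-s_j\ge 2$ for every $j\ge 3$, which handles $M\le 2$ and the range $3\le j\le L_M+2$. For $j\ge L_M+3$ and $M\ge 3$, $j-1-s_j\ge M+\frac{M-2}{B-1}-s_j>M$.

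The paper keeps only the last $L_M$ middle factors, each bounded by $2x_j^{-(1-1/B)}$, and therefore pays $2^{L_M}$. Your use of $x_i=x_j^{B^{i-j}}$ on the whole middle product gives a sharper exponent and a constant without the $2^{L_M}$ term. As submitted, though, the argument does not yet prove the stated bound.
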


\begin{proof}
Fix $M\ge 1$ and write $L:=L_M$. We begin with the cases $j=0,1,2$. For $j=0$, using the explicit formula,
\[
    |w_0^{(n,\infty)}|
    =
    \left|\frac{x_2-x_1}{x_1-x_0}
    \prod_{\ell=3}^{\infty}\frac{x_\ell-x_2}{x_\ell-x_0}\right| \le \frac{x_2-x_1}{x_1-x_0} = \frac{nB-B}{B-1} \le 2n = \frac{2n}{x_0^M},
\]
since each factor in the product is at most $1$, and simply by observing that $x_0=1$. Next, for $j=1$, we similarly have
\[
    |w_1^{(n,\infty)}|
    =
    \frac{x_2-x_0}{x_1-x_0}
    \prod_{\ell=3}^{\infty}\frac{x_\ell-x_2}{x_\ell-x_1}
    \le
    \frac{x_2-x_0}{x_1-x_0}
    =
    \frac{nB-1}{B-1}
    \le 2n = \frac{2nB^M}{x_1^M},
\]
since $x_1=B$. For $j=2$, recall that $w_2^{(n,\infty)}=1$. Since $x_2=nB$, we get $|w_2^{(n,\infty)}| = 1 = \frac{(nB)^M}{x_2^M}$.
In the remainder, we let $C_M^{(0,1,2)}:=\max\{ 2n, 2nB^M, (nB)^M \}$.
We now turn to the case $j\ge 3$. Using the explicit barycentric formula and cancelling the common factor $|x_j-x_2|$, we obtain
\[
    |w_j|
    =
    \frac{(x_2-x_0)(x_2-x_1)}{(x_j-x_0)(x_j-x_1)}
    \prod_{m=3}^{j-1}\frac{x_m-x_2}{x_j-x_m}
    \prod_{\ell=j+1}^{\infty}\frac{x_\ell-x_2}{x_\ell-x_j}.
\]
We first bound the first and third factors. Since $x_j\ge x_3=(nB)^B\ge (nB)^3$, we have $x_j\ge 2$ and $x_j\ge 2B$, hence $x_j-x_0\ge \frac{x_j}{2}, x_j-x_1\ge \frac{x_j}{2}$. Therefore, $\frac{(x_2-x_0)(x_2-x_1)}{(x_j-x_0)(x_j-x_1)} \le \frac{(nB)^2}{(x_j/2)^2} = \frac{4(nB)^2}{x_j^2}$.
For the third factor, we have that,
\[
    \prod_{\ell=j+1}^{\infty}\frac{x_\ell-x_2}{x_\ell-x_j}
    \le \exp \left(\sum_{\ell=j+1}^{\infty}\frac{2x_j}{x_\ell}\right)
    \le
    \exp \left(\frac{4}{x_j^2}\right)
    \le
    \exp \left(\frac{4}{x_3^2}\right)
    <
    2.
\]
The first inequality above holds using $\ln(1+z)\le z$ for $z \ge 0$ and since, for every $\ell\ge j+1$, $x_\ell\ge x_{j+1}\ge x_j^3$, and so $x_\ell-x_j\ge x_\ell/2$. Thus, $\frac{x_\ell-x_2}{x_\ell-x_j} = 1+\frac{x_j-x_2}{x_\ell-x_j} \le 1+\frac{2x_j}{x_\ell}$. The second inequality follows since $x_{\ell+1}\ge x_\ell^3\ge 2x_\ell$ for $\ell\ge j+1$, which means that $\sum_{\ell=j+1}^{\infty}\frac{1}{x_\ell} \le \frac{2}{x_{j+1}} \le \frac{2}{x_j^3}$. The third follows by recalling that $x_j \ge x_3$, and the final bound uses $x_3=(nB)^B\ge 6^3=216$. Combining these two bounds, we obtain for every $j\ge 3$,
\[
    |w_j|
    \le
    \frac{8(nB)^2}{x_j^2}
    \prod_{m=3}^{j-1}\frac{x_m-x_2}{x_j-x_m}.
\]
If $M=1$ or $M=2$, then $L=0$, and since the middle product is at most $1$, we immediately get
\[
    |w_j|
    \le
    \frac{8(nB)^2}{x_j^2}
    \le
    \frac{8(nB)^2}{x_j^M}
    \qquad \forall j\ge 3.
\]
Thus, assume now that $M\ge 3$, so $L\ge 1$. First consider the range $3\le j\le L+2$. Since $M-2\ge 1$ and $x_j\le x_{L+2}$, the previous $M=2$ estimate yields
\[
    |w_j|
    \le
    \frac{8(nB)^2}{x_j^2}
    =
    \frac{8(nB)^2 x_j^{M-2}}{x_j^M}
    \le
    \frac{8(nB)^2 x_{L+2}^{M-2}}{x_j^M}.
\]
Now consider $j\ge L+3$. For each $m\in\{3,\dots,j-1\}$, we have $x_m>x_2$, so the factor is positive. Moreover, since $j\ge m+1$, we have $x_j\ge x_{m+1}=x_m^B$, and hence $x_j-x_m\ge x_m^B-x_m$. Also, because $m\ge 3$, we have $x_m\ge x_2>0$, and since $B\ge 3$, it follows that $x_m^B\ge x_m^2\ge 2x_m$. Therefore $x_m^B-x_m\ge x_m\ge x_m-x_2$, and so $0<\frac{x_m-x_2}{x_j-x_m}\le 1$.
Therefore
\[
\prod_{m=3}^{j-1}\frac{x_m-x_2}{x_j-x_m}
\le \prod_{m=j-L}^{j-1}\frac{x_m-x_2}{x_j-x_m}.
\]
Fix $m\in\{j-L,\dots,j-1\}$. Since $m\ge 3$, we have $x_m\ge x_2$, and since $j>m$, $x_j\ge x_{m+1}=x_m^B\ge x_m^3\ge 2x_m$. Therefore, $0< \frac{x_m-x_2}{x_j-x_m} \le \frac{x_m}{x_j/2} = \frac{2x_m}{x_j}$. Also $m\le j-1$, and hence $x_m\le x_{j-1}=x_j^{1/B}$. This means that
\[
    \frac{x_m-x_2}{x_j-x_m} \le \frac{2}{x_j^{1-1/B}}.
\]
Multiplying over $m=j-L,\dots,j-1$, we obtain $\prod_{m=3}^{j-1}\frac{x_m-x_2}{x_j-x_m} \le \left(\frac{2}{x_j^{1-1/B}}\right)^L$.
Overall, for $j\ge L+3$,
\[
    |w_j|
    \le
    \frac{8(nB)^2}{x_j^2}
    \left(\frac{2}{x_j^{1-1/B}}\right)^L
    =
    8(nB)^2 2^L x_j^{-2-L(1-1/B)} \le \frac{8(nB)^2 2^L}{x_j^M},
\]
since, by definition, $2+L(1-1/B)\ge M$. Combining the bounds for $3\le j\le L+2$ and $j\ge L+3$, we conclude that for all $j\ge 3$, $|w_j| \le \frac{C_M^{(\ge 3)}}{x_j^M}$,
where $C_M^{(\ge 3)} := 8(nB)^2\max\{ x_{L+2}^{ M-2}, 2^L \}$. Hence, the constant $C_M:=\max\{C_M^{(0,1,2)}, C_M^{(\ge 3)}\}$ satisfies
\[
    |w_j|
    \le
    \frac{C_M}{x_j^M}
    \qquad \forall j\ge 0,
\]
which proves the lemma.
\end{proof}

We also stress that the same decay bound holds uniformly in $K$:

\begin{lemma}
\label{lem:all-moment-weight-decay-uniform-K}
For every integer $M\ge 1$, the constant $C_M$ from \Cref{lem:all-moment-weight-decay} also satisfies
\[
|w_j^{(n,K)}|
\le
\frac{C_M}{x_j^M}
\qquad
\forall K\ge 0,\ \forall j\in\{0,\dots,K+2\}.
\]
\end{lemma}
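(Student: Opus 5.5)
The plan is to rerun the proof of \Cref{lem:all-moment-weight-decay} with the finite-$K$ weights $w_j^{(n,K)}$ in place of the limits. Each estimate there bounds one factor of the explicit barycentric product, and the finite-$K$ product contains only a subset of those factors. For $0\le j\le K+2$ we have
\[
|w_j^{(n,K)}|
=\frac{\prod_{i\ne 2,\,i\le K+2}|x_2-x_i|}{\prod_{i\ne j,\,i\le K+2}|x_j-x_i|},
\]
and, by iterating \eqref{eq:update-impossibility}, this equals the same expression as for $w_j$ except that the tail product over $\ell$ stops at $\ell=K+2$ instead of running to infinity.

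First I handle $j=0,1,2$. For $j=0$ and $j=1$, every tail factor with $\ell\ge3$ has the form $(x_\ell-x_2)/(x_\ell-x_j)$ with $x_j<x_2<x_\ell$, so it lies in $(0,1]$. Truncating the product therefore only drops factors of size at most $1$, and the bounds $2n$ and $2nB^M/x_1^M$ carry over unchanged. When $K=0$, so that the nodes are only $x_0,x_1,x_2$, the prefactor $(x_2-x_1)/(x_1-x_0)$ (respectively $(x_2-x_0)/(x_1-x_0)$) is still exactly the value of the weight. For $j=2$ the weight is $1$ by normalization.

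Next I treat $3\le j\le K+2$. The decomposition into a prefactor, a middle product over $m=3,\dots,j-1$, and a tail product over $\ell=j+1,\dots,K+2$ holds verbatim. The prefactor bound $4(nB)^2/x_j^2$ and the middle-product bounds do not involve $K$ at all: the middle product is at most $1$, and at most $(2/x_j^{1-1/B})^{L}$ when $j\ge L+3$. The tail product is a finite subproduct of factors $1+(x_j-x_2)/(x_\ell-x_j)$, each at least $1$. Hence it is bounded by the infinite product, which is less than $\exp(4/x_j^2)<2$. If $j=K+2$, the tail product is empty and equals $1$.

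Combining these pieces gives the same case split at $L_M+2$ and yields $|w_j^{(n,K)}|\le C_M/x_j^M$. The only point needing care is monotonicity. For $j\le 2$ the tail factors are at most $1$, while for $j\ge3$ they are at least $1$, so truncation helps in the first case and is dominated by the infinite bound in the second. I expect no real obstacle beyond checking the sign and size of each factor.
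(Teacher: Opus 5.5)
Your proposal is correct and follows essentially the same route as the paper. The finite-$K$ barycentric formula differs from the limiting one only in the truncated tail product over $\ell$: its factors are at most $1$ for $j\in\{0,1\}$ and at least $1$ for $j\ge 3$, so each factor bound from \Cref{lem:all-moment-weight-decay} carries over, which is exactly the monotonicity observation the paper relies on.
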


\begin{proof}
For $j=0,1,2$, the same estimates as in the proof of \Cref{lem:all-moment-weight-decay} give $|w_0^{(n,K)}|\le 2n$, $|w_1^{(n,K)}|\le 2n$, $|w_2^{(n,K)}|=1$.
For $j\ge 3$, the finite-$K$ barycentric formula reads
\[
|w_j^{(n,K)}|
=
\frac{(x_2-x_0)(x_2-x_1)}{(x_j-x_0)(x_j-x_1)}
\prod_{m=3}^{j-1}\frac{x_m-x_2}{x_j-x_m}
\prod_{\ell=j+1}^{K+2}\frac{x_\ell-x_2}{x_\ell-x_j}.
\]
The proof of \Cref{lem:all-moment-weight-decay} uses only upper bounds on the factors, and the final product over $\ell$ is smaller here than in the limiting case, because all its factors are $>1$. Therefore the same argument yields $|w_j^{(n,K)}| \le \frac{C_M}{x_j^M} \qquad (j\ge 3)$.
Combining the cases $j=0,1,2$ and $j\ge 3$ proves the claim.
\end{proof}

\begin{lemma}
\label{lem:all-moments-signed-identity}
For every integer $s\ge 0$,
\[
\sum_{j\ge 0} w_jx_j^s = 0.
\]
\end{lemma}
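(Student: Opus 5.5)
The plan is to pass to the limit $K\to\infty$ in the finite identities
\[
\sum_{j=0}^{K+2} w_j^{(n,K)}x_j^s=0,\qquad s=0,\dots,K+1,
\]
which hold for every $K\ge s-1$, and to justify the exchange of limit and sum by dominated convergence with respect to counting measure on $j\in\N_0$. Fix $s\ge0$. Extend $w_j^{(n,K)}:=0$ for $j>K+2$, so that for every $K\ge\max\{0,s-1\}$ we have $\sum_{j\ge0}w_j^{(n,K)}x_j^s=0$ as an honest (finite) sum over all $j\ge0$.

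\emph{Pointwise convergence.} By \Cref{lem:limit-weights-exist}, for each fixed $j$ we have $w_j^{(n,K)}x_j^s\to w_jx_j^s$ as $K\to\infty$. The extension by zero is irrelevant here, since for fixed $j$ it affects only finitely many $K$.

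\emph{Domination.} Apply \Cref{lem:all-moment-weight-decay-uniform-K} with $M:=s+2$. This gives
\[
|w_j^{(n,K)}x_j^s|\le C_{s+2}x_j^{-2}
\]
uniformly in $K$ and $j$; the bound is trivial for $j>K+2$, where the term is zero. The dominating sequence is summable: $\sum_{j\ge0}x_j^{-2}<\infty$, because the terms $j=0,1,2$ are finite and for $j\ge3$ the nodes satisfy $x_{j+1}\ge x_j^3\ge 2x_j$, so the tail is dominated by a geometric series (exactly as in the proof of \Cref{lem:limit-weights-exist}). \Cref{lem:all-moment-weight-decay} with the same $M$ also shows that $\sum_j|w_j|x_j^s<\infty$, so the limiting series converges absolutely and its value is well defined independently of ordering.

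\emph{Conclusion.} Dominated convergence now gives
\[
\sum_{j\ge0}w_jx_j^s=\lim_{K\to\infty}\sum_{j\ge0}w_j^{(n,K)}x_j^s=0.
\]

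The only delicate point is the uniform-in-$K$ domination, and it is already supplied by \Cref{lem:all-moment-weight-decay-uniform-K}. The main thing to get right is choosing $M$ strictly larger than $s+1$, so that the dominating sequence $x_j^{s-M}$ is summable and not merely bounded; taking $M=s+2$ achieves this.
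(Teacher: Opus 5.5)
Your proposal is correct and follows the paper's proof essentially verbatim. Like the paper, you extend the finite weights by zero, get pointwise convergence from \Cref{lem:limit-weights-exist}, dominate uniformly in $K$ by $C_{s+2}x_j^{-2}$ via \Cref{lem:all-moment-weight-decay-uniform-K} with $M=s+2$, and apply dominated convergence for counting measure, using that the finite sums vanish for every $K\ge\max\{0,s-1\}$.
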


\begin{proof}
Fix $s\ge 0$, and set $M:=s+2$. For each $K\ge 0$ and $j\ge 0$, define
\[
a_{K,j}:=
\begin{cases}
w_j^{(n,K)}x_j^s, & 0\le j\le K+2,\\
0, & j>K+2.
\end{cases}
\]
For each fixed $j$, \Cref{lem:limit-weights-exist} gives $a_{K,j}\to w_jx_j^s$ for $K\to\infty$.
By \Cref{lem:all-moment-weight-decay-uniform-K}, $|a_{K,j}| \le \frac{C_M x_j^s}{x_j^M} = \frac{C_M}{x_j^2}$ for all $K, j$.
Since $x_{j+1}\ge x_j^3\ge 2x_j$ for all $j\ge 3$, the series $\sum_{j\ge 0}x_j^{-2}$ converges. Hence dominated convergence for the counting measure on $\mathbb N$ yields
\[
\sum_{j\ge 0} w_jx_j^s
=
\lim_{K\to\infty}\sum_{j\ge 0} a_{K,j}
=
\lim_{K\to\infty}\sum_{j=0}^{K+2} w_j^{(n,K)}x_j^s.
\]
For every $K\ge \max \{0, s-1\}$, the last sum is $0$ by the defining moment-annihilation identities. Therefore the limit is $0$.
\end{proof}

This allows us to conclude that the limiting probabilities are well-defined:

\begin{corollary}
\label{cor:limit-probabilities-well-defined}
For the limiting weights, now define
\[
I_+ := \{j\ge 0:\ j \text{ even}\},
\qquad
I_- := \{j\ge 0:\ j \text{ odd}\}.
\]
Then the series $\sum_{j\ge 0}|w_j|$ converges, and
\[
W:=\sum_{j\in I_+} w_j
=
\sum_{j\in I_-}|w_j|
\in (0,\infty).
\]
Consequently,
\[
\nu^{(n,\infty)}(\{x_j\})
=:p_j
:=
\frac{w_j}{W}\mathbf 1_{\{j\in I_+\}},
\qquad
\mu^{(n,\infty)}(\{x_j\})
=:q_j
:=
\frac{|w_j|}{W}\mathbf 1_{\{j\in I_-\}}
\]
define two probability measures on $\{x_j:j\ge 0\}$. Moreover, for every $M\ge 1$,
\[
p_j+q_j
=
\frac{|w_j|}{W}
\le
\frac{C_M}{W x_j^M}
\qquad \forall j\ge 0.
\]
\end{corollary}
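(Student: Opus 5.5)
The plan is to obtain everything from the absolute-summability bound in \Cref{lem:all-moment-weight-decay} and the signed identity in \Cref{lem:all-moments-signed-identity}.

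\emph{Absolute convergence.} Apply \Cref{lem:all-moment-weight-decay} with $M=1$, which gives $|w_j|\le C_1/x_j$ for every $j\ge0$. The sequence $x_j$ grows doubly exponentially: for $j\ge3$ we have $x_{j+1}\ge x_j^3\ge 2x_j$. Hence $\sum_j x_j^{-1}<\infty$, and therefore $\sum_j|w_j|<\infty$. In particular both partial series $\sum_{j\in I_+}w_j$ and $\sum_{j\in I_-}|w_j|$ are finite.

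\emph{Equality of the two masses.} Take $s=0$ in \Cref{lem:all-moments-signed-identity}, giving $\sum_j w_j=0$. Because the series converges absolutely, we may split it by sign. The proof of \Cref{lem:limit-weights-exist} shows $\operatorname{sgn}(w_j)=(-1)^j$, so $w_j>0$ on $I_+$ and $w_j<0$ on $I_-$. Splitting then yields $\sum_{I_+}w_j=\sum_{I_-}|w_j|=:W$.

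\emph{Positivity of $W$.} Since $w_2=1$ and all terms in $I_+$ are positive, $W\ge w_2=1>0$. Nonnegativity of the $p_j,q_j$ comes from the same sign pattern, and their totals are each $W/W=1$. So $\nu^{(n,\infty)}$ and $\mu^{(n,\infty)}$ are probability measures.

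\emph{Pointwise bound.} For each $j$ exactly one of $p_j,q_j$ is nonzero, and it equals $|w_j|/W$. So $p_j+q_j=|w_j|/W$. Applying \Cref{lem:all-moment-weight-decay} with an arbitrary $M\ge1$ gives $|w_j|/W\le C_M/(Wx_j^M)$.

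No step is a real obstacle. The only point needing care is justifying the rearrangement of $\sum_j w_j=0$ into positive and negative parts, and this is exactly what absolute convergence provides.
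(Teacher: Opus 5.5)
Your proposal is correct and follows the paper's proof essentially step for step. You get absolute summability from \Cref{lem:all-moment-weight-decay} (you take $M=1$ where the paper takes $M=2$; either works since $\sum_j x_j^{-1}<\infty$), split $\sum_j w_j=0$ using the sign pattern $(-1)^j$, obtain $W\ge w_2=1$, and read the pointwise bound off the same decay lemma.
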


\begin{proof}
Taking $M=2$ in \Cref{lem:all-moment-weight-decay}, we get $\sum_{j\ge 0}|w_j| \le C_2\sum_{j\ge 0}\frac{1}{x_j^2} <\infty$.
By \Cref{lem:all-moments-signed-identity} with $s=0$, $\sum_{j\ge 0} w_j=0$.
Also $w_2^{(n,\infty)}=\lim_{K\to\infty} w_2^{(n,K)}=1>0$, so the positive part is nontrivial; hence the negative part is also nontrivial because the total signed sum is $0$. Therefore $\sum_{j\in I_+} w_j = \sum_{j\in I_-}|w_j| =:W\in(0,\infty)$.
The stated formulas for $p_j,q_j$ are nonnegative and have total mass $1$, so they define probability measures. The final bound follows immediately from \Cref{lem:all-moment-weight-decay}.
\end{proof}

We are now ready to show that the limiting measures have all moments matching:
\begin{lemma}
\label{lem:all-moments-limit}
For every integer $s\ge 0$,
\[
    \sum_{j\ge 0} x_j^s p_j < \infty,
    \qquad
    \sum_{j\ge 0} x_j^s q_j < \infty,
\]
and
\[
    \sum_{j\ge 0} x_j^s p_j
    =
    \sum_{j\ge 0} x_j^s q_j.
\]
Therefore, $\mu^{(n,\infty)}$ and $\nu^{(n,\infty)}$ match moments of every order.
\end{lemma}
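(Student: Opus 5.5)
The plan is to derive everything from the signed identity in \Cref{lem:all-moments-signed-identity} together with the absolute summability of the weights against $x_j^s$. The sign information from \Cref{lem:limit-weights-exist} then splits that identity into the two one-sided sums.

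\emph{Finiteness.} Fix $s\ge0$ and apply \Cref{lem:all-moment-weight-decay} (equivalently the final bound in \Cref{cor:limit-probabilities-well-defined}) with $M:=s+2$. This gives
\[
x_j^s(p_j+q_j)=\frac{x_j^s|w_j|}{W}\le \frac{C_{s+2}}{W\,x_j^{2}}
\qquad\text{for all } j\ge0 .
\]
The series $\sum_{j\ge0}x_j^{-2}$ converges, because $x_{j+1}\ge x_j^3\ge 2x_j$ for $j\ge3$ and only finitely many terms precede this range. Hence both $\sum_j x_j^s p_j$ and $\sum_j x_j^s q_j$ are finite, and in fact $\sum_j |w_j|x_j^s<\infty$.

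\emph{Equality.} By \Cref{lem:limit-weights-exist}, $\operatorname{sgn}(w_j)=(-1)^j$. Therefore $w_j=Wp_j$ for even $j$ and $w_j=-Wq_j$ for odd $j$, which means $w_j=W(p_j-q_j)$ for every $j$. Since the series converges absolutely, it may be split into its even and odd parts and rearranged:
\[
0=\sum_{j\ge0}w_jx_j^s=W\Bigl(\sum_{j\ge0}x_j^sp_j-\sum_{j\ge0}x_j^sq_j\Bigr).
\]
Here the first equality is \Cref{lem:all-moments-signed-identity}. Dividing by $W>0$ gives the claimed equality of moments.

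\emph{Main obstacle.} The only delicate point is the justification of splitting the signed series, and it is exactly the absolute convergence established in the finiteness step that permits it. The analytic work, namely the uniform decay used to pass to the limit, has already been done in \Cref{lem:all-moment-weight-decay-uniform-K} and \Cref{lem:all-moments-signed-identity}. The conclusion about $\mu^{(n,\infty)}$ and $\nu^{(n,\infty)}$ is then immediate, because their $s$-th moments are precisely these two sums.
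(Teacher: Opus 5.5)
Your proposal is correct and follows essentially the same route as the paper: bound $x_j^s(p_j+q_j)$ by $C_{s+2}/(W x_j^2)$ via the decay estimate with $M=s+2$, use summability of $\sum_j x_j^{-2}$, write $p_j-q_j=w_j/W$, and apply the signed identity $\sum_j w_j x_j^s=0$. Your explicit appeal to the sign pattern $\operatorname{sgn}(w_j)=(-1)^j$ to justify $w_j=W(p_j-q_j)$ is a small extra step that the paper leaves implicit in ``by the definitions of $p_j$ and $q_j$.''
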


\begin{proof}
Fix $s\ge 0$. By \Cref{cor:limit-probabilities-well-defined} with $M=s+2$, $x_j^s(p_j+q_j) \le \frac{C_{s+2}}{W x_j^2} \qquad (j\ge 0)$.
Since $\sum_{j\ge 0}x_j^{-2}<\infty$, both series
\[
\sum_{j\ge 0}x_j^s p_j,
\qquad
\sum_{j\ge 0}x_j^s q_j
\]
converge absolutely. Thus the $s$-th moments are finite. Moreover, by the definitions of $p_j$ and $q_j$, $p_j-q_j=\frac{w_j}{W}$ for all $j \ge 0$.
Therefore,
\[
\sum_{j\ge 0}x_j^s p_j
-
\sum_{j\ge 0}x_j^s q_j
=
\frac{1}{W}\sum_{j\ge 0} w_jx_j^s
=
0
\]
by \Cref{lem:all-moments-signed-identity}. Hence the $s$-th moments are equal for every $s\ge 0$.
\end{proof}

Having established all moments matching, we can now derive some useful bounds on the probabilities:
\begin{lemma}
\label{lem:explicit-support-bounds}
It holds that
\[
W\ge n,\qquad
p_2\ge \frac{1}{4n},\qquad
p_0\ge 1-\frac{2}{n},\qquad
q_1\ge 1-\frac{1}{n^2B^2}.
\]
\end{lemma}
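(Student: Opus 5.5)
The plan is to read all four bounds off the explicit barycentric formulas for $w_0,w_1,w_2$ and to control everything beyond index $2$ with the decay estimate $|w_j|\le 8(nB)^2/x_j^2$ for $j\ge3$, which was established in the proof of \Cref{lem:all-moment-weight-decay} (case $M=2$). The key facts are the following.
\begin{itemize}
\item By \Cref{cor:limit-probabilities-well-defined}, $W=\sum_{j\text{ even}}w_j=w_0+1+\sum_{j\ge4\text{ even}}w_j$, and also $W=|w_1|+\sum_{j\ge3\text{ odd}}|w_j|$.
\item The tail sums are tiny. Since $x_{j+1}\ge x_j^3\ge 2x_j$ for $j\ge3$, we get $\sum_{j\ge3}x_j^{-2}\le 2/x_3^2$. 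Hence
\[
T:=\sum_{j\ge3}|w_j|\le \frac{16(nB)^2}{x_3^2}=16(nB)^{2-2B}\le \frac{16}{(nB)^4}.
\]
\end{itemize}

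\textbf{Step 1: two-sided bounds on $W$.} Use the explicit product from the proof of \Cref{lem:all-moment-weight-decay}:
\[
w_0=\frac{x_2-x_1}{x_1-x_0}\prod_{\ell\ge3}\frac{x_\ell-x_2}{x_\ell-x_0}.
\]
\begin{itemize}
\item \emph{Upper bound on $w_0$.} Every factor of the product is at most $1$, and the prefactor equals $\frac{(n-1)B}{B-1}\le 2n$.
\item \emph{Lower bound on $w_0$.} Apply $\prod(1-a_\ell)\ge 1-\sum a_\ell$ with $a_\ell=(x_2-x_0)/(x_\ell-x_0)\le 2nB/x_\ell$. This gives a product at least $1-\delta$ with $\delta\le 4nB/x_3$.
\item \emph{Conclusion.} The prefactor is $(n-1)+\frac{n-1}{B-1}$. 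Hence
\[
W\ge w_0+1\ge \Bigl(n-1+\tfrac{n-1}{B-1}\Bigr)(1-\delta)+1\ge n,
\]
because $\frac{n-1}{B-1}\ge\frac{n-1}{nB}$ far exceeds the loss $2n\delta\le 8n^2B/(nB)^B$. Meanwhile $W\le 2n+1+T\le 4n$.
\end{itemize}

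\textbf{Step 2: the three atom bounds.}
\begin{itemize}
\item Since $w_2=1$, we get $p_2=1/W\ge 1/(4n)$.
\item For the first atom of $\nu$,
\[
p_0=\frac{w_0}{W}=1-\frac{1+\sum_{j\ge4\text{ even}}w_j}{W}\ge 1-\frac{1+T}{n}\ge 1-\frac{2}{n},
\]
using $W\ge n$.
\item For the first atom of $\mu$,
\[
q_1=1-\frac{1}{W}\sum_{j\ge3\text{ odd}}|w_j|\ge 1-\frac{T}{n}\ge 1-\frac{16}{n(nB)^4}.
\]
This is at least $1-\frac{1}{n^2B^2}$ because $16\le n^3B^2$.
\end{itemize}

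The only delicate point is the lower bound $W\ge n$. The crude value $w_0\approx n-1$ alone falls short by $1$, so one must keep both the $+1$ coming from $w_2$ and the extra $\frac{n-1}{B-1}$ in the prefactor. Then one checks that the infinite-product loss $\delta$ is negligible by comparison, which holds since $x_3=(nB)^B$ is super-polynomially larger than $nB$.
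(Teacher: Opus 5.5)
Your proposal is correct and follows the paper's proof closely. It uses the same $M=2$ tail estimate, gets $W\le 2n+2\le 4n$ from $w_0\le 2n$, and derives $p_2$, $p_0$, $q_1$ in the same way.

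The one difference is where $W\ge n$ comes from. The paper works on the odd side: it shows
\[
|w_1|=\frac{nB-1}{B-1}\prod_{\ell\ge3}(1-a_\ell)\ge n,\qquad a_\ell=\frac{x_2-x_1}{x_\ell-x_1},
\]
and uses $W\ge|w_1|$. You instead use $W\ge w_0+w_2$ on the even side. Since $\frac{nB-1}{B-1}=n+\frac{n-1}{B-1}$ is exactly your prefactor $(n-1)+\frac{n-1}{B-1}$ plus the contribution $w_2=1$, the two routes are interchangeable. Yours additionally needs positivity of the even tail, which the sign pattern $\operatorname{sgn}(w_j)=(-1)^j$ supplies.

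There is one arithmetic slip. The chain ``$\frac{n-1}{B-1}\ge\frac{n-1}{nB}$ far exceeds $8n^2B/(nB)^B$'' is false at $n=B=3$, where $2/9<8/27$. Compare $\frac{n-1}{B-1}$ with the loss directly instead. The requirement $\bigl((n-1)+\frac{n-1}{B-1}\bigr)(1-\delta)+1\ge n$ is equivalent to $B\delta\le 1$. This holds because
\[
B\delta\le \frac{4}{n^{B-1}B^{B-2}}\le\frac{4}{27}.
\]
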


\begin{proof}
We begin with the lower bound on $p_2$: for $w_0^{(n,\infty)}$, the explicit formula gives
\[
    w_0^{(n,\infty)}
    =
    \frac{x_2-x_1}{x_1-x_0}
    \prod_{\ell=3}^{\infty}\frac{x_\ell-x_2}{x_\ell-x_0} \le \frac{nB-B}{B-1} \le 2n,
\]
since each factor in the product is at most $1$. Observe that \Cref{lem:all-moment-weight-decay} applied with $M=2$ gives
\[
    |w_j| \le \frac{8(nB)^2}{x_j^2} \qquad \forall j\ge 3.
\]
Therefore, we also have
\[
    \sum_{\substack{j\ge 4\\ j\in I_+}} w_j
    \le
    \sum_{j=4}^{\infty}|w_j|
    \le
    8(nB)^2 \sum_{j=4}^{\infty}\frac{1}{x_j^2} \le \frac{16(nB)^2}{x_4^2} < 1,
\]
where the third inequality holds because, as $x_{j+1}\ge x_j^3\ge 2x_j$, the sequence $1/x_j^2$ is dominated by a geometric progression with ratio at most $1/2$, i.e., $\sum_{j=4}^{\infty}\frac{1}{x_j^2} \le \frac{2}{x_4^2}$. Hence, recalling that $w_2^{(n,\infty)}=1$,
\[
    W = w_0^{(n,\infty)}+w_2^{(n,\infty)}+\sum_{\substack{j\ge 4\\ j\in I_+}} w_j \le 2n+1+1 \le 4n \implies p_2 = \frac{w_2^{(n,\infty)}}{W} \ge \frac{1}{4n}.
\]
We proceed with the lower bound on $p_0$. Using the explicit formula for $w_1^{(n,\infty)}$, we have
\begin{align*}
    |w_1^{(n,\infty)}|
    = \frac{x_2-x_0}{x_1-x_0}
    \prod_{\ell=3}^{\infty}\frac{x_\ell-x_2}{x_\ell-x_1} = \frac{nB-1}{B-1}
    \prod_{\ell=3}^{\infty}\left(1-\frac{x_2-x_1}{x_\ell-x_1}\right).
\end{align*}
Since $x_\ell\ge x_3=(nB)^B\ge 2B$, we have $x_\ell-x_1=x_\ell-B\ge x_\ell/2$, and therefore, $0\le \frac{x_2-x_1}{x_\ell-x_1}=\frac{nB-B}{x_\ell-B}\le \frac{2nB}{x_\ell}$.
Moreover, $x_{\ell+1}\ge x_\ell^3\ge 2x_\ell$ for all $\ell\ge 3$, so $\sum_{\ell=3}^{\infty}\frac{1}{x_\ell}\le \frac{2}{x_3}$.
Hence,
\[
\sum_{\ell=3}^{\infty}\frac{x_2-x_1}{x_\ell-x_1}
\le
2nB\sum_{\ell=3}^{\infty}\frac{1}{x_\ell}
\le
\frac{4nB}{x_3}
=
\frac{4}{(nB)^{B-1}}
\le
\frac{4}{(nB)^2}.
\]
Now we first lower-bound $|w_1|$.
Set
\[
a_\ell:=\frac{x_2-x_1}{x_\ell-x_1}\qquad (\ell\ge 3).
\]
Then
\[
|w_1|
=
\frac{x_2-x_0}{x_1-x_0}
\prod_{\ell=3}^{\infty}\left(1-a_\ell\right)
=
\frac{nB-1}{B-1}\prod_{\ell=3}^{\infty}(1-a_\ell).
\]
Since $x_\ell\ge x_3=(nB)^B$ and $n,B\ge 3$, we have
$0\le a_\ell\le \frac{2nB}{x_\ell}\le \frac{2}{(nB)^{B-1}}<1.$
Hence, for every $L\ge 3$,
$\prod_{\ell=3}^{L}(1-a_\ell)\ge 1-\sum_{\ell=3}^{L} a_\ell.$
Letting $L\to\infty$ and using $\sum_{\ell=3}^\infty a_\ell\le 4/(nB)^2$, we obtain
$\prod_{\ell=3}^{\infty}(1-a_\ell)\ge 1-\frac{4}{(nB)^2}.$
Therefore
\[
|w_1|\ge \frac{nB-1}{B-1}\left(1-\frac{4}{(nB)^2}\right).
\]
Hence
\[
|w_1|-n
\ge
\frac{nB-1}{B-1}\left(1-\frac{4}{(nB)^2}\right)-n
=
\frac{n-1}{B-1}
-
\frac{4(nB-1)}{(B-1)n^2B^2}.
\]
Since $n,B\ge 3$, we have $\frac{4(nB-1)}{n^2B^2}\le \frac{4}{nB}\le \frac49 < 1 \le n-1$,
and therefore $\frac{4(nB-1)}{(B-1)n^2B^2}\le \frac{n-1}{B-1}$.
Thus $|w_1|\ge n$, and consequently $W=\sum_{j\in I_-}|w_j|\ge |w_1|\ge n$.
We now use again the bound on the sum of the remaining weights for the even indices $j \ge 4$, $1-p_0 = \frac{w_2^{(n,\infty)}+\sum_{\substack{j\ge 4\\ j\in I_+}} w_j}{W} \le \frac{1+1}{n} = \frac{2}{n}$.
We conclude with the lower bound on $q_1$: since the negative tail beyond $x_1$ is supported on odd $j\ge 3$, \Cref{lem:all-moment-weight-decay} applied with $M=2$ gives
\[
    \sum_{\substack{j\ge 3\\ j\in I_-}} |w_j|
    \le
    8(nB)^2\sum_{j=3}^{\infty}\frac{1}{x_j^2}
    \le
    \frac{16(nB)^2}{x_3^2}
    =
    \frac{16}{(nB)^{2B-2}}.
\]
Since $W\ge n, B\ge 3$ and $nB^2 \ge 18$,
\[
    1-q_1
    =
    \sum_{\substack{j\ge 3\\ j\in I_-}} q_j
    =
    \frac{1}{W}\sum_{\substack{j\ge 3\\ j\in I_-}} |w_j|
    \le
    \frac{16}{n(nB)^{2B-2}} \le \frac{16}{n^3B^4} \le \frac{1}{n^2B^2}.
\]
This proves the lemma.
\end{proof}

\subsubsection{Finishing the Proof}

\begin{proof}[Proof of \Cref{prop:key-prop-1}]
Set $\nu:=\nu^{(n,\infty)}$ and $\mu:=\mu^{(n,\infty)}$. By \Cref{cor:limit-probabilities-well-defined}, these are well-defined probability measures, and $W^+=W^-=W$. By \Cref{lem:all-moments-limit}, $\mu$ and $\nu$ agree on all moments, and all of these moments are finite.

Points $\mathrm{(i)}$-$\mathrm{(iii)}$ are exactly the support-probability bounds proved in \Cref{lem:explicit-support-bounds}, together with the trivial support facts $q_0=0$, $p_1=0$, and $q_2=0$. Finally, point $\mathrm{(iv)}$ combines the bound $W\ge n$ from \Cref{lem:explicit-support-bounds} with the case $M=2$ of \Cref{lem:all-moment-weight-decay}, which gives $|w_j|\le \frac{8(nB)^2}{x_j^2}$ for all $j\ge 0$.
This proves the proposition.
\end{proof}

\subsection{Proof of the \texorpdfstring{$N$}{N}-Way Separator}

The goal of this section is to prove \Cref{prop:key-prop-2}, of which we state an extended version below:
\begin{proposition}\label{prop:key-prop-2-ext}
    Fix integers $B,N\ge 2$ and define
    \[
        G_N(x):=\prod_{m=0}^{\infty}\sum_{u=0}^{N-1}(xB^{-m})^u
        =\sum_{j=0}^{\infty} w_j x^j.
    \]
    For \(h\in\{0,\dots,N-1\}\) and \(t\in\N_0\), let
    \[
    W_h(t):=\sum_{\substack{j\ge 0\\ j\equiv h\!\!\!\pmod N}} w_j B^{tj}.
    \]
    Then
    \[
    W_h(0)\in(0,\infty)\qquad (h=0,\dots,N-1).
    \]
    Define
    \[
    \mu_h(\{B^j\}) :=
    \begin{cases}
    \dfrac{w_j}{W_h(0)}, & j\equiv h\pmod N,\\
    0, & \text{otherwise}.
    \end{cases}
    \]
    Then the measures \((\mu_h)_{h=0}^{N-1}\) agree on all moments, all of these moments are finite, and the normalizing constants satisfy
    $W_h(0)=\sum_{\substack{j\ge 0\\ j\equiv h\!\!\!\pmod N}} w_j \in (0,\infty)$ for $h \in \{0,\dots,N-1\}$.
    Also, writing each $j$ as $j=L(N-1)+b$ for $L\in\N_0, b\in\{0,\dots,N-2\}$, and defining $e(j):=\frac{(N-1)L(L-1)}{2}+Lb$, it holds that
    \[
        B^{-e(j)}\le w_j\le 2^j B^{-e(j)}.
    \]
    Moreover, we have that
    \[
        1 \le W_0(0) = \ldots= W_{N-1}(0) < 4,
    \]
    as well as $j-e(j)\le N-1$ with equality if and only if $j\in\{N-1,N,\dots,2N-2\}$.
\end{proposition}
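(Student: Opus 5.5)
The approach is to read everything off the product structure of $G_N$. Write $S(y):=\sum_{u=0}^{N-1}y^u$, so $G_N(x)=\prod_{m\ge0}S(xB^{-m})$.

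\textbf{Convergence and the combinatorial formula for $w_j$.} Since $\sum_m |x|B^{-m}<\infty$, the product converges locally uniformly on $\C$, so $G_N$ is entire. Expanding the product shows that
\[
w_j=\sum_{(u_m)} B^{-\sum_m m u_m},
\]
where the sum runs over finitely supported sequences $u_m\in\{0,\dots,N-1\}$ with $\sum_m u_m=j$. In particular every $w_j$ is positive. It follows that $W_h(t)\le G_N(B^t)<\infty$ for every $t$, which is the finiteness of all moments, because $\int x^t\,d\mu_h=W_h(t)/W_h(0)$.

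\textbf{Moment matching via roots of unity.} Let $\omega=e^{2\pi i/N}$. Residue-class extraction gives
\[
W_h(t)=\frac1N\sum_{r=0}^{N-1}\omega^{-rh}G_N(\omega^rB^t).
\]
For $r\not\equiv0$, the factor with $m=t$ in $G_N(\omega^rB^t)$ equals $S(\omega^r)=0$, and the remaining factors are finite. Hence $W_h(t)=G_N(B^t)/N$ for every $h$ and every $t\ge0$. Taking $t=0$ gives a common normalizer $W=G_N(1)/N$, and the $\mu_h$ agree on all moments.

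\textbf{Bounds on $W$.} The lower bound is $W=W_0(0)\ge w_0=1$. For the upper bound, write $G_N(1)=N\prod_{m\ge1}S(B^{-m})$. Each factor satisfies $S(B^{-m})\le(1-2^{-m})^{-1}$, and $\prod_{m\ge1}(1-2^{-m})^{-1}\approx3.46<4$, so $W<4$.

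\textbf{Coefficient bounds.} The exponent $\sum_m m u_m$ subject to $\sum_m u_m=j$ is minimized by the greedy filling $u_0=\dots=u_{L-1}=N-1$, $u_L=b$. This configuration has cost exactly $e(j)$, since moving a unit to a later slot never decreases the cost. This single term gives $w_j\ge B^{-e(j)}$.

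For the upper bound I would show $\sum_{(u_m)}B^{-(\mathrm{cost}-e(j))}\le 2^j$. Using $B\ge2$, it suffices to bound $\sum_{(u_m)}2^{-\mathrm{excess}}$. The plan is to encode each configuration by the displacements of its $j$ units relative to the greedy slots. Units are ordered by slot, and the $k$-th unit is shifted by some $d_k\ge0$, with the excess equal to $\sum_k d_k$ minus a nonnegative correction. An injective encoding of this kind would give the bound $\prod_{k\le j}\sum_{d\ge0}2^{-d}=2^j$.

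This counting step is where I expect the main difficulty: making the encoding genuinely injective given the cap $u_m\le N-1$. If it resists, I would instead use the saddle-point bound $w_j\le G_N(x)x^{-j}$ with $x=B^{L}$ or a nearby value, and estimate the factors of $G_N(B^L)$ directly.

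\textbf{The identity for $j-e(j)$.} A direct computation gives
\[
j-e(j)=\frac{(N-1)L(3-L)}{2}+b(1-L).
\]
For $L=0$ this is $b\le N-2$. For $L=1$ it equals $N-1$ exactly, which covers $j=N-1,\dots,2N-3$. For $L=2$ it equals $N-1-b$, with equality only at $b=0$, that is $j=2N-2$. For $L\ge3$ both terms are nonpositive. Together these cases give $j-e(j)\le N-1$, with equality if and only if $j\in\{N-1,\dots,2N-2\}$.
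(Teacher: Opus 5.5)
Apart from the upper bound $w_j\le 2^jB^{-e(j)}$, your proposal is correct and follows the paper's route. Your root-of-unity filter is the inverse of the DFT system the paper solves, and your bounds on $W$ and your case analysis of $j-e(j)$ coincide with the paper's. You should back ``$\approx 3.46$'' with an explicit estimate, e.g.\ $\prod_{m\ge3}(1-2^{-m})\ge 3/4$, which gives $W\le 32/9$.

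\textbf{The gap.} The upper bound is not proved, and your sketch points the wrong way. If the excess were $\sum_k d_k$ minus a nonnegative correction $c$, then $B^{-\mathrm{excess}}=B^{c}B^{-\sum_k d_k}\ge B^{-\sum_k d_k}$. Comparing with $\prod_k\sum_{d\ge0}B^{-d}$ would then yield nothing.

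You have also misplaced the difficulty. Injectivity is automatic: a configuration $(u_m)$ is recovered from its sorted list of slot indices $t_1\le\dots\le t_j$, with slot $m$ listed $u_m$ times. What must be shown is that the displacements are nonnegative, and this is where the cap enters. Set $g_k:=\lfloor (k-1)/(N-1)\rfloor$. If $t_k<g_k$, the first $k$ units would lie in slots $0,\dots,g_k-1$, which hold at most $(N-1)g_k\le k-1$ units, a contradiction. Hence $d_k:=t_k-g_k\ge0$. Since $\sum_{k=1}^j g_k=e(j)$, the excess is exactly $\sum_k d_k$ with no correction, so
\[
w_j=B^{-e(j)}\sum_{(u_m)}B^{-\sum_k d_k}\le B^{-e(j)}\Bigl(\sum_{d\ge0}B^{-d}\Bigr)^{j}=\Bigl(\frac{B}{B-1}\Bigr)^{j}B^{-e(j)}\le 2^jB^{-e(j)}.
\]
This is precisely the paper's argument.

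\textbf{The fallback.} The saddle-point bound cannot replace this. Because every $w_i>0$, one has $G_N(x)x^{-j}=\sum_i w_ix^{i-j}>w_j$ strictly for all $x>0$. So it can never certify a case in which the claimed inequality is an equality, and such cases occur: for $B=2$ and $j=1$, $e(1)=0$ and $w_1=\sum_{m\ge0}2^{-m}=2=2^1B^{-e(1)}$ for every $N\ge2$. Carried out at $x=B^L$, the estimate gives $w_j\le N\prod_{k\ge1}(1-B^{-k})^{-2}\,B^{-e(j)}$, a bound of order $NB^{-e(j)}$. That implies the stated bound only once $2^j$ exceeds a constant multiple of $N$.
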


\subsubsection{Structural Properties of the Construction}
We next prove a sequence of structural properties of the construction that will establish the proposition:

\begin{lemma}
\label{lem:N-class-coeff-formula}
Fix integers $N\ge 2$ and $B\ge 2$, and set $q:=1/B$. For each $M\in\N_0$, define
\[
    G_{N,M}(x):=\prod_{m=0}^{M}\sum_{u=0}^{N-1}(xq^m)^u.
\]
Then:

\begin{itemize}
    \item[(i)] The infinite product
    \[
        G_N(x):=\prod_{m=0}^{\infty}\sum_{u=0}^{N-1}(xq^m)^u
    \]
    converges locally uniformly on $\C$, and hence defines an entire function.

    \item[(ii)] Writing
    \[
        G_{N,M}(x)=\sum_{j\ge 0} w_j^{(M)}x^j,
        \qquad
        G_N(x)=\sum_{j\ge 0} w_jx^j,
    \]
    one has, for every $j\ge 0$,
    \[
        w_j^{(M)}
        =
        \sum_{\substack{(c_0,\dots,c_M)\\ c_m\in\{0,\dots,N-1\}\\ \sum_{m=0}^{M} c_m=j}}
        q^{\sum_{m=0}^{M} mc_m}.
    \]

    \item[(iii)] For every fixed $j\ge 0$, the sequence $w_j^{(M)}$ is nondecreasing in $M$ and $w_j^{(M)}\uparrow w_j$ as $M\to\infty$.
    Consequently,
    \[
        w_j
        =
        \sum_{\substack{(c_m)_{m\ge 0}\\ c_m\in\{0,\dots,N-1\}\\ \sum_{m\ge 0} c_m=j}}
        q^{\sum_{m\ge 0} mc_m}
        =
        \sum_{\substack{(c_m)_{m\ge 0}\\ c_m\in\{0,\dots,N-1\}\\ \sum_{m\ge 0} c_m=j}}
        B^{-\sum_{m\ge 0} mc_m}.
    \]
\end{itemize}
\end{lemma}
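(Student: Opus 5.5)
The plan is to treat the partial products $G_{N,M}$ as honest polynomials with nonnegative coefficients, read off their coefficients by expanding the finite product, and then pass to the limit $M\to\infty$ using monotonicity and a uniform majorant. I would prove (ii) first, since it is purely algebraic. Expanding $\prod_{m=0}^{M}\sum_{u=0}^{N-1}(xq^m)^u$ by distributivity amounts to choosing one exponent $c_m\in\{0,\dots,N-1\}$ from each factor. This produces the monomial $x^{\sum_m c_m}q^{\sum_m mc_m}$, and collecting the terms with $\sum_m c_m=j$ gives the stated formula for $w_j^{(M)}$. Because this is a finite polynomial identity, nothing more needs to be checked.

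For (i), I would use the standard criterion that $\prod(1+a_m(x))$ converges locally uniformly, and defines a holomorphic function, whenever $\sum_m\sup_{|x|\le R}|a_m(x)|<\infty$ for every $R$. Here $a_m(x)=\sum_{u=1}^{N-1}(xq^m)^u$. For $m$ large enough that $Rq^m\le 1/2$, we get $|a_m(x)|\le 2Rq^m$ on $|x|\le R$. This is summable because $q=1/B\le 1/2$, and the finitely many early factors are polynomials and cause no trouble. Hence $G_{N,M}\to G_N$ locally uniformly on $\C$, so $G_N$ is entire. By Weierstrass (or Cauchy's integral formula on a circle), its Taylor coefficients are the limits of those of $G_{N,M}$: $w_j=\lim_{M}w_j^{(M)}$.

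For (iii), monotonicity is immediate from (ii). A tuple $(c_0,\dots,c_M)$ extends to the tuple $(c_0,\dots,c_M,0)$ with the same weight. So the index set for $M$ embeds into the index set for $M+1$, and all summands are positive, giving $w_j^{(M)}\le w_j^{(M+1)}$. The limit identity then comes from monotone convergence for counting measure over finitely supported sequences $(c_m)_{m\ge0}$ with $\sum_m c_m=j$. These sequences are the increasing union over $M$ of those supported in $\{0,\dots,M\}$. Monotone convergence therefore gives that $\lim_M w_j^{(M)}$ equals the full series, and by (i) that limit is $w_j$.

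The main obstacle is making sure the limit coefficients agree with the combinatorial series, and in particular that the series is finite. Finiteness follows from (i), since $w_j$ is a Taylor coefficient of an entire function. As a fallback that avoids complex analysis, I would bound $\sum_j w_j^{(M)}r^j=G_{N,M}(r)\le\prod_m(1+2rq^m)$ for $r>0$, uniformly in $M$ (valid once $rq^m\le1/2$; earlier factors are bounded separately). Since $w_j^{(M)}\le G_{N,M}(r)\,r^{-j}$, this gives a bound on each coefficient that is uniform in $M$, which is enough for the monotone limit.
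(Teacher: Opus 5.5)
Your proposal is correct and follows essentially the same route as the paper. Both arguments expand the finite product to read off $w_j^{(M)}$, then prove locally uniform convergence from $\sum_m \sup_{|x|\le R}|a_m(x)|<\infty$ and pass to the limit coefficientwise via Cauchy's formula. Monotonicity comes next: the paper multiplies by the factor for index $M+1$, which has nonnegative coefficients and constant term $1$, and this is the same as your zero-extension of tuples. Both then conclude by monotone convergence over the increasing index sets. The only differences are cosmetic: you use the bound $|a_m(x)|\le 2Rq^m$ for large $m$ where the paper estimates $\sum_{u=1}^{N-1}R^u\sum_m q^{um}$ directly, and your elementary fallback is not needed.
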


\begin{proof}
For each $m\ge 0$, set $F_m(x):=\sum_{u=0}^{N-1}(xq^m)^u$.
Fix $R>0$. For $|x|\le R$,
\[
    |F_m(x)-1|
    \le
    \sum_{u=1}^{N-1}|x|^u q^{um}
    \le
    \sum_{u=1}^{N-1}R^u q^{um}.
\]
Hence
\[
    \sum_{m=0}^{\infty}\sup_{|x|\le R}|F_m(x)-1|
    \le
    \sum_{u=1}^{N-1}R^u\sum_{m=0}^{\infty}q^{um}
    <
    \infty.
\]
Therefore the infinite product $\prod_{m=0}^{\infty}F_m(x)$ converges locally uniformly on $\C$ to a holomorphic function, proving (i).

Now fix $M\in\N_0$. Since $G_{N,M}$ is a finite product of polynomials, expanding it gives
\[
    G_{N,M}(x)
    =
    \sum_{(c_0,\dots,c_M)\in\{0,\dots,N-1\}^{M+1}}
    \prod_{m=0}^{M}(xq^m)^{c_m}.
\]
Collecting powers of $x$, we obtain
\[
    G_{N,M}(x)
    =
    \sum_{(c_0,\dots,c_M)\in\{0,\dots,N-1\}^{M+1}}
    x^{\sum_{m=0}^{M}c_m}q^{\sum_{m=0}^{M}mc_m},
\]
so the coefficient of $x^j$ is exactly
\[
    w_j^{(M)}
    =
    \sum_{\substack{(c_0,\dots,c_M)\\ c_m\in\{0,\dots,N-1\}\\ \sum_{m=0}^{M} c_m=j}}
    q^{\sum_{m=0}^{M} mc_m}.
\]
This proves (ii). Next, because $G_{N,M+1}(x)=G_{N,M}(x)\,F_{M+1}(x)$
and all coefficients of $F_{M+1}(x)$ are nonnegative, every coefficient of $G_{N,M+1}$ is at least the corresponding coefficient of $G_{N,M}$. Thus, for each fixed $j$, $w_j^{(M+1)}\ge w_j^{(M)}$.
So $(w_j^{(M)})_{M\ge 0}$ is nondecreasing.

Since $G_{N,M}\to G_N$ locally uniformly on $\C$, the Taylor coefficients converge coefficientwise. For example, fixing any $r>0$, Cauchy's coefficient formula gives
\[
    w_j^{(M)}
    =
    \frac{1}{2\pi i}\int_{|z|=r}\frac{G_{N,M}(z)}{z^{j+1}}\,dz
    \longrightarrow
    \frac{1}{2\pi i}\int_{|z|=r}\frac{G_N(z)}{z^{j+1}}\,dz
    =
    w_j.
\]
Hence $w_j^{(M)}\uparrow w_j$. Now, for fixed $j\ge 0$, let
\[
\mathcal E_{j,M}
:=
\left\{(c_m)_{m\ge 0} :
c_m\in\{0,\dots,N-1\},\
\sum_{m\ge 0} c_m=j,\
c_m=0\ \forall m>M
\right\},
\]
and let
$
\mathcal E_j:=\bigcup_{M\ge 0}\mathcal E_{j,M}.
$
Then part (ii) gives
$
w_j^{(M)}=\sum_{(c_m)\in \mathcal E_{j,M}} q^{\sum_{m\ge 0} mc_m}.
$
Since $\mathcal E_{j,M}\uparrow \mathcal E_j$ and all summands are nonnegative, monotone convergence yields
$
\lim_{M\to\infty} w_j^{(M)}
=
\sum_{(c_m)\in \mathcal E_j} q^{\sum_{m\ge 0} mc_m}.
$
Because $w_j^{(M)}\to w_j$, we obtain
\[
w_j
=
\sum_{\substack{(c_m)_{m\ge 0}\\ c_m\in\{0,\dots,N-1\}\\ \sum_{m\ge 0} c_m=j}}
q^{\sum_{m\ge 0} mc_m}.
\]
This is well defined because every admissible sequence $(c_m)_{m\ge 0}$ has finite support once $\sum_{m\ge 0}c_m=j<\infty$. Recalling $q=1/B$ gives the last identity.
\end{proof}

With these coefficients, we now show that the residue-class sums $W_h(t)$ are equal for every moment order $t\in\N_0$.
\begin{lemma}
\label{lem:N-class-moment-matching}
For every $t\in\N_0$ and every $h\in\{0,\dots,N-1\}$, let the residue-class sums be defined as $W_h(t):=\sum_{\substack{j\ge 0\\ j\equiv h\pmod N}} w_j B^{tj}$. Then, for all $h \in \{0, \ldots, N-1\}$, it holds that:
$W_h(t) = \frac{G_N(B^t)}{N}$.
Moreover, letting $W:= W_0(0) = \ldots= W_{N-1}(0)$, we have that $1 \le W < 4$.
\end{lemma}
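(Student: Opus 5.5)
The identity $W_h(t)=G_N(B^t)/N$ comes from a roots-of-unity filter combined with the functional equation of $G_N$. The bound $1\le W<4$ then follows by evaluating $G_N$ at $x=1$.

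\emph{Functional equation.} Write $F_0(x)=\sum_{u=0}^{N-1}x^u=(1-x^N)/(1-x)$ for the $m=0$ factor. Re-indexing the product in \Cref{lem:N-class-coeff-formula}(i) gives
\[
G_N(x)=F_0(x)\,G_N(x/B),
\qquad\text{equivalently}\qquad
G_N(Bx)=F_0(Bx)\,G_N(x).
\]
Let $\omega=e^{2\pi i/N}$. For $t\ge1$, put $y=\omega^r B^{t}$ with $r\in\{1,\dots,N-1\}$. Since $y^N=B^{tN}\ne 1$ and $y\ne1$, the factor $F_0(y)=(1-y^N)/(1-y)$ is finite but not obviously zero. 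So the zero must come from a deeper factor instead.

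The factor indexed by $m=t$ is $F_0(\omega^r B^t B^{-t})=F_0(\omega^r)$. This equals $(1-\omega^{rN})/(1-\omega^r)=0$. Hence $G_N(\omega^r B^t)=0$ for every $t\ge0$ and every $r\not\equiv0\pmod N$; the case $t=0$ uses the $m=0$ factor directly. Because the product converges locally uniformly (\Cref{lem:N-class-coeff-formula}(i)), a single vanishing factor makes the whole product vanish.

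\emph{Residue filter.} All coefficients $w_j$ are nonnegative and $G_N$ is entire, so $\sum_j w_j B^{tj}$ converges absolutely for every $t$. The filter then gives
\[
W_h(t)=\frac1N\sum_{r=0}^{N-1}\omega^{-rh}\,G_N(\omega^rB^t).
\]
Only the $r=0$ term survives, so $W_h(t)=G_N(B^t)/N$, which is independent of $h$. In particular all $W_h(t)$ are finite, which is exactly the finiteness of moments used elsewhere.

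\emph{Bounds on $W$.} Taking $t=0$ gives $W=G_N(1)/N$. Here
\[
G_N(1)=N\prod_{m\ge1}\sum_{u=0}^{N-1}B^{-mu}.
\]
Every factor in the product is at least $1$, so $W\ge1$. For the upper bound, each factor is at most $1/(1-B^{-m})$, so
\[
W\le \prod_{m\ge1}\frac{1}{1-B^{-m}}\le\prod_{m\ge1}\frac{1}{1-2^{-m}}.
\]
This last product equals $1/(1/2;1/2)_\infty\approx 3.46<4$. If a self-contained estimate is preferred, use
\[
-\log(1-x)\le x+x^2 \quad\text{for } x\le 1/2,
\]
which bounds the logarithm of the product by $1+1/3=4/3$. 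That gives $e^{4/3}\approx3.79<4$.

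\emph{Main obstacle.} The only delicate point is the vanishing step: one must check that the $m=t$ factor is evaluated exactly at a nontrivial root of unity, and that locally uniform convergence justifies concluding the product is zero. The rest is routine bookkeeping.
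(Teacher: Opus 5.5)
Your proposal is correct and follows essentially the paper's route: the $m=t$ factor makes $G_N(\omega^\ell B^t)=0$, absolute convergence lets you split by residue classes, and the roots-of-unity argument gives $W_h(t)=G_N(B^t)/N$. From there, $W=G_N(1)/N\le\prod_{m\ge1}(1-2^{-m})^{-1}<4$, and $W\ge1$ holds as well. The only differences are cosmetic: you write the explicit inverse DFT where the paper invokes invertibility of the Vandermonde matrix and checks the constant vector, and the functional-equation paragraph at the start is never used and can be dropped.
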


\begin{proof}
Let $\omega:=e^{2\pi \mathrm{i}/N}$ be a primitive $N$-th root of unity. Fix $t\in\N_0$ and $\ell\in\{1,\dots,N-1\}$. In the defining product for $G_N(\omega^\ell B^t)$, the factor with index $m=t$ is $\sum_{u=0}^{N-1}(\omega^\ell)^u=0,$
hence $G_N(\omega^\ell B^t)=0$. Since
$
\sum_{j=0}^{\infty}\left|w_j(\omega^\ell B^t)^j\right|
=
\sum_{j=0}^{\infty} w_j B^{tj}
=
G_N(B^t)
<
\infty,
$
the series is absolutely convergent, so we may regroup it by residue classes modulo $N$. Therefore
$$
0
=
\sum_{j=0}^{\infty} w_j(\omega^\ell B^t)^j
=
\sum_{h=0}^{N-1}\ \sum_{\substack{j\ge 0\\ j\equiv h\!\!\!\pmod N}} w_j(\omega^\ell B^t)^j
=
\sum_{h=0}^{N-1}\omega^{\ell h}\sum_{\substack{j\ge 0\\ j\equiv h\!\!\!\pmod N}} w_j B^{tj}
=
\sum_{h=0}^{N-1}\omega^{\ell h}W_h(t).
$$
Note also that, by definition of $W_h(t)$'s, we have that 
\[
    \sum_{h=0}^{N-1} W_h(t) = \sum_{h=0}^{N-1}\sum_{\substack{j\ge 0\\ j\equiv h\pmod N}} w_j B^{tj} = \sum_{j=0}^{\infty}w_j B^{tj} = G_N(B^t).
\]
We can summarize, for every fixed $t \in \N_0$, all of the above in a system of $N$ linear equations (indexed by $\ell \in \{0,\dots,N-1\}$) in $N$ unknowns (indexed by $h \in \{0,\dots,N-1\}$). Namely, consider the matrix $\mathbf \Omega \in \mathbb C^{N \times N}$ that consists of $\mathbf \Omega_{\ell h} = \omega^{\ell h}$'s as entries and the vector $\mathbf W(t) \in \R^N_+$ consists of $W_h(t)$'s as entries. Then,
\[
    \begin{pmatrix}
    1 & 1 & 1 & \cdots & 1\\ 
    1 & \omega & \omega^2 & \cdots & \omega^{N-1}\\
    1 & \omega^2 & \omega^4 & \cdots & \omega^{2(N-1)}\\ \vdots & \vdots & \vdots & & \vdots\\
    1 & \omega^{N-1} & \omega^{2(N-1)} & \cdots & \omega^{(N-1)^2}
    \end{pmatrix} \begin{pmatrix}
    W_0(t) \\ 
    W_1(t)\\
    W_2(t)\\
    \vdots\\
    W_{N-1}(t)
    \end{pmatrix} = \begin{pmatrix}
    G_N(B^t) \\ 
    0\\
    0\\
    \vdots\\
    0
    \end{pmatrix}.
\]
 We stress that $\mathbf \Omega$ is the Vandermonde matrix of the $N$-th roots of unity and we know that this is invertible because $1 \neq \omega \neq \cdots \neq \omega^{N-1}$, since $\omega$ is a primitive $N$-th root of unity. Thus, the solution to the above must be unique. Suppose it takes the form $\mathbf W(t) = \gamma \mathbf 1$ for some $\gamma \in \R$, then it must satisfy $\gamma \mathbf 1^\top \mathbf 1 = G_N(B^t)$, i.e., $\gamma = \frac{G_N(B^t)}{N}$, which, for all $\ell \in \{1, \ldots, N-1\}$, also satisfies $\gamma \sum_{h=0}^{N-1} \omega^{\ell h} = 0$, since $\sum_{h=0}^{N-1} \omega^{\ell h} = \frac{1 - \omega^{\ell N}}{1 - \omega^\ell} = 0$. Hence, $W_0(t)=W_1(t)=\cdots=W_{N-1}(t) = \frac{G_N(B^t)}{N}$.

 To conclude the proof, observe that
    \[
    G_N(1)
    =
    N\prod_{m=1}^{\infty}\sum_{u=0}^{N-1} q^{um}
    =
    N\prod_{m=1}^{\infty}\frac{1-q^{Nm}}{1-q^m},
    \]
and so
\[
    W=\frac{G_N(1)}{N} =\prod_{m=1}^{\infty}\frac{1-q^{Nm}}{1-q^m} \le \prod_{m=1}^{\infty}\frac{1}{1-2^{-m}} \le\frac{32}{9}<4,
\]
For the last bound, $\prod_{m=3}^{\infty}(1-2^{-m})\ge1-\sum_{m=3}^{\infty}2^{-m}=3/4$; including the first two factors gives $\prod_{m=1}^{\infty}(1-2^{-m})\ge9/32$. Finally, $w_0=1$ and $W=W_0(0)\ge w_0$, so $W\ge1$.
\end{proof}

Based on the above result, using the definition $W:=W_0(0)=\cdots=W_{N-1}(0)$ and, for each $h\in\{0,\dots,N-1\}$, the distributions descriptions simplify to 
\[
    \mu_h(\{B^j\}):= \begin{cases}
        \dfrac{w_j}{W_h(0)} & j\equiv h\pmod N\\
        0 & \text{otherwise}
    \end{cases} = 
    \begin{cases}
        \dfrac{w_j}{W}, & j\equiv h\pmod N\\
        0, & \text{otherwise}
    \end{cases}.
\]
We can now observe that these distributions share all moments, which are additionally finite:
\begin{lemma}
\label{lem:N-class-all-moments}
For every $h,h'\in\{0,\dots,N-1\}$ and every $t\in\N_0$,
$\int x^t\, d\mu_h(x)=\int x^t\, d\mu_{h'}(x)<\infty$.
\end{lemma}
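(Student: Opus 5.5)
The plan is to reduce the claim to \Cref{lem:N-class-moment-matching}. For $t\in\N_0$, the $t$-th moment of $\mu_h$ is, by definition of $\mu_h$ as a discrete law on the atoms $B^j$ with $j\equiv h\pmod N$,
\[
\int x^t\,d\mu_h(x)=\sum_{\substack{j\ge0\\ j\equiv h\!\!\!\pmod N}}(B^j)^t\,\frac{w_j}{W}=\frac{W_h(t)}{W}.
\]
This identity is an equality of series of nonnegative terms, since all $w_j\ge0$ by \Cref{lem:N-class-coeff-formula}(iii). It therefore holds in $[0,\infty]$ without any convergence hypothesis, and no rearrangement issues arise.

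First I would establish finiteness. Note that $W_h(t)\le\sum_{j\ge0}w_jB^{tj}=G_N(B^t)$. By \Cref{lem:N-class-coeff-formula}(i), $G_N$ is entire, so its Taylor series converges absolutely at every point of $\C$, in particular at $x=B^t$. Hence $G_N(B^t)<\infty$, and since $W\ge1$ by \Cref{lem:N-class-moment-matching}, every moment is finite.

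Second, for equality across classes, \Cref{lem:N-class-moment-matching} gives $W_h(t)=G_N(B^t)/N$ for every $h$. Therefore
\[
\int x^t\,d\mu_h=\frac{G_N(B^t)}{NW},
\]
and the right-hand side is independent of $h$. This yields $\int x^t\,d\mu_h=\int x^t\,d\mu_{h'}$ for all $h,h'$.

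I do not expect a real obstacle. The only point needing care is confirming that each $\mu_h$ is normalized by the common constant $W$, i.e.\ $W_h(0)=W$ for all $h$. This is exactly the $t=0$ case of \Cref{lem:N-class-moment-matching}, and it is already recorded in the simplified description of $\mu_h$ given just before the lemma.
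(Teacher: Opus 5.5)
Your proposal is correct and follows essentially the same route as the paper: you write the $t$-th moment of $\mu_h$ as $W_h(t)/W$, invoke \Cref{lem:N-class-moment-matching} to get independence of $h$, and deduce finiteness from $G_N$ being entire with nonnegative coefficients, so that $\sum_j w_jB^{tj}=G_N(B^t)<\infty$. Your extra remarks are accurate and only make explicit what the paper leaves implicit: nonnegativity lets the series identity hold in $[0,\infty]$, and the common normalizer is the $t=0$ case.
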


\begin{proof}
By \Cref{lem:N-class-moment-matching},
\[
    \int x^t d\mu_h(x)
    =
    \frac{1}{W}\sum_{\substack{j\ge 0\\ j\equiv h\pmod N}} w_j B^{tj}
    =
    \frac{W_h(t)}{W}
    =
    \frac{W_{h'}(t)}{W}
    =
    \int x^t d\mu_{h'}(x).
\]
Since $G_N$ is entire and the coefficients are nonnegative, the series $\sum_{j=0}^{\infty} w_j B^{tj}=G_N(B^t)$ converges for every $t\in\N_0$, so all moments are finite.
\end{proof}

\subsubsection{Estimating Decay Rates of Limiting Weights}
We can next provide bounds on the coefficients $w_j$ that define the distributions above. To this end, it is useful to write for the remainder of the proof each $j\in\N_0$ as $j=L(N-1)+b$ for $L\in\N_0, b\in\{0,\dots,N-2\}$, and to define $e(j):=\frac{(N-1)L(L-1)}{2}+Lb$.

\begin{lemma}
\label{lem:N-class-coeff-bounds}
Assume $B\ge 2$. Then for every $j\in\N_0$, $B^{-e(j)}\le w_j\le 2^j B^{-e(j)}$.
Moreover, $j-e(j)\le N-1$ with equality if and only if
$j\in\{N-1,N,\dots,2N-2\}$.
\end{lemma}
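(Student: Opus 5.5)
The plan is to work from the combinatorial formula in \Cref{lem:N-class-coeff-formula}(iii). It writes $w_j$ as a sum of $B^{-\sum_m mc_m}$ over sequences $(c_m)_{m\ge0}$ with $c_m\in\{0,\dots,N-1\}$ and $\sum_m c_m=j$. The key observation is that $e(j)$ equals the minimum of the ``cost'' $\sum_m mc_m$ over admissible sequences. This minimum is attained by the greedy sequence that fills the cheapest slots first: $c_0=\cdots=c_{L-1}=N-1$ and $c_L=b$, with all other entries zero. Indeed, the greedy sequence has cost $(N-1)\sum_{m=0}^{L-1}m+Lb=e(j)$. Conversely, any admissible sequence is obtained from the greedy one by moving units to higher indices, since each $c_m$ is capped at $N-1$; hence its cost is at least $e(j)$. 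I would formalize this by an exchange argument: if some $c_{m'}>0$ while $c_m<N-1$ for some $m<m'$, moving one unit from $m'$ to $m$ strictly lowers the cost. The greedy sequence alone then gives the lower bound $w_j\ge B^{-e(j)}$.

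For the upper bound, I would group sequences by their excess cost $d=\sum_m mc_m-e(j)\ge0$. The goal is to show that the number of admissible sequences with excess exactly $d$ is at most $2^j$ times something summable against $B^{-d}$; more precisely, the target is $w_j\le B^{-e(j)}\sum_{d\ge0}a_d B^{-d}$ with $\sum_d a_d 2^{-d}\le 2^j$, which suffices since $B\ge2$.

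A cleaner route is an injection. Encode an admissible sequence by the multiset of the $j$ units' positions, and sort these positions increasingly as $p_1\le\cdots\le p_j$. The greedy configuration has the $i$-th unit at position $g_i=\lfloor (i-1)/(N-1)\rfloor$, and any admissible sorted configuration satisfies $p_i\ge g_i$, again by the capacity constraint. Writing $p_i=g_i+\delta_i$, the excess is $d=\sum_i\delta_i$, and the configuration is determined by the nonnegative vector $(\delta_i)$. The weighted count is therefore at most
\[
\sum_{\delta\in\N_0^j}B^{-\sum_i\delta_i}=\bigl(1-B^{-1}\bigr)^{-j}\le 2^j,
\]
which gives $w_j\le 2^jB^{-e(j)}$. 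The step I expect to need care is checking that $p_i\ge g_i$ always holds: at most $(N-1)(m+1)$ units fit in positions $\le m$. This is precisely where the cap $c_m\le N-1$ enters.

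Finally, for $j-e(j)$: with $j=L(N-1)+b$,
\[
j-e(j)=(N-1)L\,\frac{3-L}{2}+b(1-L).
\]
For $L=0$ this equals $b\le N-2$. For $L=1$ it equals $N-1$ for every $b$, which corresponds exactly to $j\in\{N-1,\dots,2N-2\}$. For $L=2$ it is $(N-1)-b$, which equals $N-1$ only when $b=0$, i.e.\ $j=2N-2$, already counted in the $L=1$ range (as $L=1$, $b=N-1$ is not allowed, the boundary case $j=2N-2$ has $L=2,b=0$). For $L\ge3$ both terms are nonpositive, so $j-e(j)\le0<N-1$. A short table of these cases then settles the ``if and only if'' claim.
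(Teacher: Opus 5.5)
Your proposal is correct and follows essentially the same route as the paper. Both use the greedy sequence $c_0=\cdots=c_{L-1}=N-1$, $c_L=b$ for the lower bound. Both use the sorted-position injection $t_i=\lfloor (i-1)/(N-1)\rfloor+\delta_i$, with the sum bounded by $(1-B^{-1})^{-j}\le 2^j$, for the upper bound. Both use the same case split on $L$ for $j-e(j)$.

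One small slip: the $L=1$ case gives exactly $j\in\{N-1,\dots,2N-3\}$, not up to $2N-2$. The value $j=2N-2$ comes only from $L=2$, $b=0$, as your parenthetical eventually notes, so the union is the claimed range.
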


\begin{proof}
By \Cref{lem:N-class-coeff-formula}, for every $j\in\N_0$,
\[
    w_j
    =
    \sum_{\substack{(c_m)_{m\ge 0}\\ c_m\in\{0,\dots,N-1\}\\ \sum_{m\ge 0} c_m=j}}
    B^{-\sum_{m\ge 0} mc_m}.
\]
We use this formula to prove both bounds.
Write $j=L(N-1)+b$ for $L\in\N_0, b\in\{0,\dots,N-2\}$,
and recall that $e(j):=\frac{(N-1)L(L-1)}{2}+Lb$.

We begin with the lower bound. Consider the admissible sequence $c_0=\cdots=c_{L-1}=N-1,\qquad c_L=b,\qquad c_m=0 \ \text{for all } m\ge L+1$.
Its total mass is $\sum_{m\ge 0} c_m = L(N-1)+b = j$,
so it contributes to the above sum for $w_j$. The corresponding exponent is
\[
    \sum_{m\ge 0} mc_m
    =
    \sum_{m=0}^{L-1} m(N-1)+Lb
    =
    (N-1)\frac{L(L-1)}{2}+Lb
    =
    e(j).
\]
Since all terms in the coefficient formula are nonnegative, this single contribution yields $w_j\ge B^{-e(j)}$.

We next prove the upper bound. For each admissible sequence $(c_m)_{m\ge 0}$, let $0\le t_1\le t_2\le \cdots \le t_j$ be the nondecreasing sequence obtained by listing each index $m$ exactly $c_m$ times. Then $\sum_{m\ge 0} mc_m=\sum_{i=1}^{j} t_i$.
Moreover, because each $c_m\le N-1$, no integer can occur more than $N-1$ times in the sequence $(t_i)_{i=1}^j$. It follows that $t_i\ge \left\lfloor\frac{i-1}{N-1}\right\rfloor$ for every $i=1,\dots,j$.
Therefore we may write $t_i=\left\lfloor\frac{i-1}{N-1}\right\rfloor+\delta_i$, for $\delta_i\in\N_0$.
Summing over $i$ gives
\[
    \sum_{i=1}^{j} t_i
    =
    \sum_{i=1}^{j}\left\lfloor\frac{i-1}{N-1}\right\rfloor
    +
    \sum_{i=1}^{j}\delta_i.
\]
Now
\[
    \sum_{i=1}^{j}\left\lfloor\frac{i-1}{N-1}\right\rfloor
    =
    \sum_{r=0}^{L-1} r(N-1)+Lb
    =
    \frac{(N-1)L(L-1)}{2}+Lb
    =
    e(j),
\]
because among the integers $0,1,\dots,j-1=L(N-1)+b-1$,
each value $r\in\{0,\dots,L-1\}$ occurs exactly $N-1$ times in the quotient
$\lfloor(i-1)/(N-1)\rfloor$, and the value $L$ occurs exactly $b$ times.
Hence $\sum_{m\ge 0} mc_m = \sum_{i=1}^{j} t_i = e(j)+\sum_{i=1}^{j}\delta_i$.

Moreover, the admissible sequence $(c_m)_{m\ge 0}$ uniquely determines the nondecreasing list
$0\le t_1\le \cdots \le t_j,$
and hence uniquely determines the tuple
$\delta_i:=t_i-\left\lfloor\frac{i-1}{N-1}\right\rfloor \in \N_0
\quad (i=1,\dots,j).$
Thus, the map from admissible sequences to $\delta$-tuples is injective. Therefore,
\[
w_j
=
\sum_{\text{admissible }(c_m)}
B^{-e(j)-\sum_{i=1}^j\delta_i}
\le
B^{-e(j)}
\sum_{\delta_1,\dots,\delta_j\ge 0}
B^{-\sum_{i=1}^{j}\delta_i}.
\]
The variables now separate, so
\[
    \sum_{\delta_1,\dots,\delta_j\ge 0}
    B^{-\sum_{i=1}^{j}\delta_i}
    =
    \left(\sum_{h=0}^{\infty} B^{-h}\right)^j
    =
    \left(\frac{1}{1-B^{-1}}\right)^j
    \le 2^j,
\]
since $B\ge 2$. Therefore $w_j\le 2^j B^{-e(j)}$.
This proves $B^{-e(j)}\le w_j\le 2^j B^{-e(j)}$.
It remains to estimate $j-e(j)$. Using $j=L(N-1)+b$, we compute $j-e(j) = L(N-1)+b-\frac{(N-1)L(L-1)}{2}-Lb$ and distinguish cases:

\begin{enumerate}
    \item If $L=0$, then $j=b$ and $j-e(j)=b\le N-2$.
    \item If $L=1$, then $j-e(j)=(N-1)+b-b=N-1$. Since in this case $j=(N-1)+b$ with $b\in\{0,\dots,N-2\}$, this is exactly the range $j\in\{N-1,N,\dots,2N-3\}$.
    \item If $L=2$, then $j-e(j)=2(N-1)+b-(N-1)-2b=N-1-b$. Thus, \(j-e(j)=N-1\) if and only if \(b=0\), i.e., if and only if $j=2N-2$, and otherwise \(j-e(j)\le N-2\).
    \item Finally, if $L\ge 3$, then $j-e(j) = \frac{(3L-L^2)(N-1)}{2}+(1-L)b$. Here \(3L-L^2\le 0\) and \(1-L\le 0\), so $j-e(j)\le 0\le N-2$.
\end{enumerate}
Combining all cases, we conclude that $j-e(j)\le N-1$, with equality if and only if $j\in\{N-1,N,\dots,2N-2\}$.
This completes the proof.
\end{proof}

\subsubsection{Finishing the Proof}

We now conclude the proof of \Cref{prop:key-prop-2}:

\begin{proof}[Proof of \Cref{prop:key-prop-2}]
By \Cref{lem:N-class-coeff-formula}, all coefficients $w_j$ are nonnegative. Moreover, by \Cref{lem:N-class-moment-matching},
\[
W_0(0)=\cdots=W_{N-1}(0)=:W,
\qquad 1\le W<4.
\]
Hence, for each $h\in\{0,\dots,N-1\}$, the formula
\[
\mu_h(\{B^j\}):=
\begin{cases}
\dfrac{w_j}{W}, & j\equiv h\pmod N,\\
0, & \text{otherwise},
\end{cases}
\]
defines a probability measure supported on $\{B^j: j\equiv h\pmod N\}$.

Next, by \Cref{lem:N-class-all-moments}, the distributions $\mu_h$ agree on all moments, and all these moments are finite. \Cref{lem:N-class-coeff-bounds} also shows that, for every $j\in\N_0$, $B^{-e(j)}\le w_j\le 2^j B^{-e(j)}$, and also that $j-e(j)\le N-1$ with equality if and only if $j\in\{N-1,N,\dots,2N-2\}$. This proves the proposition.
\end{proof}

%% file: sections/discussion.tex
\section{Concluding Remarks}
\label{sec:discussion}

Our results all stem from the same gap between moment information and the local features of a distribution that matter for optimization. Distributions with the same full moment sequence can have quantiles separated by arbitrarily large factors, with consequences for Value-at-Risk, lower-tail average quantiles, chance-constrained capacity, and newsvendor decisions. They can also have disjoint ranges of nearly optimal posted prices. In the secretary problem, the $N$-way construction concentrates distributions near different first atoms; together with the Ramsey argument, this removes the robust advantage of observing cardinal values. For prophet inequalities, the binary construction creates the order-statistic gap behind the fixed-$r$ barriers. Moments determine expectations of polynomials, but they need not reveal the local tail or threshold behavior on which these decisions depend.

\section{Acknowledgements}
A substantial portion of this work was completed in early 2026. In preparing the current draft, we used generative AI tools for limited purposes, primarily language editing and presentation. ChatGPT 5.6 Pro/Ultra also assisted with reviewing the exposition, formalization and verification, and identifying connections to the classical literature. The main results have been formalized in Lean 4.32.2 using mathlib; the development is available in the \href{https://github.com/GO-EPFL/moment-ambiguity-formalization}{accompanying GitHub repository}. The authors take full responsibility for the paper's content.